\newtheorem{theorem}{Theorem}
\newtheorem{proposition}{Proposition}
\newtheorem{lemma}{Lemma}
\newtheorem{definition}{Definition}
\algnewcommand{\algorithmicgoto}{\textbf{go to}}%
\algnewcommand{\Goto}[1]{\algorithmicgoto~\ref{#1}}%
\algnewcommand\algorithmicinput{\textbf{Input:}}
\algnewcommand\Input{\item[\algorithmicinput]}
\algnewcommand\algorithmicoutput{\textbf{Output:}}
\algnewcommand\Output{\item[\algorithmicoutput]}
\newif\if@in@acrolist
\newrobustcmd{\LU}[2]{\if@in@acrolist#1\else#2\fi}
\newcommand{\ACF}[1]{{\@in@acrolisttrue\acf{#1}}}
\begin{document}

\title{Multi-Hop Wireless Optical Backhauling\\for LiFi Attocell Networks:\\Bandwidth Scheduling and Power Control}
	
\author{Hossein~Kazemi,~\IEEEmembership{Student~Member,~IEEE,}
		Majid~Safari,~\IEEEmembership{Member,~IEEE,}
		and~Harald~Haas,~\IEEEmembership{Fellow,~IEEE}
		\thanks{This work was presented in part at the IEEE Global Communications Conference (GLOBECOM), Dec. 2018.\vspace{-20pt}}
		}
	
\maketitle

\vspace{-1.2cm}

\begin{abstract}
The backhaul of hundreds of light fidelity (LiFi) base stations (BSs) constitutes a major challenge. Indoor wireless optical backhauling is a novel approach whereby the interconnections between adjacent LiFi BSs are provided by way of directed line-of-sight (LOS) wireless infrared (IR) links. Building on the aforesaid approach, this paper presents the top-down design of a multi-hop wireless backhaul configuration for multi-tier optical attocell networks by proposing the novel idea of super cells. Such cells incorporate multiple clusters of attocells that are connected to the core network via a single gateway based on multi-hop decode-and-forward (DF) relaying. Consequently, new challenges arise for managing the bandwidth and power resources of the bottleneck backhaul. By putting forward user-based bandwidth scheduling (UBS) and cell-based bandwidth scheduling (CBS) policies, the system-level modeling and analysis of the end-to-end multi-user sum rate is elaborated. In addition, optimal bandwidth scheduling under both UBS and CBS policies are formulated as constrained convex optimization problems, which are solved by using the projected subgradient method. Furthermore, the transmission power of the backhaul system is opportunistically reduced by way of an innovative fixed power control (FPC) strategy. The notion of backhaul bottleneck occurrence (BBO) is introduced. An accurate approximate expression of the probability of BBO is derived, and then verified using Monte Carlo simulations. Several insights are provided into the offered gains of the proposed schemes through extensive computer simulations, by studying different aspects of the performance of super cells including the average sum rate, the BBO probability and the backhaul power efficiency (PE).
\end{abstract}

\begin{IEEEkeywords}
Light fidelity (LiFi), optical attocell network, direct current biased optical orthogonal frequency division multiple access (DCO-OFDM), wireless backhaul, multi-hop decode-and-forward (DF) relaying, bandwidth sharing, sum rate maximization, power control.
\end{IEEEkeywords}

\begin{acronym}[ACO{-}OFDM]
	\acro{AP}{\LU{A}{a}ccess \LU{P}{p}oint}
	\acro{AF}{\LU{A}{a}mplify-and-\LU{F}{f}orward}
	\acro{ACO{-}OFDM}{\LU{A}{a}symmetrically \LU{C}{c}lipped \LU{O}{o}ptical OFDM}
	\acro{APC}{\LU{A}{a}daptive \LU{P}{p}ower \LU{C}{c}ontrol}
	\acro{ARPC}{\LU{A}{a}verage \LU{R}{r}ate \LU{P}{p}ower \LU{C}{c}ontrol}
	\acro{ASE}{\LU{A}{a}rea \LU{S}{s}pectral \LU{E}{e}fficiency}
	\acro{ASPC}{\LU{A}{a}verage SINR \LU{P}{p}ower \LU{C}{c}ontrol}
	\acro{AWGN}{\LU{A}{a}dditive \LU{W}{w}hite Gaussian \LU{N}{n}oise}
	\acro{BBO}{\LU{B}{b}ackhaul \LU{B}{b}ottleneck \LU{O}{o}ccurrence}
	\acro{BER}{\LU{B}{b}it \LU{E}{e}rror \LU{R}{r}atio}
	\acro{BPSK}{\LU{B}{b}inary \LU{P}{p}hase \LU{S}{s}hift \LU{K}{k}eying}
	\acro{BS}{\LU{B}{b}ase \LU{S}{s}tation}
	\acro{CBS}{\LU{C}{c}ell-based \LU{B}{b}andwidth \LU{S}{s}cheduling}
	\acro{CCDF}{\LU{C}{c}omplementary \LU{C}{c}umulative \LU{D}{d}istribution \LU{F}{f}unction}
	\acro{CCI}{\LU{C}{c}o-\LU{C}{c}hannel \LU{I}{i}nterference}
	\acro{CDF}{\LU{C}{c}umulative \LU{D}{d}istribution \LU{F}{f}unction}
	\acro{CFFR}{\LU{C}{c}ooperative FFR}
	\acro{CLT}{\LU{C}{c}entral \LU{L}{l}imit \LU{T}{t}heorem}
	\acro{CoMP}{\LU{C}{c}oordinated \LU{M}{m}ulti-\LU{P}{p}oint}
	\acro{CP}{\LU{C}{c}yclic \LU{P}{p}refix}
	\acro{DAS}{\LU{D}{d}istributed \LU{A}{a}ntenna \LU{S}{s}ystem}
	\acro{DSL}{\LU{D}{d}igital \LU{S}{s}ubscriber \LU{L}{l}ine}
	\acro{DC}{\LU{D}{d}irect \LU{C}{c}urrent}
	\acro{DCO{-}OFDM}{\LU{D}{d}irect \LU{C}{c}urrent-biased \LU{O}{o}ptical \LU{O}{o}rthogonal \LU{F}{f}requency \LU{D}{d}ivision \LU{M}{m}ultiplexing}
	\acro{DF}{\LU{D}{d}ecode-and-\LU{F}{f}orward}
	\acro{EMI}{\LU{E}{e}lectromagnetic \LU{I}{i}nterference}
	\acro{eU{-}OFDM}{\LU{E}{e}nhanced \LU{U}{u}nipolar \LU{O}{o}ptical OFDM}
	\acro{FDE}{\LU{F}{f}requency \LU{D}{d}omain \LU{E}{e}qualization}
	\acro{FFR}{\LU{F}{f}ractional \LU{F}{f}requency \LU{R}{r}euse}
	\acro{FFT}{\LU{F}{f}ast Fourier \LU{T}{t}ransform}
	\acro{FOV}{\LU{F}{f}ield \LU{O}{o}f \LU{V}{v}iew}
	\acro{FPC}{\LU{F}{f}ixed \LU{P}{p}ower \LU{C}{c}ontrol}
	\acro{FR}{\LU{F}{f}ull \LU{R}{r}euse}
	\acro{FRF}{\LU{F}{f}requency \LU{R}{r}euse \LU{F}{f}actor}
	\acro{FR{-}VL}{\LU{F}{f}ull \LU{R}{r}euse \LU{V}{v}isible \LU{L}{l}ight}
	\acro{FSO}{\LU{F}{f}ree \LU{S}{s}pace \LU{O}{o}ptical}
	\acro{FTTB}{\LU{F}{f}iber-\LU{T}{t}o-\LU{T}{t}he-\LU{B}{b}uilding}
	\acro{FTTH}{\LU{F}{f}iber-\LU{T}{t}o-\LU{T}{t}he-\LU{H}{h}ome}
	\acro{FTTP}{\LU{F}{f}iber-\LU{T}{t}o-\LU{T}{t}he-\LU{P}{p}remises}
	\acro{IB{-}VL}{\LU{I}{i}n-\LU{B}{b}and \LU{V}{v}isible \LU{L}{l}ight}
	\acro{ICI}{\LU{I}{i}nter-\LU{C}{c}ell \LU{I}{i}nterference}
	\acro{IM{-}DD}{\LU{I}{i}ntensity \LU{M}{m}odulation and \LU{D}{d}irect \LU{D}{d}etection}
	\acro{i.i.d.}{\LU{I}{i}ndependent and \LU{I}{i}dentically \LU{D}{d}istributed}
	\acro{IFFT}{\LU{I}{i}nverse \LU{F}{f}ast Fourier \LU{T}{t}ransform}
	\acro{IR}{\LU{I}{i}nfrared}
	\acro{ISI}{\LU{I}{i}nter-\LU{S}{s}ymbol \LU{I}{i}nterference}
	\acro{JTDF}{\LU{J}{j}oint \LU{T}{t}ransmission with \LU{D}{d}ecode-and-\LU{F}{f}orward}
	\acro{LAN}{\LU{L}{l}ocal \LU{A}{a}rea \LU{N}{n}etwork}
	\acro{LED}{\LU{L}{l}ight \LU{E}{e}mitting \LU{D}{d}iode}
	\acro{LiFi}{\LU{L}{l}ight \LU{F}{f}idelity}
	\acro{LOS}{\LU{L}{l}ine-\LU{O}{o}f-\LU{S}{s}ight}
	\acro{LTE}{\LU{L}{l}ong-\LU{T}{t}erm \LU{E}{e}volution}
	\acro{MAC}{\LU{M}{m}edium \LU{A}{a}ccess \LU{C}{c}ontrol}
	\acro{MC}{\LU{M}{u}ulti-\LU{C}{c}arrier}
	\acro{MIMO}{\LU{M}{m}ultiple \LU{I}{i}nput \LU{M}{m}ultiple \LU{O}{o}utput}
	\acro{MSE}{\LU{M}{m}ean \LU{S}{s}quare \LU{E}{e}rror}
	\acro{MSPC}{\LU{M}{m}aximum SINR \LU{P}{p}ower \LU{C}{c}ontrol}
	\acro{MMSE}{\LU{M}{m}inimum \LU{M}{m}ean \LU{S}{s}quare \LU{E}{e}rror}
	\acro{mmWave}{\LU{M}{m}illimeter \LU{W}{w}ave}
	\acro{NPC}{\LU{N}{n}o \LU{P}{p}ower \LU{C}{c}ontrol}
	\acro{LAN}{\LU{L}{l}ocal \LU{A}{a}rea \LU{N}{n}etwork}
	\acro{NLOS}{\LU{N}{n}on-\LU{L}{l}ine-\LU{O}{o}f-\LU{S}{s}ight}
	\acro{NODF}{\LU{N}{n}on-\LU{O}{o}rthogonal \LU{D}{d}ecode-and-\LU{F}{f}orward}
	\acro{OFDM}{\LU{O}{o}rthogonal \LU{F}{f}requency \LU{D}{d}ivision \LU{M}{m}ultiplexing}
	\acro{OFDMA}{\LU{O}{o}rthogonal \LU{F}{f}requency \LU{D}{d}ivision \LU{M}{m}ultiple \LU{A}{a}ccess}
	\acro{OOK}{\LU{O}{o}n-\LU{O}{o}ff \LU{K}{k}eying}
	\acro{PAM}{\LU{P}{p}ulse \LU{A}{a}mplitude \LU{M}{m}odulation}
	\acro{PAPR}{\LU{P}{p}eak-to-\LU{A}{a}verage \LU{P}{p}ower \LU{R}{r}atio}
	\acro{PD}{\LU{P}{p}hotodiode}
	\acro{PDF}{\LU{P}{p}robability \LU{D}{d}ensity \LU{F}{f}unction}
	\acro{PE}{\LU{P}{p}ower \LU{E}{e}fficiency}
	\acro{PHY}{\LU{P}{p}hysical \LU{L}{l}ayer}
	\acro{PLC}{\LU{P}{p}ower \LU{L}{l}ine \LU{C}{c}ommunication}
	\acro{PMF}{\LU{P}{p}robability \LU{M}{m}ass \LU{F}{f}unction}
	\acro{PoE}{\LU{P}{p}ower-over-Ethernet}
	\acro{P{-}OFDM}{\LU{P}{p}olar OFDM}
	\acro{PON}{\LU{P}{p}assive \LU{O}{o}ptical \LU{N}{n}etwork}
	\acro{PPP}{\LU{P}{p}oisson \LU{P}{p}oint \LU{P}{p}rocess}
	\acro{PSD}{\LU{P}{p}ower \LU{S}{s}pectral \LU{D}{d}ensity}
	\acro{PTP}{\LU{P}{p}oint-\LU{T}{t}o-\LU{P}{p}oint}
	\acro{PTMP}{\LU{P}{p}oint-\LU{T}{t}o-\LU{M}{m}ulti-\LU{P}{p}oint}
	\acro{QAM}{\LU{Q}{q}uadrature \LU{A}{a}mplitude \LU{M}{m}odulation}
	\acro{QoS}{\LU{Q}{q}uality of \LU{S}{s}ervice}
	\acro{QPSK}{\LU{Q}{q}uadrature \LU{P}{p}hase \LU{S}{s}hift \LU{K}{k}eying}
	\acro{RGB}{\LU{R}{r}ed-\LU{G}{g}reen-\LU{B}{b}lue}
	\acro{RF}{\LU{R}{r}adio \LU{F}{f}requency}
	\acro{RHS}{\LU{R}{r}ight \LU{H}{h}and \LU{S}{s}ide}
	\acro{RMS}{\LU{R}{r}oot \LU{M}{m}ean \LU{S}{s}quare}
	\acro{RoF}{\LU{R}{r}adio-over-\LU{F}{f}iber}
	\acro{RTP}{\LU{R}{r}elative \LU{T}{t}otal \LU{P}{p}ower}
	\acro{SC}{\LU{S}{s}ingle \LU{C}{c}arrier}
	\acro{SE}{\LU{S}{s}pectral \LU{E}{e}fficiency}
	\acro{SEE{-}OFDM}{\LU{S}{s}pectral and \LU{E}{e}nergy \LU{E}{e}fficient OFDM}
	\acro{SINR}{\LU{S}{s}ignal-to-\LU{N}{n}oise-plus-\LU{I}{i}nterference \LU{R}{r}atio}
	\acro{SMF}{\LU{S}{s}ingle \LU{M}{m}ode \LU{F}{f}iber}
	\acro{SNR}{\LU{S}{s}ignal-to-\LU{N}{n}oise \LU{R}{r}atio}
	\acro{UB}{\LU{U}{u}nlimited \LU{B}{b}ackhaul}
	\acro{UBS}{\LU{U}{u}ser-based \LU{B}{b}andwidth \LU{S}{s}cheduling}
	\acro{UE}{\LU{U}{u}ser \LU{E}{e}quipment}
	\acro{VPPM}{\LU{V}{v}ariable \LU{P}{p}ulse \LU{P}{p}osition \LU{M}{m}odulation}
	\acro{VL}{\LU{V}{v}isible \LU{L}{l}ight}
	\acro{VLC}{\LU{V}{v}isible \LU{L}{l}ight \LU{C}{c}ommunication}
	\acro{WiFi}{\LU{W}{w}ireless \LU{F}{f}idelity}
	\acro{WLAN}{\LU{W}{w}ireless \LU{L}{l}ocal \LU{A}{a}rea \LU{N}{n}etwork}
	\acro{WOC}{\LU{W}{w}ireless \LU{O}{o}ptical \LU{C}{c}ommunication}
\end{acronym}

\section{Introduction}\label{section1}
The advent of \acp{LED} has radically changed the modern lighting industry due to their distinguished features including high energy efficiency, long operational lifetime, a compact form factor, easy maintenance and low cost. It is expected that \ac{LED} lighting will reach a market share of $84\%$ by the year $2030$ \cite{EnergySavingsForecast}. The application of \acp{LED} for indoor illumination has provided the possibility to deliver luminous efficacies of more than $100$ lm/W \cite{Cree}. Additionally, the intensity of their output light can be switched at high frequencies while the rate of variations is imperceptible to the human eye. In fact, the \ac{VL} spectrum offers a vast amount of unregulated bandwidth in $400\text{--}790$ THz. This unique opportunity is exploited for the deployment of value-added services based on \ac{VLC} to piggyback the wireless communication functionality onto the future lighting network in homes or offices \cite{Figueiredo}.

As the advanced version of \ac{VLC}, \ac{LiFi} transforms \ac{LED} luminaires into broadband wireless access points to support multi-user networking \cite{Haas2}. In the realm of heterogeneous networks, \ac{LiFi} can coexist synergistically with \ac{WiFi}. To this end, \ac{LiFi} realizes a high-bandwidth, uncongested and unregulated downlink path, while \ac{WiFi} constitutes a reliable uplink channel where congestion is less likely \cite{Ayyash}. From a network deployment perspective, the dense distribution of indoor luminaires lays the groundwork for establishing ultra-dense \ac{LiFi} networks, also known as optical attocell networks. Studies on the downlink performance show that through a judicious system configuration and by using rate-adaptive \ac{DCO{-}OFDM}, optical attocell networks generally outperform both \ac{RF} femtocell and indoor \ac{mmWave} networks in terms of the area spectral efficiency \cite{Stefan,CChen5}.

Backhaul is an essential part of the cellular network architecture, granting \acp{BS} access to the core network. Therefore, it is crucial to provide high data rate and reliable backhaul links for transporting the busy wireless traffic between \acp{BS} and the core network. Developing cost-effective backhauling solutions for massively deployed small cells is considered as one of the most important challenges in the rollout of the forthcoming $5$G cellular networks \cite{NWang}. To achieve multi-Gbits/s connectivity for indoor broadband wireless networks, a fiber-to-the-home/premises technology based on a \ac{PON} architecture is used \cite{Koonen}. For multi-dwelling buildings, signal distribution from the optical fiber hub to individual dwellings is also a major component of the access network. In-building backhauling can be done either wired or wirelessly. To this end, wired solutions based on Ethernet and \ac{PLC} have been considered \cite{WNi,Papaioannou}. In addition, it is possible to realize the distribution network within buildings wirelessly using \ac{mmWave} communications in the $60$ GHz band, which has been found suitable for indoor environments \cite{Dehos}. An efficient alternative to complement fiber-based \ac{PON}, namely G.fast, has been standardized \cite{Timmers}. G.fast is a high speed digital subscriber line standard which utilizes copper wires and promises Gbits/s connectivity for distances up to $250$ m.

When it comes to densely deployed optical attocell networks, because of the sophisticated structure of backhaul connections for multiple \ac{LiFi} \acp{BS}, designing an efficient backhaul network is more challenging. Prior studies have addressed the problem of backhauling for indoor \ac{VLC} systems by three main approaches: employing \ac{PLC} to reach light fixtures through the existing electricity wiring infrastructure in buildings, thus creating hybrid \ac{PLC}-\ac{VLC} systems \cite{Komine2,Komine4,Song,HMa2}; interfacing Ethernet technology with \ac{VLC} that allows the distribution of both data and electricity to \ac{LED} luminaires by a single Category $5$ cable based on the Power-over-Ethernet standard \cite{Mark,Delgado}; and extending single mode optical fiber cables to \ac{LED} lamps to enable multi-Gbits/s connectivity based on an integrated \ac{PON}-\ac{VLC} architecture \cite{YWang2,Chow,YWang1}.

As an alternative to the aforementioned approaches, backhauling for indoor \ac{LiFi} networks can be designed based on wireless optical communications. In particular, the idea of using \ac{VLC} to build inter-\ac{BS} links in optical attocell networks with a star topology was first put forward in \cite{Kazemi3}. The work in \cite{Kazemi5} carried out an extended design and optimization of the wireless optical backhaul system in both \ac{VL} and \ac{IR} bands by using a tree topology. In these works, the bandwidth of the shared backhaul was assumed to be equally apportioned among multiple downlink paths. The study in \cite{Kazemi4} proposed heuristic methods for bandwidth scheduling in a two tier \ac{LiFi} network, and introduced new criteria to control the total power of the backhaul system. However, the problem of optimal bandwidth scheduling remains unexplored. Furthermore, although preliminary results for power control and backhaul bottleneck performance were presented in \cite{Kazemi4}, an in-depth analysis of such new aspects is subject to an extended study.

This paper primarily attempts to address the above-mentioned shortcomings by putting forward the design and analysis of multi-hop wireless optical backhauling for multi-tier optical attocell networks through the introduction of the novel concept of super cells. Note this extension is not trivial due to the intricate configuration of a multi-tier multi-hop super cell. Furthermore, this work makes multiple contributions including:
\begin{itemize}
	\item Novel \ac{UBS} and \ac{CBS} policies are proposed for dividing the shared bandwidth of the backhaul system.
	\item By employing \ac{DCO{-}OFDM} combined with \ac{DF} relaying, the end-to-end multi-user sum rate is derived for the generalized case of multi-tier super cells for both \ac{UBS} and \ac{CBS} policies.
	\item For each policy, the optimal bandwidth allocation is formulated as an optimization problem and novel optimal bandwidth scheduling algorithms are developed.
	\item A \ac{FPC} mechanism is proposed to set a controlled operating point for the total backhaul power. Concerning the access system performance, three main schemes are devised: \ac{MSPC}, \ac{ASPC} and \ac{ARPC}. For each scheme, the corresponding power control coefficient is derived in closed form.
	\item The notion of \ac{BBO} is scrutinized by a thorough analysis and a tight approximation of the \ac{BBO} probability is derived analytically.
	\item Using illustrative numerical examples, new insights are provided into the performance of multi-tier super cells by studying the average sum rate, the \ac{BBO} probability and the backhaul \ac{PE}.
\end{itemize}

\section{Multi-Hop Wireless Backhaul System Design}\label{sec_4_1}
This section presents system-level principles and preliminaries required for the design and analysis of a multi-hop wireless optical backhaul network using a top-down approach.

\begin{figure}[!t]
	\centering
	\includegraphics[width=0.6\textwidth]{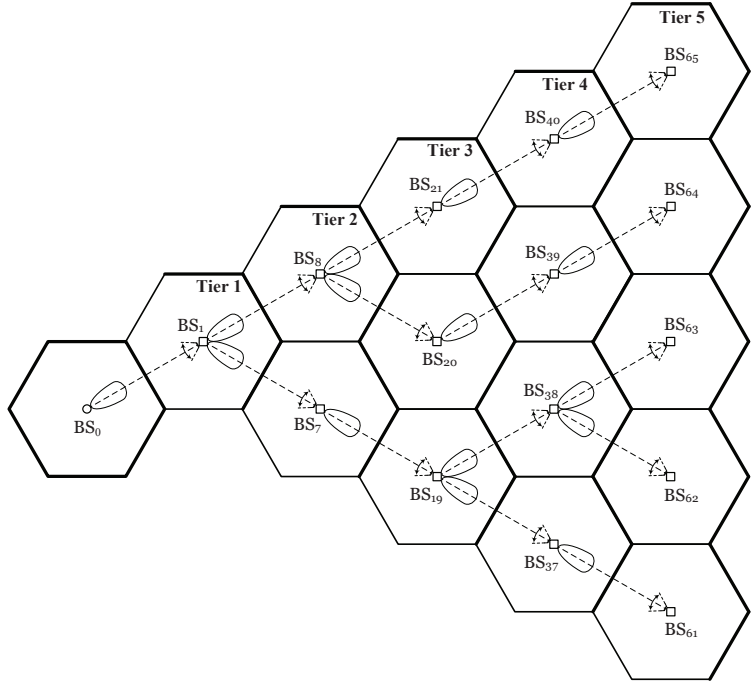}
	\caption{One branch of a five tier super cell with multi-hop wireless optical backhaul links.}
	\label{fig_4_2_1}
\end{figure}

\subsection{Network Configuration and Super Cells}\label{sec_4_1_1}
In this paper, an unbounded optical attocell network with a hexagonal tessellation is considered. Such a model is appropriate for network deployments in spacious office environments \cite{CChen5}. The network incorporates multi-tier bundles of hexagonal attocells which are referred to as \textit{super cells} in this work, with each bundle encompassing one, two or possibly several tiers. The entire network coverage is then \textit{tiled} by multiple super cells. Within every super cell, only the central \ac{BS} is directly connected to the gateway while the remaining \acp{BS} are connected using a tree topology that extends from a root at the central \ac{BS} toward the outer tiers. Let $N_{\rm{T}}$ denote the total number of tiers deployed. For clarity, one branch of a super cell with $N_{\rm{T}}=5$ is illustrated in Fig.~\ref{fig_4_2_1}. Note that the picture of the whole super cell is constituted by rotating and repeating the shown branch every $60^\circ$ counterclockwise. Nevertheless, this is just an illustration and the generality of presentation is maintained throughout the paper by adopting a parametric modeling methodology, i.e., for a general case of the $k$th branch for $k=1,2,\ldots,6$. A wireless optical communication technology operating in the \ac{IR} optical band is employed to establish inter-\ac{BS} backhaul links. The use of the \ac{IR} band allows to cancel unwanted backhaul-induced interference on the \ac{VL} access network \cite{Kazemi3}.

In conventional multi-hop wireless systems, a half duplex signaling protocol allows each relay to transmit only on its preallocated (time or frequency) resource slot to eliminate \ac{RF} interference within the network. Such interference avoidance comes at the expense of a remarkable loss in \ac{SE}. For the multi-hop wireless optical backhaul system under consideration, by using a sufficiently focused optical beam and a directed \ac{LOS} configuration, the crosstalk among backhaul links is effectively canceled \cite{Kazemi3,Kazemi5}. Hence, half duplex relaying on the path results in an unnecessary misutilization of resources and to avoid this, \acp{BS} are permitted to perform full duplex relaying.

The employment of \ac{DCO{-}OFDM} for data transmission in both access and backhaul systems allows an efficient management of network resources. To maintain the generality of presentation, the parameters related to the access (resp. backhaul) system are denoted using a subscript $\rm{a}$ (resp. $\rm{b}$). More specifically, an $N_{\rm{a}}$-point (resp. $N_{\rm{b}}$-point) \ac{IFFT}/\ac{FFT} is used for \ac{DCO{-}OFDM} transmission in the access (resp. backhaul) system. The remaining assumptions are similar to those used in \cite{Kazemi5}.

\subsection{Signal-to-Noise-plus-Interference Ratio}\label{sec_4_1_2}
\subsubsection{Downlink SINR Statistics}
A number of \ac{UE} devices are randomly scattered in the coverage of a super cell with a uniform distribution, attempting to obtain a downlink connection from optical \acp{BS}. The downlink channel follows a \ac{LOS} light propagation model\footnote{Except small regions in proximity to the network boundaries where the \ac{NLOS} effect is manifested most, in the rest of areas under coverage, more than $90\%$ of the received optical power comes solely from the \ac{LOS} component \cite{CChen5}.}. With the assumption of the whole bandwidth being fully reused across all attocells, the downlink quality in each attocell is influenced by \ac{CCI} from neighboring \acp{BS}. When the number of interfering \acp{BS} is large, the aggregate effect of the received \ac{CCI} signals is commonly treated as a white Gaussian noise. The received signal is also perturbed by an additive noise comprising signal-independent shot noise and thermal noise, which is modeled by a zero mean Gaussian distribution with a single-sided \ac{PSD} of $N_0$.

According to a polar coordinate system with $\text{BS}_0$ at the origin, the electrical \ac{SINR} per subcarrier for the $u$th \ac{UE} associated with $\text{BS}_i$ at $z_u=(r_u,\theta_u)$ is given by \cite{CChen5}:
\begin{equation}\label{eq_4_1_1}
\gamma_u = \frac{\xi_{\rm{a}}^{-1}(r_i^2(z_u)+h^2)^{-m-3}}{\sum\limits_{j\in\mathcal{J}_i}(r_j^2(z_u)+h^2)^{-m-3}+\Omega},
\end{equation}
where $\xi_{\rm{a}}=\frac{N_{\rm{a}}-2}{N_{\rm{a}}}$ is the subcarrier utilization factor; ${r_i}(z_u)=\sqrt{r_u^2+R_i^2-2{R_i}r_u\cos(\theta_u-\Theta_i)}$ indicates the horizontal distance of $z_u$ from $\text{BS}_i$; $h$ is the vertical separation between the \ac{BS} plane and the receiver plane; $m=-\frac{\ln2}{\ln(\cos{\Phi_{\rm a}})}$ is the Lambertian order and $\Phi_{\rm a}$ is the half-power semi-angle of the downlink \acp{LED}; and $\mathcal{J}_i$ denotes the index set of the interfering \acp{BS} for $\text{BS}_i$. The parameter $\Omega$ in \eqref{eq_4_1_1} is given by:
\begin{equation}
\Omega = \frac{4{\pi^2}{N_0}B_{\rm{a}}\xi_{\rm{a}}}{{{((m+1){h^{m+1}}{A_{{\rm{PD}}}}{R_{\rm{PD}}})}^2}{P_{\rm{a}}}},
\end{equation}
where $B_{\rm{a}}$ is the bandwidth of the access system\footnote{The \ac{LiFi} access system is assumed to have a low-pass and flat frequency response with a bandwidth of $B_{\rm{a}}$.}; $A_{\rm{PD}}$ is the photosensitive area of \ac{PD}; $R_{\rm{PD}}$ is the \ac{PD} responsivity; and $P_{\rm{a}}$ is the transmission power used for every \ac{BS}.

The downlink \ac{SINR} is a random variable through a transformation of the random coordinates of the \ac{UE}. For an unbounded hexagonal attocell network, the \ac{CDF} of the downlink \ac{SINR} is presented in \cite{CChen5}. A similar methodology is adopted to derive an analytical expression for the \ac{CDF} of $\gamma_u$ in \eqref{eq_4_1_1} as follows:
\begin{equation}\label{eq_4_1_2}
\mathbb{P}\left[\gamma_u\leq\gamma\right] = \frac{1}{2}-\frac{2}{\pi R_{\rm{e}}^2}\int\limits_0^{R_{\rm{e}}}\arcsin^\dagger\left(\mathcal{Z}(r,\gamma)\right)rdr,
\end{equation}
where $R_{\rm{e}}$ represents the radius of an equivalent circular cell preserving the area of the hexagonal cell with radius $R$; and:
\begin{equation}\label{eq_4_1_3}
\mathcal{Z}(r,\gamma) = \frac{2\gamma^{-1}\xi_{\rm{a}}^{-1}(r^2+h^2)^{-m-3}-2\Omega}{|\mathcal{I}_{0^\circ}(r)-\mathcal{I}_{30^\circ}(r)|}-\frac{\mathcal{I}_{0^\circ}(r)+\mathcal{I}_{30^\circ}(r)}{|\mathcal{I}_{0^\circ}(r)-\mathcal{I}_{30^\circ}(r)|},
\end{equation}
\begin{equation}\label{eq_4_1_4}
\arcsin^\dagger(x) = \left\{
\begin{matrix}
\frac{\pi}{2}, & x>1\\
\arcsin(x), & |x|\leq 1\\
-\frac{\pi}{2}, & x<-1
\end{matrix}\right..
\end{equation}
The functions $\mathcal{I}_{0^\circ}(r)$ and $\mathcal{I}_{30^\circ}(r)$ appearing in \eqref{eq_4_1_3} are available in closed form in \cite{CChen5}. Based on \eqref{eq_4_1_2}, the \ac{CDF} of $\gamma_u$ is efficiently computed by using numerical integration methods. Note that $\gamma_u$ is a bounded random variable such that:
\begin{subequations}\label{eq_4_1_6}
	\begin{gather}
	\gamma_{\rm{min}}\leq\gamma_u\leq\gamma_{\rm{max}},\label{eq_4_1_5}\\
	\gamma_{\rm{min}} = \frac{\xi_{\rm{a}}^{-1}(R_{\rm{e}}^2+h^2)^{-m-3}}{\mathcal{I}_{30^\circ}(R_{\rm{e}})+\Omega},\label{eq_4_1_6a}\\
	\gamma_{\rm{max}} = \frac{\xi_{\rm{a}}^{-1}h^{-2m-6}}{\mathcal{I}_{0^\circ}(0)+\Omega}.\label{eq_4_1_6b}
	\end{gather}
\end{subequations}

\subsubsection{Backhaul Signal-to-Noise-Ratio}
Because of having an equal link distance, backhaul links exhibit an identical \ac{SNR}\footnote{The wireless optical backhaul system operates over a frequency-flat channel dominated by the \ac{LOS} path.}. The received \ac{SNR} per subcarrier for ${\rm{b}}_i$ $\forall i$ is derived in \cite{Kazemi5}:
\begin{subequations}\label{eq_4_1_7}
	\begin{gather}
	\gamma_{{\rm{b}}_i} = K_i\gamma_{\rm{b}},\label{eq_4_1_7a}\\
	\gamma_{\rm{b}} = \frac{((\ell+1)A_{\rm{PD}}R_{\rm{PD}})^2{P_{\rm{a}}}}{72{\pi^2}{R^4}{N_0}B_{\rm{b}}\xi_{\rm{b}}^2}\label{eq_4_1_7b},
	\end{gather}
\end{subequations}
where $K_i=\frac{P_{{\rm{b}}_i}}{P_{\rm{a}}}$ is the power control coefficient for the link ${\rm{b}}_i$, and $P_{{\rm{b}}_i}$ is the corresponding transmission power; $\ell=-\frac{\ln2}{\ln(\cos{\Phi_{\rm{b}}})}$ is the Lambertian order with $\Phi_{\rm{b}}$ denoting the half-power semi-angle of the backhaul \acp{LED}; $B_{\rm{b}}$ is the bandwidth of the backhaul system; and $\xi_{\rm{b}}=\frac{N_{\rm{b}}-2}{N_{\rm{b}}}$.

\subsection{Achievable Rates of Access and Backhaul Systems}\label{sec_4_2_1}
The subchannel bandwidths of access and backhaul systems are matched so that $\frac{B_{\rm{a}}}{N_{\rm{a}}}=\frac{B_{\rm{b}}}{N_{\rm{b}}}$. This leads to the same symbol periods for \ac{DCO{-}OFDM} frames of the two systems. Denote by $\mathcal{L}_i$ the index set of \acp{BS} that use the link $\text{b}_i$ to connect to the gateway and denote by $\mathcal{U}_i$ the index set of \acp{UE} associated with $\text{BS}_i$ such that $|\mathcal{U}_i|=M_i$. Every \ac{UE} served by $\text{BS}_i$ acquires an equal bandwidth. Furthermore, let $\mathcal{R}_{{\rm{a}}_i}$ be the access sum rate for $\text{BS}_i$ and let $\mathcal{R}_{{\rm{b}}_i}$ be the overall achievable rate of $\text{b}_i$. It follows that:
\begin{subequations}\label{eq_3_2_12}
	\begin{gather}
	\mathcal{R}_{{\rm{a}}_i} = \frac{\xi_{\rm{a}}B_{\rm{a}}}{M_i}\sum_{u\in\mathcal{U}_i}\log_2(1+\gamma_u),\label{eq_3_2_12a}\\
	\mathcal{R}_{{\rm{b}}_i} = \xi_{\rm{b}}B_{\rm{b}}\log_2(1+\gamma_{{\rm{b}}_i}).\label{eq_3_2_12b}
	\end{gather}
\end{subequations}

\subsection{Decode-and-Forward Relaying and Backhaul Bandwidth Sharing}\label{sec2_num3}\label{sec_4_2_3}
In an $N_{\rm{T}}$-tier super cell, the $n$th tier encompasses $\frac{6n}{6}=n$ \acp{BS} for each branch so that $|\mathcal{T}_n|=n$ for $n=1,2,\dots,N_{\rm{T}}$, where $\mathcal{T}_n$ is the index set of \acp{BS} in the $n$th tier. Therefore, the total number of \acp{BS} per branch  excluding the central \ac{BS} is calculated by:
\begin{equation}\label{eq_4_2_3}
N_{\rm{BS}} = \sum_{n=1}^{N_{\rm{T}}}n = \frac{N_{\rm{T}}\left(N_{\rm{T}}+1\right)}{2}.
\end{equation}
For the $k$th branch of the backhaul network, the downlink data traffic for all $N_{\rm{BS}}$ \acp{BS} is carried by the link between the gateway and the first tier, i.e. $\text{b}_k$ for some $k\in\mathcal{T}_1$. This requires sufficient capacity for $\text{b}_k$ to respond to the aggregate sum rate of all $N_{\rm{BS}}$ \acp{BS}. However, such a challenging requirement is not always possible to be fulfilled in realistic scenarios where the limited capacity of $\text{b}_k$ may result in a \textit{backhaul bottleneck}. In this paper, the link $\text{b}_k$ $\forall k\in\mathcal{T}_1$ is generally referred to as a \textit{bottleneck link}. 

The use of \ac{DCO{-}OFDM} in conjunction with \ac{DF} relaying allows data multiplexing to be realized in the frequency domain. This way, the bandwidth of the bottleneck link ${\rm{b}}_k$ is divided into $N_{\rm{BS}}$ orthogonal sub-bands, with each sub-band allocated to an independent data flow. The symbols encapsulated in different sub-bands are individually and fully decoded at $\text{BS}_i$ in the first tier, which thereafter are reassembled into $N_{\rm{BS}}$ distinct groups. One group alone is modulated with a \ac{DCO{-}OFDM} frame and directly transmitted for the downlink of $\text{BS}_i$. The remaining $N_{\rm{BS}}-1$ groups are repackaged into separate \ac{DCO{-}OFDM} frames and forwarded in their desired directions toward higher tiers. The orthogonal decomposition of the effective bandwidth $\xi_{\rm{b}}B_{\rm{b}}$ into $N_{\rm{BS}}$ parts entails a weight coefficient $\mu_i\in[0,1]$ satisfying $\sum_{i\in\mathcal{L}_k}\mu_i=1$, thereby allocating a dedicated share of $\mu_i\xi_{\rm{b}}B_{\rm{b}}$ to $\text{BS}_i$ $\forall i\in\mathcal{L}_k$. In other words, the \ac{DCO{-}OFDM} frame is fragmented into $N_{\rm{BS}}$ segments, with each one independently loaded with the downlink data for $\text{BS}_i$. Hence, the required signal processing to discriminate between different sub-bands is performed in the frequency domain by using the \ac{FFT} of the received signal from ${\rm{b}}_k$.

\section{End-to-End Sum Rate Analysis}\label{sec_4_2}
The \textit{end-to-end sum rate} refers to the sum of the end-to-end rates of individual \acp{UE}. In this paper, two main policies are proposed for bandwidth allocation: \ac{UBS} and \ac{CBS}. The end-to-end sum rate under both policies are derived in the following.

\subsection{User-based Bandwidth Scheduling}\label{sec_4_2_4}
After performing bandwidth sharing, an independent pipeline is created to transport data from the gateway to every \ac{BS}. In \ac{UBS}, the dedicated portion of the backhaul bandwidth and the bandwidth of the access system are equally allocated to \acp{UE} for each \ac{BS}. The end-to-end rate of each \ac{UE} cannot be greater than the allocated capacity of each intermediate hop based on the maximum flow--minimum cut theorem \cite{Cover1}. Also, bandwidth sharing introduces a loss factor of $\mu_i$ into the end-to-end \ac{SE} of every \ac{UE}. For $\text{BS}_i$ $\forall i\in\mathcal{T}_1$, the $u$th UE $\forall u\in\mathcal{U}_i$ experiences an end-to-end rate of:
\begin{subequations}\label{eq_3_3_10}
	\begin{align}
	\mathcal{R}_u^{\rm{UBS}} &= \min\left[\frac{\mu_i\xi_{\rm{b}}B_{\rm{b}}}{M_i}\log_2(1+\gamma_{{\rm{b}}_i}),\frac{\xi_{\rm{a}}B_{\rm{a}}}{M_i}\log_2(1 + \gamma_u)\right],\label{eq_3_3_10a}\\
	&= \frac{\xi_{\rm{a}}B_{\rm{a}}}{M_i}\min\left[\mu_i\zeta\log_2(1+\gamma_{{\rm{b}}_i}),\log_2(1+\gamma_u)\right],\label{eq_3_3_10b}
	\end{align}
\end{subequations}
where $\zeta$ is defined as the effective bandwidth ratio:
\begin{equation}\label{eq5_sec3}
\zeta = \frac{\xi_{\rm{b}}B_{\rm{b}}}{\xi_{\rm{a}}B_{\rm{a}}}.
\end{equation}

To extend the analysis for the $n$th tier, note that the signals intended for \acp{BS} in the $n$th tier need to traverse exactly $n$ intermediate hops through backhaul links. The effective achievable rates of all those $n$ links are input to the $\rm{min}$ operator. Let $\mathcal{P}_i=\{j_1,j_2,\dots,j_n\}$ denote the path from the gateway to $\text{BS}_i$ for some $i\in\mathcal{T}_n$. The elements of $\mathcal{P}_i$ specify the indexes of backhaul links on the way to $\text{BS}_i$, among which $j_1$ indicates the bottleneck link. For example, $\mathcal{P}_{20}=\{1,8,20\}$ according to Fig.~\ref{fig_4_2_1}. Let $\mu_{i,j}$ be the bandwidth sharing ratio that is allocated to $\text{BS}_i$ at ${\rm{b}}_j$. To be consistent with the notation used for the first tier, $\mu_{i,j}=\mu_i$ for $j=j_1$. Obviously, for the last tier of an $N_{\rm{T}}$-tier super cell $\mu_{i,j_{N_{\rm{T}}}}=1$. Therefore, for $\text{BS}_i$ in the $n$th tier, the end-to-end rate of the $u$th \ac{UE} is written in a compact form:
\begin{equation}\label{eq_3_3_14}
\mathcal{R}_u^{\rm{UBS}} = \frac{\xi_{\rm{a}}B_{\rm{a}}}{M_i}\min\left[\underset{j\in\mathcal{P}_i}{\min}~\mu_{i,j}\zeta\log_2(1+\gamma_{{\rm{b}}_j}),\log_2(1+\gamma_u)\right].
\end{equation}
Note that for a one-tier super cell, \eqref{eq_3_3_14} reduces to \eqref{eq_3_3_10}, as the $\rm{min}$ operator is associative. The generalized end-to-end sum rate for $\text{BS}_i$ in the $n$th tier for $n=1,2,\dots,N_{\rm{T}}$ becomes:
\begin{equation}\label{eq_3_3_15}
\mathcal{R}_{{\rm{BS}}_i}^{\rm{UBS}} = \sum\limits_{u\in\mathcal{U}_i}\mathcal{R}_u^{\rm{UBS}},~\forall i\in\mathcal{T}_n
\end{equation}

\subsection{Cell-based Bandwidth Scheduling}\label{sec_4_2_5}
The point that distinguishes \ac{CBS} from \ac{UBS} is that in \ac{CBS}, the gateway puts up the entire data intended for each \ac{BS} in an exclusive set of subcarriers of the bottleneck backhaul. Then, the desired \ac{BS} assigns that given bandwidth equally to the associated \acp{UE}. The end-to-end sum rate of $\text{BS}_i$ in the $n$th tier is expressed mathematically as follows:
\begin{equation}\label{eq_3_3_16}
\mathcal{R}_{{\rm{BS}}_i}^{\rm{CBS}} = \min\left[\underset{j\in\mathcal{P}_i}{\min}~\mu_{i,j}\xi_{\rm{b}}B_{\rm{b}}\log_2(1+\gamma_{{\rm{b}}_j}),\frac{\xi_{\rm{a}}B_{\rm{a}}}{M_i}\sum\limits_{u\in\mathcal{U}_i}\log_2(1+\gamma_u)\right],~\forall i\in\mathcal{T}_n.
\end{equation}

\subsection{A System-Level Simplification}
With the assumption that a fixed power $P_{\rm{b}}$ is equally assigned to every individual backhaul link, the received \ac{SNR} of all the backhaul links become identical:
\begin{equation}\label{eq_4_4_1temp}
\gamma_{{\rm{b}}_i}=K_{\rm{b}}\gamma_{\rm{b}},~\forall i\in\mathcal{L}_k
\end{equation}
where $K_{\rm{b}}=\frac{P_{\rm{b}}}{P_{\rm{a}}}$ is a common power control coefficient for the backhaul system\footnote{$K_{\rm{b}}$ also represents the total power of the backhaul system normalized by that of the access system, i.e. $\frac{\sum_{i\in\mathcal{L}_k}P_{{\rm{b}}_i}}{N_{\rm{BS}}P_{\rm{a}}} = K_{\rm{b}}$.}. A judicious design consists in choosing bandwidth allocation ratios for the outer tiers so that intermediate hops do not restrict the effective achievable rate in the path from the gateway to the desired \ac{BS}. One such design is to make the bandwidth sharing coefficients in the outer tiers proportional to that of the bottleneck link according to the following normalization:
\begin{equation}\label{eq_4_4_2temp}
\mu_{i,j} = \frac{\mu_i}{\sum_{i'\in\mathcal{L}_j}\mu_{i'}}>\mu_i,~\forall i\in\mathcal{L}_j
\end{equation}
The inequality $\mu_{i,j}>\mu_i$ is derived from the fact that $\sum_{i'\in\mathcal{L}_j}\mu_{i'}<1$ when $j\in\mathcal{T}_n$ $\forall n>1$. As a result:
\begin{equation}\label{eq_4_4_3temp}
\underset{j\in\mathcal{P}_i}{\min}~\mu_{i,j} = \mu_i.
\end{equation}

\subsubsection{UBS}
By using \eqref{eq_4_4_1temp} and \eqref{eq_4_4_3temp}, the term representing the rate of $\mathcal{P}_i$ in \eqref{eq_3_3_14} simplifies to:
\begin{equation}\label{eq_4_4_1}
\underset{j\in\mathcal{P}_i}{\min}~\mu_{i,j}\zeta\log_2(1+\gamma_{{\rm{b}}_j}) = \mu_i\zeta\log_2(1+\gamma_{{\rm{b}}_k}),
\end{equation}
where $k$ signifies the index of the bottleneck link, which can be calculated by $k=\left\lfloor\frac{i-(3n-1)(n-1)}{n}\right\rfloor$ for $\text{BS}_i$ $\forall i\in\mathcal{T}_n$ for $n=1,2,\dots,N_{\rm{T}}$. As a sanity check, for a special case of $n=1$, this generalized indicator returns $k=i$, conforming with \eqref{eq_3_3_10}. In effect, the dominant hop along the backhaul path is merely posed by the link ${\rm{b}}_k$. For $\text{BS}_i$ in the $n$th tier, the end-to-end transmission rate of the $u$th \ac{UE} in \eqref{eq_3_3_14} reduces to a more tractable form of:
\begin{equation}\label{eq_4_2_9}
\mathcal{R}_u^{\rm{UBS}} = \frac{\xi_{\rm{a}}B_{\rm{a}}}{M_i}\min\left[\mu_i\zeta\log_2(1+\gamma_{{\rm{b}}_k}),\log_2(1 + \gamma_u)\right],~\forall u\in\mathcal{U}_i
\end{equation}

\subsubsection{CBS}
Based on \eqref{eq_4_4_1temp} and \eqref{eq_4_4_2temp}, the end-to-end sum rate of $\text{BS}_i$ in \eqref{eq_3_3_16} is simplified to:
\begin{equation}\label{eq_4_2_13}
\mathcal{R}_{{\rm{BS}}_i}^{\rm{CBS}} = \min\left[\mu_i\xi_{\rm{b}}B_{\rm{b}}\log_2(1+\gamma_{{\rm{b}}_k}),\frac{\xi_{\rm{a}}B_{\rm{a}}}{M_i}\sum\limits_{u\in\mathcal{U}_i}\log_2(1+\gamma_u)\right],~\forall i\in\mathcal{T}_n.
\end{equation}
For completeness, the end-to-end rate of the $u$th \ac{UE} $\forall u\in\mathcal{U}_i$ for \ac{CBS} is obtained by using \eqref{eq_3_2_12}:
\begin{equation}\label{eq_4_2_14}
\mathcal{R}_u^{\rm{CBS}} =
\begin{cases}
\frac{\mu_i\xi_{\rm{b}}B_{\rm{b}}}{M_i}\log_2(1+\gamma_{{\rm{b}}_k}), & \mu_i\leq\frac{\mathcal{R}_{{\rm{a}}_i}}{\mathcal{R}_{{\rm{b}}_k}} \\
\frac{\xi_{\rm{a}}B_{\rm{a}}}{M_i}\log_2(1+\gamma_u), & \mu_i>\frac{\mathcal{R}_{{\rm{a}}_i}}{\mathcal{R}_{{\rm{b}}_k}}
\end{cases}
\end{equation}

\section{Optimal Bandwidth Scheduling}\label{sec_4_4}
This section focuses on the problem of optimal bandwidth scheduling. In particular, the design of bandwidth sharing coefficients for the generalized case of multi-tier super cells is formulated as an optimization problem aiming for the end-to-end sum rate maximization.

\subsection{Optimal User-based Bandwidth Scheduling}\label{sec_4_4_1}
The purpose of optimal \ac{UBS} is to maximize the sum of \textit{per-user} end-to-end rates under the \ac{UBS} policy. Based on \eqref{eq_4_2_9}, the optimization problem for the $k$th branch of the super cell is stated in the global form:
\begin{maxi!}[2]
	{\{\mu_i\in\mathbb{R}\}}{\sum_{i\in\mathcal{L}_k}\sum_{u\in\mathcal{U}_i}\frac{\xi_{\rm{a}}B_{\rm{a}}}{M_i}\min\left[\mu_i\zeta\log_2(1+\gamma_{{\rm{b}}_k}),\log_2(1 + \gamma_u)\right]\label{eq_4_4_1objective}}
	{\label{eq_4_4_1problem}}{}
	\addConstraint{\sum_{i\in\mathcal{L}_k}\mu_i}{=1\label{eq_4_4_1a}}
	\addConstraint{0\leq\mu_i\leq 1,}{~\forall i\in\mathcal{L}_k\label{eq_4_4_1b}}
\end{maxi!}
The constraints \eqref{eq_4_4_1a} and \eqref{eq_4_4_1b} are discussed in Section~\ref{sec2_num3}. For global optimization of the bandwidth allocation, the downlink \ac{SINR} for entire \acp{UE} in the $k$th branch is processed by a central controller. Such an assumption is justified for indoor wireless optical channels for two reasons: 1) the short wavelength of the optical carrier along with the large photosensitive area of the \ac{PD} eliminate rapid signal fluctuations due to multipath fading \cite{Kahn1}; 2) in realistic indoor scenarios, the \acp{UE} are inclined to be static or slowly moving. Under such quasi-static conditions, it is possible to acquire an accurate estimate of the downlink channel state with a small overhead based on a limited content feedback mechanism, which relies upon updating the average received power \cite{Soltani}. Consequently, each \ac{BS} collects the \ac{SINR} information from an uplink channel and sends it to the central controller for optimization of the bandwidth allocation.

The objective function in \eqref{eq_4_4_1objective} can be expanded by factorizing a constant term $\zeta\log_2(1+\gamma_{{\rm{b}}_k})$ and defining a variable $\rho_u$ to be the normalized achievable rate for the $u$th \ac{UE}:
\begin{equation}\label{eq_4_4_3}
\rho_u = \frac{\log_2(1+\gamma_u)}{\zeta\log_2(1+\gamma_{{\rm{b}}_k})}.
\end{equation}
The factor $\xi_{\rm{b}}B_{\rm{b}}\log_2(1+\gamma_{{\rm{b}}_k})$ is independent of optimization variables and it can be put aside without affecting the problem in \eqref{eq_4_4_1problem}. This leads to a compact form of:
\begin{maxi!}
	{\{\mu_i\in\mathbb{R}\}}{\sum_{i\in\mathcal{L}_k}\sum_{u\in\mathcal{U}_i}\frac{1}{M_i}\min[\mu_i,\rho_u]\label{eq_4_4_4a}}
	{\label{eq_4_4_4}}{}
	\addConstraint{\eqref{eq_4_4_1a}~\&~\eqref{eq_4_4_1b}}{\label{eq_4_4_4b}}
\end{maxi!}
The objective function in \eqref{eq_4_4_4a} is a composite of concave operators, comprising summation and minimization. Such a composition preserves concavity and the objective function is concave \cite{Boyd1}. Therefore, this is a convex optimization problem with linear constraints, for which Slater's condition holds and there is a global optimum \cite{Bertsekas}. However, standard methods such as Lagrange multipliers cannot be directly applied to find an analytical solution because the objective function is not differentiable in $\boldsymbol{\mu}=[\mu_i]_{N_{\rm{BS}}\times1}$, where $\boldsymbol{\mu}$ is the vector of optimization variables.

For nonsmooth optimization, the subgradient method is a means to deal with nondifferentiable convex functions \cite{Boyd2}. Particularly, the constrained optimization problem in \eqref{eq_4_4_4} can be efficiently solved by using the \textit{projected} subgradient method. Analogous to common subgradient methods, the vector $\boldsymbol{\mu}$ is sequentially updated using a subgradient of the objective function at $\boldsymbol{\mu}$. Compared with an ordinary subgradient method, there is an additional constraint $\mathbf{1}^T\boldsymbol{\mu}=1$, with $\mathbf{1}$ denoting an all-ones vector of size $N_{\rm{BS}}\times1$, which is required by \eqref{eq_4_4_1a}. To fulfil this constraint, at each iteration, the projected approach maps the components of $\boldsymbol{\mu}$ onto a unit space before proceeding with the next update, to bring them back to the feasible set. The convergence is attained upon setting a suitable step size for executing iterations \cite{Boyd2}. To develop an efficient iterative algorithm, an appropriate subgradient vector is required to provide a descent direction for a local maximizer to approach the global maximum when updating. To this end, the problem statement needs to be properly modified. The users in the attocell of $\text{BS}_i$ are split into two disjoint groups: those for whom $\mu_i>\rho_u$ and those for whom $\mu_i\leq\rho_u$. The index sets for these two groups are denoted by $\hat{\mathcal{U}}_i$ and $\check{\mathcal{U}}_i$, respectively, implying $\hat{\mathcal{U}}_i\cup\check{\mathcal{U}}_i=\mathcal{U}_i$. The number of elements corresponding to $\hat{\mathcal{U}}_i$ and $\check{\mathcal{U}}_i$ is represented by $\hat{M}_i$ and $\check{M}_i$ so that $\hat{M}_i+\check{M}_i=M_i$. The optimization problem in \eqref{eq_4_4_4} is then stated in the desired form:
\begin{maxi!}
	{\{\mu_i\in\mathbb{R}\}}{\sum_{i\in\mathcal{L}_k}\left[\sum_{u\in\hat{\mathcal{U}}_i}\frac{\rho_u}{M_i}+\frac{\check{M}_i}{M_i}\mu_i\right]\label{eq_4_4_5a}}
	{\label{eq_4_4_5}}{}
	\addConstraint{\eqref{eq_4_4_1a}~\&~\eqref{eq_4_4_1b}}{\label{eq_4_4_5b}}
\end{maxi!}
Note that the arrangements of $\check{\mathcal{U}}_i$ and $\hat{\mathcal{U}}_i$ depend on the value of $\mu_i$. Based on \eqref{eq_4_4_5a}, the derivative of the objective function with respect to $\mu_i$ is estimated by $\frac{\check{M}_i}{M_i}$, resulting in the subgradient vector $\mathbf{g}=[g_i]_{N_{\rm{BS}}\times1}$ where $g_i=\frac{\check{M}_i}{M_i}$. The projected subgradient method for solving the primal problem is summarized in Algorithm~\ref{Algorithm1}. In the first line of this algorithm, $\alpha$ is the step size for updating, which is chosen to be sufficiently small; and in step~\ref{Update}, $\mathbf{P}$ is an $N_{\rm{BS}}\times N_{\rm{BS}}$ unitary space projection matrix \cite{EChong}, which is obtained as follows:
\begin{equation}
\mathbf{P} = \mathbf{I}-\mathbf{1}\left(\mathbf{1}^T\mathbf{1}\right)^{-1}\mathbf{1}^T = \mathbf{I}-\frac{1}{3}\mathbf{J},
\end{equation}
where $\mathbf{I}$ and $\mathbf{J}$ respectively represent an identity matrix and an all-ones matrix of size $N_{\rm{BS}}\times N_{\rm{BS}}$.

\begin{algorithm}[t!]
	\caption{Projected Subgradient Algorithm for Optimal User-based Bandwidth Scheduling.} {\fontsize{12}{10}\selectfont
		\begin{algorithmic}[1]
			\State Choose $\alpha$
			\State Initialize $\boldsymbol{\mu}^{(0)}$
			\For {all $i\in\mathcal{L}_k$} \label{Loop1}
			\State Let $\check{\mathcal{U}}_i^{(l)}=\left\{u\in\mathcal{U}_i\Big|\mu_i^{(l)}\leq\rho_u\right\}$
			\State Compute $\check{M}_i^{(l)}=\big|\check{\mathcal{U}}_i^{(l)}\big|$ 
			\State Compute $g_i^{(l)}=\dfrac{\check{M}_i^{(l)}}{M_i}$
			\EndFor
			\State Update $\boldsymbol{\mu}^{(l)}$ through $\boldsymbol{\mu}^{(l+1)}=\boldsymbol{\mu}^{(l)}-\alpha\mathbf{P}\mathbf{g}^{(l)}$ \label{Update}
			\State $l\gets l+1$ 
			\State \Goto{Loop1}
			\State Return $\boldsymbol{\mu}$
		\end{algorithmic}}
	\label{Algorithm1}
\end{algorithm}

\subsection{Optimal Cell-based Bandwidth Scheduling}\label{sec_4_4_2}
The scheduler aims to maximize the aggregate \textit{per-cell} end-to-end sum rates under the \ac{CBS} policy by computing an optimal solution to the following bandwidth allocation problem. For the $k$th branch of the super cell, by using \eqref{eq_4_2_13}, the optimization problem is:
\begin{maxi!}
	{\{\mu_i\in\mathbb{R}\}}{\sum_{i\in\mathcal{L}_k}\min\left[\mu_i\xi_{\rm{b}}B_{\rm{b}}\log_2(1+\gamma_{{\rm{b}}_k}),\frac{\xi_{\rm{a}}B_{\rm{a}}}{M_i}\sum\limits_{u\in\mathcal{U}_i}\log_2(1+\gamma_u)\right]\label{eq_4_4_7objective}}
	{\label{eq_4_4_7problem}}{}
	\addConstraint{\eqref{eq_4_4_1a}~\&~\eqref{eq_4_4_1b}}{\label{eq_4_4_7b}}
\end{maxi!}
The central controller only gathers the overall access sum rate information sent individually by each \ac{BS} via the feedback channel for further processing. This reduces the feedback overhead with respect to \ac{UBS}, which appeals to applications where limited feedback is available \cite{Soltani}.

Similar to the optimal \ac{UBS} case, the optimal \ac{CBS} problem in \eqref{eq_4_4_7problem} is reformulated as follows:
\begin{maxi!}
	{\{\mu_i\in\mathbb{R}\}}{\sum_{i\in\mathcal{L}_k}\min\left[\mu_i,\frac{1}{M_i}\sum_{u\in\mathcal{U}_i}\rho_u\right]\label{eq_4_4_6a}}
	{\label{eq_4_4_6}}{}
	\addConstraint{\eqref{eq_4_4_1a}~\&~\eqref{eq_4_4_1b}}{\label{eq_4_4_6b}}
\end{maxi!}
where $\rho_u$ is given by \eqref{eq_4_4_3}. The projected subgradient method is used to solve the primal problem. With the current expression in \eqref{eq_4_4_6a}, the objective function is not differentiable in $\boldsymbol{\mu}$. To find the candidate subgradient vector, the \acp{BS} of the $k$th branch are classified into two categories: those that fulfil the condition $\mu_i>\frac{1}{M_i}\sum_{u\in\mathcal{U}_i}\rho_u$ and those that satisfy $\mu_i\leq\frac{1}{M_i}\sum_{u\in\mathcal{U}_i}\rho_u$. The former category is represented by an index set of $\hat{\mathcal{L}}_k$ and the latter case by $\check{\mathcal{L}}_k$. The optimization problem in \eqref{eq_4_4_6} turns into:
\begin{maxi!}
	{\{\mu_i\in\mathbb{R}\}}{\sum_{i\in\hat{\mathcal{L}}_k}\frac{1}{M_i}\sum_{u\in\mathcal{U}_i}\rho_u+\sum_{i\in\check{\mathcal{L}}_k}\mu_i\label{eq_4_4_8a}}
	{\label{eq_4_4_8}}{}
	\addConstraint{\eqref{eq_4_4_1a}~\&~\eqref{eq_4_4_1b}}{\label{eq_4_4_8b}}
\end{maxi!}
Therefore, the derivative of the objective function with respect to $\mu_i$ is equal to $1$, leading to the subgradient vector $\mathbf{g}=[g_i]_{N_{\rm{BS}}\times1}$ where:
\begin{equation}\label{eq_4_4_9}
g_i = \left\{
\begin{matrix}
1, & i\in\check{\mathcal{L}}_k\\
0, & i\in\hat{\mathcal{L}}_k
\end{matrix}\right.
\end{equation}
The projected subgradient method used to solve the primal problem is outlined in Algorithm~\ref{Algorithm2}.

\begin{algorithm}[t!]
	\caption{Projected Subgradient Algorithm for Optimal Cell-based Bandwidth Scheduling.} {\fontsize{12}{10}\selectfont
		\begin{algorithmic}[1]
			\State Choose $\alpha$
			\State Initialize $\boldsymbol{\mu}^{(0)}$			
			\State Let $\check{\mathcal{L}}_k^{(l)}=\left\{i\in\mathcal{L}_k\Big|\mu_i^{(l)}\leq\frac{1}{M_i}\sum_{u\in\mathcal{U}_i}\rho_u\right\}$ \label{Loop2}
			\For {all $i\in\mathcal{L}_k$}
			\If {$i\in\check{\mathcal{L}}_k$}
			\State Set $g_i^{(l)}=1$
			\Else 
			\State Set $g_i^{(l)}=0$
			\EndIf
			\EndFor
			\State Update $\boldsymbol{\mu}^{(l)}$ through $\boldsymbol{\mu}^{(l+1)}=\boldsymbol{\mu}^{(l)}-\alpha\mathbf{P}\mathbf{g}^{(l)}$
			\State $l\gets l+1$ 
			\State \Goto{Loop2}
			\State Return $\boldsymbol{\mu}$
		\end{algorithmic}}
	\label{Algorithm2}
\end{algorithm}

\begin{table}[t!]
	\centering
	\caption{Simulation Parameters}
	\begin{tabular}{l | c | l}
		{Parameter} & {Symbol} & {Value} \\
		\hline
		{Downlink LED Optical Power} & $P_{\rm opt}$ & $10$ $\rm W$ \\
		{Downlink LED Semi-Angle} & $\Phi_{\rm{a}}$ & $40^\circ$ \\
		{Vertical Separation} & $h$ & $2.25$ $\rm m$ \\
		{Hexagonal Cell Radius} & $R$ & $2.5$ $\rm m$ \\
		{Total VLC Bandwidth} & $B$ & $20$ $\rm MHz$ \\
		{IFFT/FFT Length} & $N$ & $1024$ \\
		{Noise Power Spectral Density} & $N_0$ & $5\times10^{-22}$ $\rm A^2/Hz$ \\
		{UE Receiver Field of View} & $\Psi_{\rm a}$ & $85^\circ$ \\
		{PD Effective Area} & $A_{\rm PD}$ & $10^{-4}$ $\rm m^2$ \\
		{PD Responsivity} & $R_{\rm PD}$ & $0.6$ $\rm A/W$ \\
		{DC Bias Scaling Factor} & $\alpha$ & $3$ \\
		\hline
	\end{tabular}
	\label{tbl_3_4_1}
\end{table}

\subsection{Numerical Results and Discussions}\label{sec_4_4_3}
This section presents performance results for optimal \ac{UBS} and optimal \ac{CBS} policies based on Algorithm~\ref{Algorithm1} and Algorithm~\ref{Algorithm2}, respectively. To assess the optimality of the proposed algorithms, equal bandwidth scheduling is also included as a baseline policy. It allocates an equal fraction of bandwidth to every \ac{BS} in the same backhaul branch without distinction, i.e. $\mu_i=\frac{1}{N_{\rm{BS}}}$ $\forall i\in\mathcal{L}_k$ for the $k$th branch of an $N_{\rm{T}}$-tier super cell. The optimal and equal scheduling cases are marked with `OPT' and `EQL', respectively. The end-to-end sum rate performance is evaluated based on Section~\ref{sec_4_2}. The achievable rate of the access network with an unlimited backhaul capacity is considered and labeled as `Access Limit'. Monte-Carlo simulations are conducted over many random realizations to distribute multiple \acp{UE} uniformly over the network. For a fair comparison between super cells with a different number of tiers, the results are presented in terms of the average \ac{UE} density, which is defined as the ratio of the total number of \acp{UE} to that of \acp{BS}:
\begin{equation}\label{eq_4_3_1}
\lambda = \frac{M}{N_{\rm{BS}}}~\text{UE/Cell}.
\end{equation}
Table~\ref{tbl_3_4_1} lists the system parameters used for simulations. The configurations for cell radius and downlink \ac{LED} semi-angle are adopted from the guidelines provided in \cite{CChen5}.

\begin{figure}[!t]
	\centering
	\subfloat[\label{fig_4_4_1a} $\lambda=1$ UE/Cell]{\includegraphics[width=0.5\linewidth]{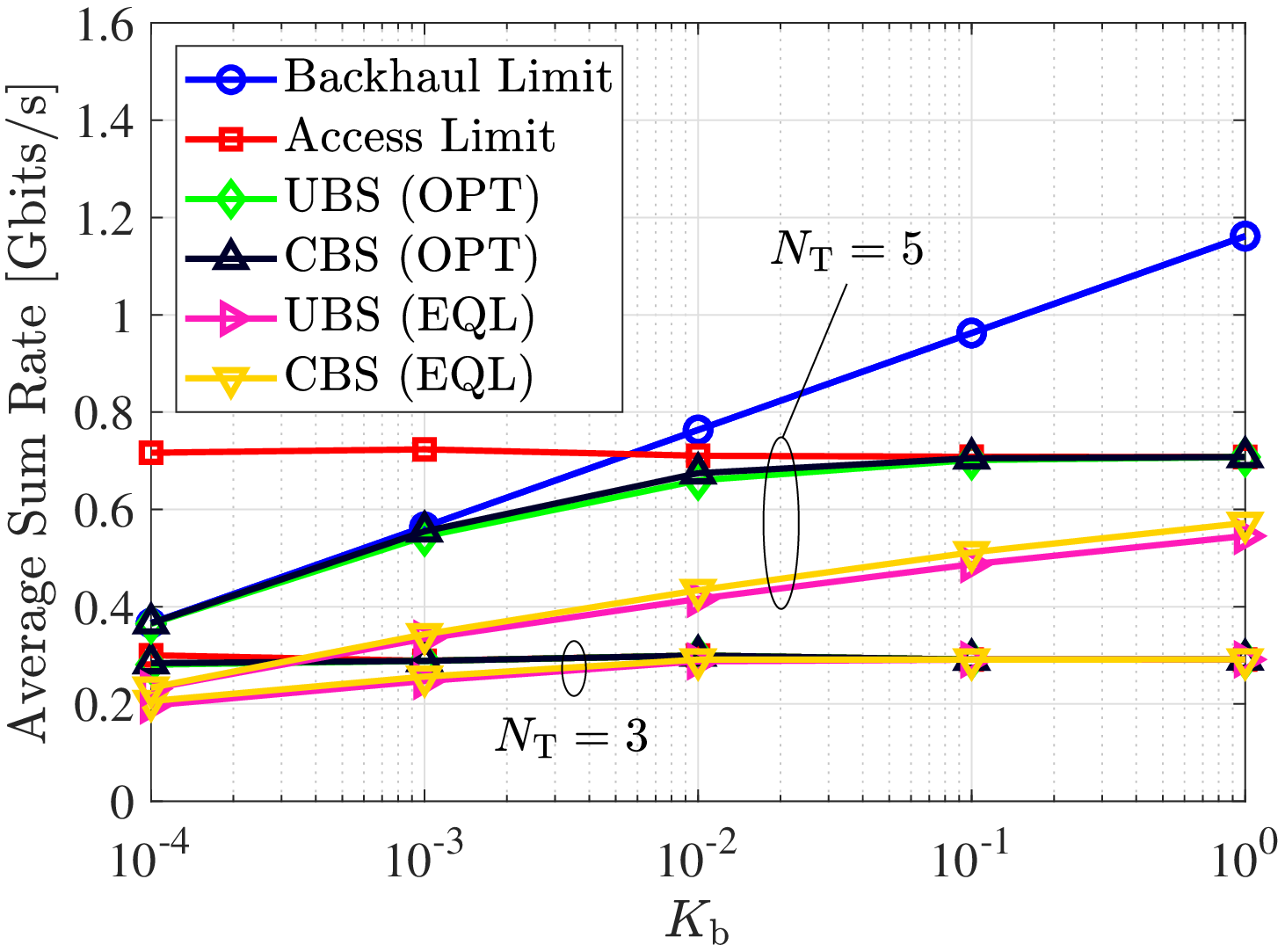}}
	\subfloat[\label{fig_4_4_1b} $\lambda=5$ UE/Cell]{\includegraphics[width=0.5\linewidth]{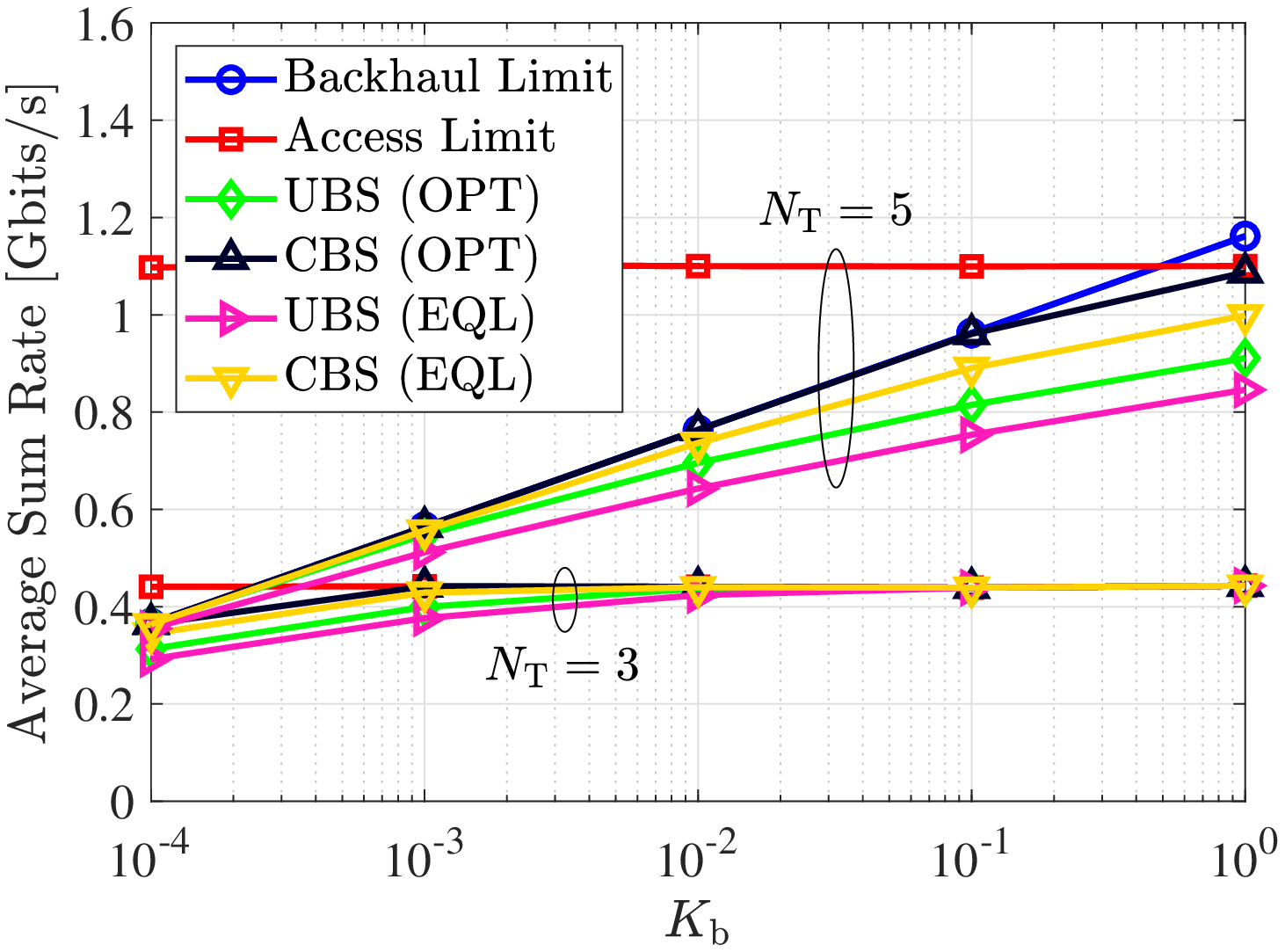}}
	\caption{Average sum rate performance of optimal UBS and optimal CBS policies as a function of the power ratio $K_{\rm{b}}$ for different values of $N_{\rm{T}}$ and $\lambda$; and $B_{\rm{b}}=3B_{\rm{a}}$.}
	\label{fig_4_4_1}
\end{figure}

Fig.~\ref{fig_4_4_1} shows the average sum rate performance for one branch of an $N_{\rm{T}}$-tier super cell as a function of the backhaul power ratio $K_{\rm{b}}$ for different values of $N_{\rm{T}}$ and $\lambda$. A key principle for understanding the impact of backhaul and access networks on the end-to-end performance relates to \textit{rate limit}. This concept indicates the effective upper bound of the end-to-end sum rate as imposed by both backhaul and access systems, i.e. $\min[\text{Backhaul Limit},\text{Access Limit}]$. For a low \ac{UE} density scenario as shown in Fig.~\ref{fig_4_4_1a}, for $N_{\rm{T}}=5$, both optimal policies maximally achieve the end-to-end rate limit over a broad range of values for $K_{\rm{b}}$. Note that the optimal algorithms operate whether backhaul or access limits the end-to-end performance. Fig.~\ref{fig_4_4_1a} demonstrates when the difference between backhaul and access limits is large enough, both UBS-OPT and CBS-OPT fully attain the rate limit, which is the case for $K_{\rm{b}}<10^{-3}$ and $K_{\rm{b}}\geq10^{-1}$. Moreover, it can be observed that both UBS-OPT and CBS-OPT cases improve the performance against their respective baseline policies of UBS-EQL and CBS-EQL. The improvement is as much as $250$ Mbits/s by choosing $K_{\rm{b}}=10^{-2}$. For $N_{\rm{T}}=3$, the overall rate of backhaul is sufficiently higher than that of access especially for $K_{\rm{b}}\geq10^{-2}$, in which case the performance for all scheduling policies coincide.

Fig.~\ref{fig_4_4_1b} plots the same set of results as in Fig.~\ref{fig_4_4_1a}, by considering a high \ac{UE} density scenario of $\lambda=5$ UE/Cell. Foremost, such an increase in the \ac{UE} density causes the access rate limit to rise, which is more pronounced for $N_{\rm{T}}=5$. In this case, the backhaul enforces a bottleneck on the end-to-end transmission, and evidently CBS-OPT makes perfect use of the limited backhaul capacity by following its growing trend when $K_{\rm{b}}$ increases. For instance, CBS-OPT successfully reaches an average sum rate of just below $1$ Gbits/s for $K_{\rm{b}}=10^{-1}$, as supplied by the backhaul system. Compared to Fig.~\ref{fig_4_4_1a}, the extent of improvement offered by optimal scheduling relative to equal scheduling is lower in Fig.~\ref{fig_4_4_1b}, still this is enhanced by heightening the backhaul power. Furthermore, it is observed that \ac{CBS} performs even better than \ac{UBS}. There is also a small gap between the results of \ac{CBS} and \ac{UBS} in Fig.~\ref{fig_4_4_1a}, but the difference in performance is manifested in Fig.~\ref{fig_4_4_1b} when the number of \acp{UE} per cell is multiplied fivefold.

\begin{figure}[!t]
	\centering
	\subfloat[\label{fig_4_4_2a} $K_{\rm{b}}=1$]{\includegraphics[width=0.5\linewidth]{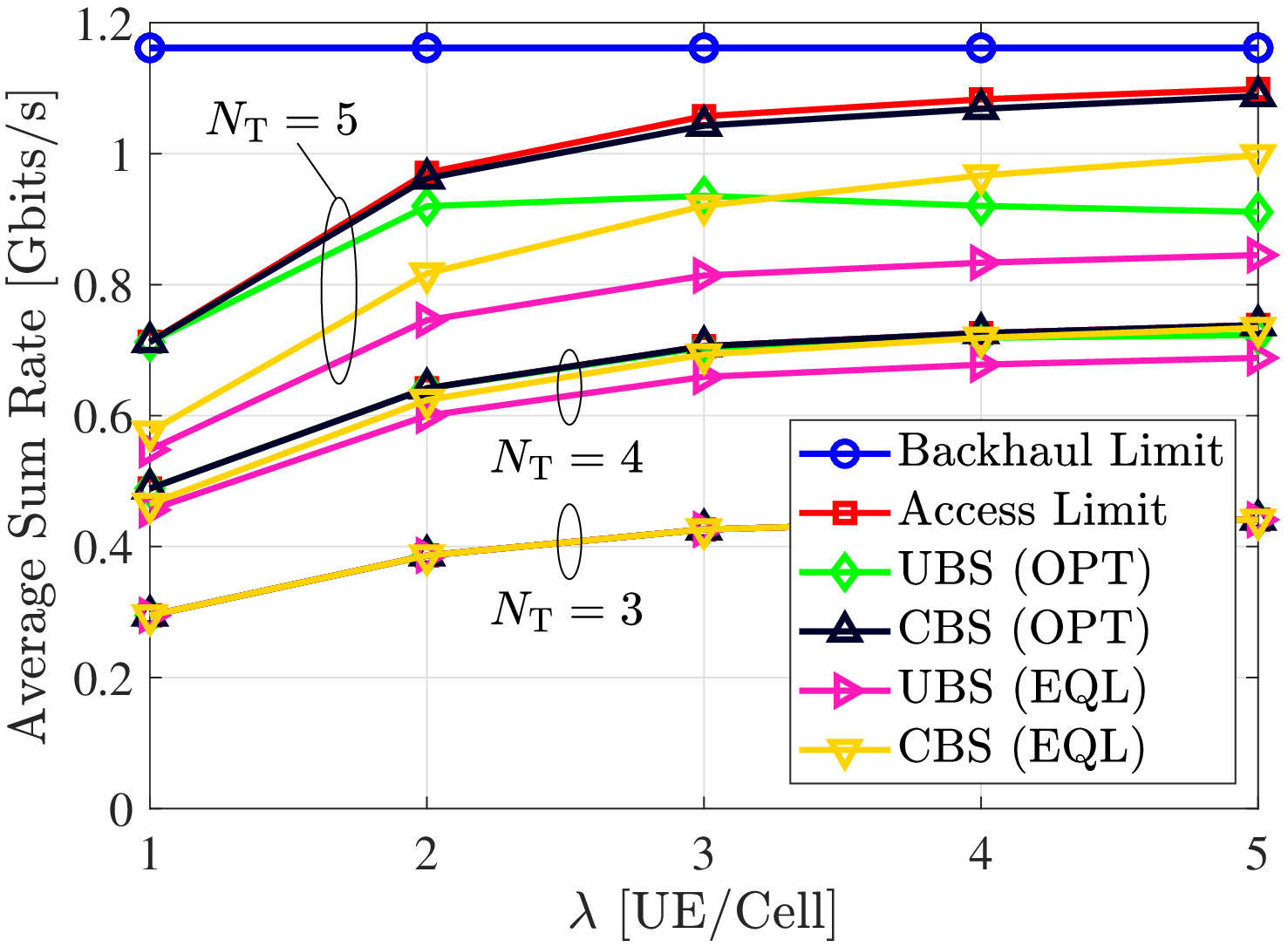}}
	\subfloat[\label{fig_4_4_2b} $K_{\rm{b}}=10^{-2}$]{\includegraphics[width=0.5\linewidth]{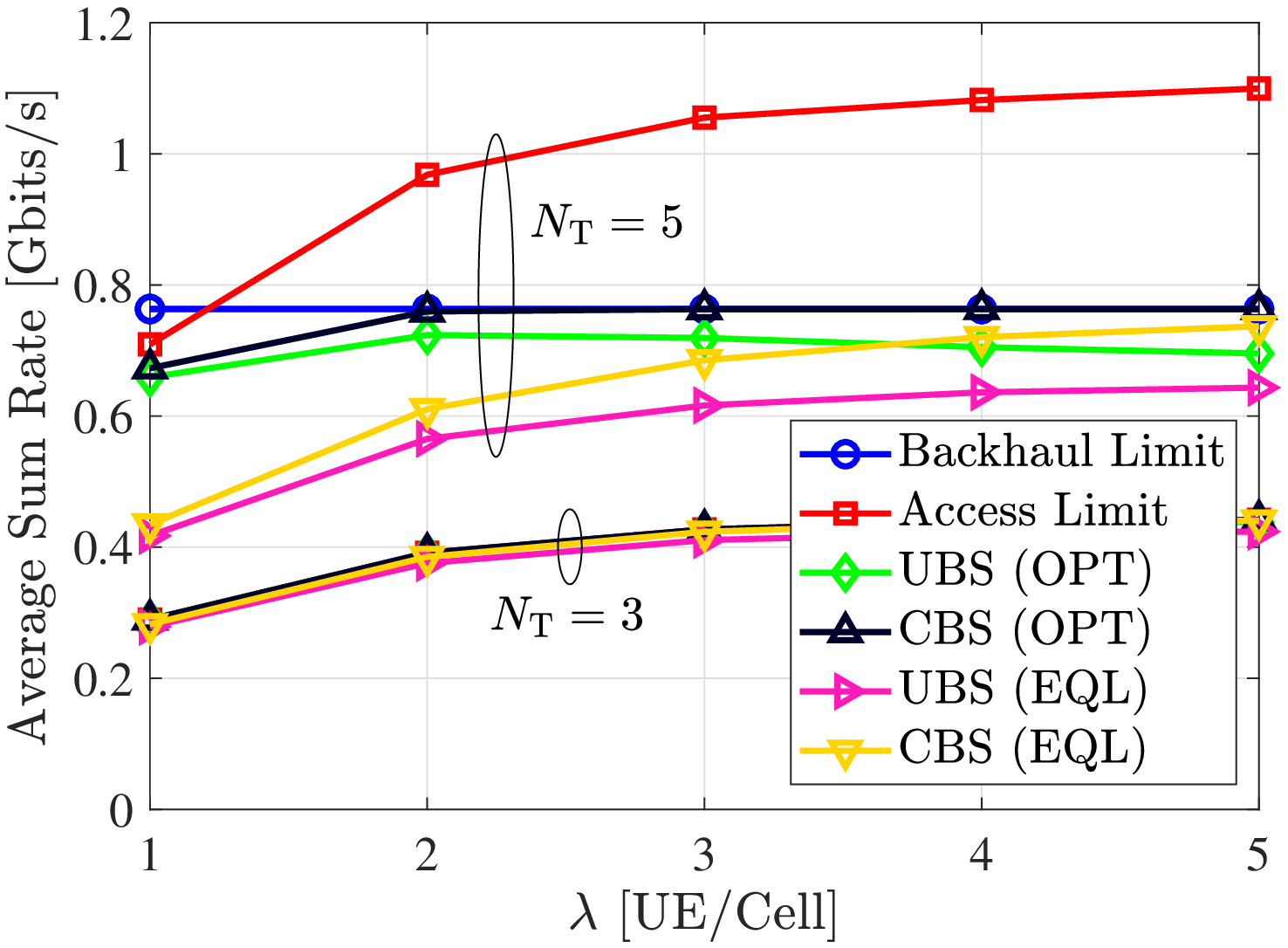}}
	\caption{Average sum rate performance of optimal UBS and optimal CBS policies as a function of the UE density $\lambda$ for different values of $N_{\rm{T}}$ and $K_{\rm{b}}$; and $B_{\rm{b}}=3B_{\rm{a}}$.}
	\label{fig_4_4_2}
\end{figure}

Fig.\ref{fig_4_4_2} illustrates the average sum rate performance with respect to the \ac{UE} density $\lambda$ for different combinations of $N_{\rm{T}}$ and $K_{\rm{b}}$. For $N_{\rm{T}}=5$, Fig.\ref{fig_4_4_2a} ($K_{\rm{b}}=1$) represents a case where the access limit is located under the backhaul limit, while Fig.\ref{fig_4_4_2b} ($K_{\rm{b}}=10^{-2}$) constitutes the converse case in which the backhaul limit dominates for the majority of values of $\lambda$. In either case, similar to Fig.~\ref{fig_4_4_1}, the optimal algorithms outperform their baseline counterparts. It is observed that CBS-OPT consistently retains the achievable rate limit as the \ac{UE} density is increased. Also, CBS-OPT performs better than UBS-OPT, like the case in Fig.~\ref{fig_4_4_1b}. An explanation for this effect can be given by noting the operation principals of \ac{CBS} and \ac{UBS} systems. The per cell bandwidth allocation in \ac{CBS} is compatible with the notion of the rate limit, which means it can efficiently adapt to the limits of access and backhaul networks. By contrast, the \ac{UBS} system assigns the backhaul bandwidth in a per user basis and therefore introduces a degree of loss into the sum rate performance when aggregating the end-to-end rates achieved by individual \acp{UE}.

\begin{figure}[!t]
	\centering
	\subfloat[\label{fig_4_4_3a}]{\includegraphics[width=0.5\linewidth]{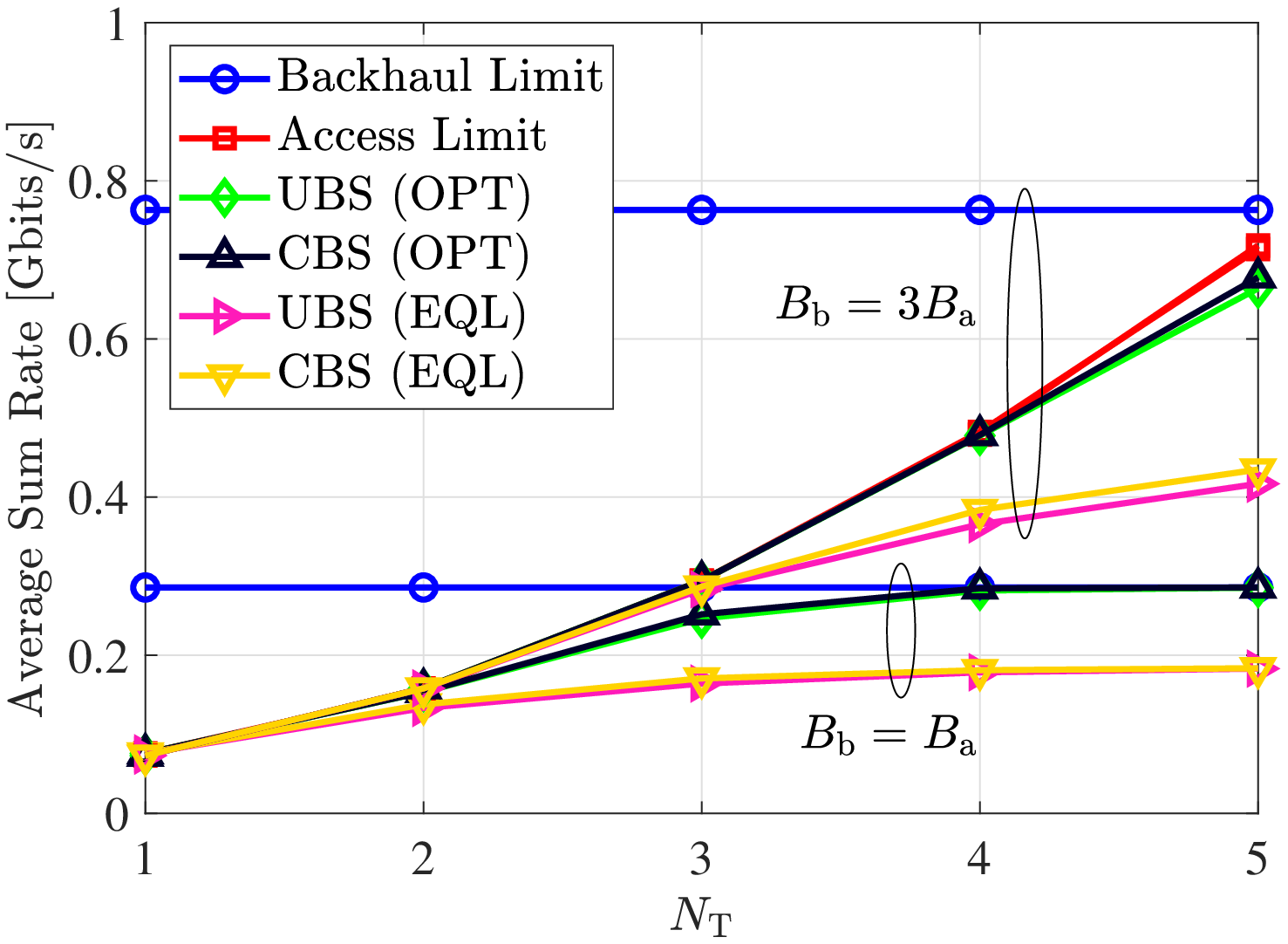}}
	\subfloat[\label{fig_4_4_3b}]{\includegraphics[width=0.5\linewidth]{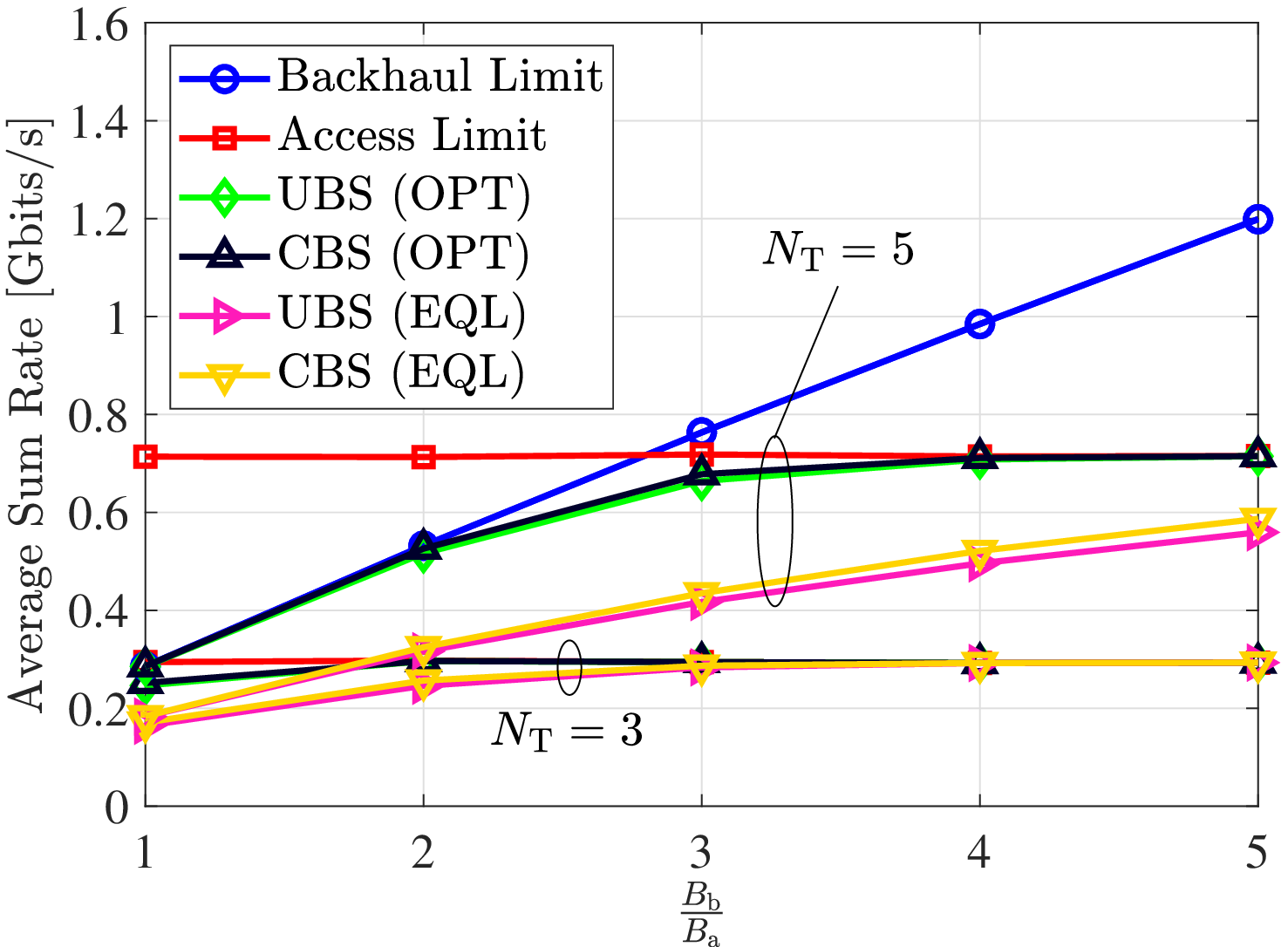}}
	\caption{Average sum rate performance of optimal UBS and optimal CBS policies for $K_{\rm{b}}=10^{-2}$ and $\lambda=1$ UE/Cell: (a) versus the total number of tiers $N_{\rm{T}}$; (b) versus the bandwidth ratio $\frac{B_{\rm{b}}}{B_{\rm{a}}}$.}
	\label{fig_4_4_3}
\end{figure}

For completeness, the average sum rate performance versus the number of tiers $N_{\rm{T}}$ is presented in Fig.~\ref{fig_4_4_3a}; for $K_{\rm{b}}=10^{-2}$ and $\lambda=1$ UE/Cell. The effect of changing the backhaul bandwidth is also studied. For both cases of $B_{\rm{b}}=B_{\rm{a}}$ and $B_{\rm{b}}=3B_{\rm{a}}$, by increasing $N_{\rm{T}}$, performance gains of UBS-OPT and CBS-OPT with respect to UBS-EQL and CBS-EQL grow. In the case of $B_{\rm{b}}=B_{\rm{a}}$, backhaul is the main bottleneck of the end-to-end performance when deploying super cells with $N_{\rm{T}}\geq3$. In this case, both optimal algorithms fully exploit the limited capacity of the bottleneck backhaul link as Fig.~\ref{fig_4_4_3a} shows. Increasing the bandwidth to $B_{\rm{b}}=3B_{\rm{a}}$ provides adequate backhaul capacity and thus the access system becomes the major bottleneck. Again, the optimal \ac{UBS} and optimal \ac{CBS} exhibit a superior performance by achieving the maximum rate limit of the network. For the same set of parameters, the average sum rate is plotted in Fig.~\ref{fig_4_4_3b} against the backhaul bandwidth normalized by the bandwidth of the access system, $\frac{B_{\rm{b}}}{B_{\rm{a}}}$.

\section{Opportunistic Power Control}\label{sec_4_3}
The optical power of backhaul \acp{LED} is opportunistically reduced with an incentive to enhance the \ac{PE} of the backhaul system while maintaining the sum rate performance. A \ac{FPC} strategy is proposed, whereby the transmission power in each backhaul branch is set to a constant operating point. This is a onetime design strategy, meaning that once the set point is chosen, it remains the same for the entire backhaul branch. This greatly simplifies the implementation complexity when applying \ac{FPC} to multi-tier super cells. However, an improperly low value of power can lead to a significant degradation in the network sum rate because of its impact on the capacity of the backhaul system. To reach a practical means to fix the backhaul power, three main schemes are put forward: \ac{MSPC}, \ac{ASPC} and \ac{ARPC}. The performance of a given branch of the super cell depends on the overall rate of the corresponding bottleneck backhaul link. To prevent a backhaul bottleneck for the $k$th branch $\forall k\in\mathcal{T}_1$, the following condition needs to be satisfied:
\begin{equation}\label{eq_4_3_6}
\mathcal{R}_{{\rm{b}}_k} \geq \sum_{i\in\mathcal{L}_k}\mathcal{R}_{{\rm{a}}_i}.
\end{equation}
The following analysis focuses on the design of the backhaul power control coefficient $K_{\rm{b}}$ based on the rate requirement of the bottleneck link\footnote{For the $k$th branch of the backhaul network, a feasible set is defined by $\mathcal{R}_{{\rm{b}}_i} \geq \sum_{j\in\mathcal{L}_i}\mathcal{R}_{{\rm{a}}_j}$, through the system of $N_{\rm{BS}}$ inequalities for all $\text{BS}_i$ $\forall i\in\mathcal{L}_k$. Fulfilling the rate requirement of the bottleneck link ${\rm{b}}_k$ by \eqref{eq_4_3_6} automatically guarantees validating the remaining inequalities for higher tiers.}. The minimum value of $K_{\rm{b}}$ is denoted by $K_{\rm{b,min}}$.

\subsection{Proposed Schemes}\label{sec_4_3_1}
\subsubsection{\ac{MSPC}}
The first criterion is to adjust the backhaul power in response to the maximum sum rate of the access system. The bounds of the access sum rate are related to those of the access \ac{SINR} by noting that $\mathcal{R}_{{\rm{a}}_i}=\frac{1}{M_i}\sum_{u\in\mathcal{U}_i}\mathcal{R}_{\rm{a}}(\gamma_u)$ based on \eqref{eq_3_2_12a}, where $\mathcal{R}_{\rm{a}}(\gamma_u)=\xi_{\rm{a}}B_{\rm{a}}\log_2(1+\gamma_u)$ are $M_i$ \ac{i.i.d.} random variables. By using \eqref{eq_4_1_5}, it follows that $\mathcal{R}_{\rm{min}}\leq\mathcal{R}_{\rm{a}}(\gamma_u)\leq\mathcal{R}_{\rm{max}}$ where $\mathcal{R}_{\rm{min}}=\xi_{\rm{a}}B_{\rm{a}}\log_2(1+\gamma_{\rm{min}})$ and $\mathcal{R}_{\rm{max}}=\xi_{\rm{a}}B_{\rm{a}}\log_2(1+\gamma_{\rm{max}})$ in which $\gamma_{\rm{min}}$ and $\gamma_{\rm{max}}$ are available in \eqref{eq_4_1_6}. Hence, $\mathcal{R}_{{\rm{a}}_i}$ is a bounded random variable such that:
\begin{equation}\label{eq_Proof_Lem_4_5_1}
\frac{1}{M_i}\sum_{u\in\mathcal{U}_i}\mathcal{R}_{\rm{min}}\leq\mathcal{R}_{{\rm{a}}_i}\leq\frac{1}{M_i}\sum_{u\in\mathcal{U}_i}\mathcal{R}_{\rm{max}},
\end{equation}
which then results in:
\begin{equation}\label{eq_Lem_4_5_1}
\mathcal{R}_{\rm{min}}\leq\mathcal{R}_{{\rm{a}}_i}\leq\mathcal{R}_{\rm{max}},
\end{equation}
since $|\mathcal{U}_i|=M_i$. The associated \ac{MSPC} ratio is derived in Proposition~\ref{Proposition_4_3}.
\begin{proposition}\label{Proposition_4_3}
	The minimum power control coefficient for $\text{b}_k$ based on \ac{MSPC} is given by:
	\begin{equation}\label{eq_Prop_4_3_1}
	K_{\rm{b,min}} = \frac{\left(1+\gamma_{\rm{max}}\right)^{\zeta^{-1}N_{\rm{BS}}}-1}{\gamma_{\rm{b}}}.
	\end{equation}
\end{proposition}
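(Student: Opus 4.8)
The plan is to turn the no-bottleneck requirement \eqref{eq_4_3_6} into a lower bound on $K_{\rm{b}}$ by invoking the defining principle of \ac{MSPC}, namely that the backhaul power is sized to withstand the \emph{worst-case} aggregate access demand of the branch. First I would rewrite the bottleneck-link capacity. Combining \eqref{eq_3_2_12b} with the common-SNR simplification \eqref{eq_4_4_1temp} gives $\mathcal{R}_{{\rm{b}}_k}=\xi_{\rm{b}}B_{\rm{b}}\log_2(1+K_{\rm{b}}\gamma_{\rm{b}})$, which is strictly increasing in $K_{\rm{b}}$; this monotonicity is what makes the smallest feasible power coefficient well defined and attained at the boundary of the constraint.

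Next I would upper bound the right-hand side of \eqref{eq_4_3_6}. Since the bottleneck link $\text{b}_k$ carries the traffic of every \ac{BS} in its branch, $|\mathcal{L}_k|=N_{\rm{BS}}$, and the per-cell bound $\mathcal{R}_{{\rm{a}}_i}\leq\mathcal{R}_{\rm{max}}$ from \eqref{eq_Lem_4_5_1} yields
\[
\sum_{i\in\mathcal{L}_k}\mathcal{R}_{{\rm{a}}_i}\leq N_{\rm{BS}}\mathcal{R}_{\rm{max}}=N_{\rm{BS}}\,\xi_{\rm{a}}B_{\rm{a}}\log_2(1+\gamma_{\rm{max}}).
\]
The \ac{MSPC} scheme provisions the backhaul precisely against this maximal demand, so that \eqref{eq_4_3_6} is enforced uniformly over all \ac{UE} placements. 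Substituting the two rate expressions, the governing feasibility condition reads $\xi_{\rm{b}}B_{\rm{b}}\log_2(1+K_{\rm{b}}\gamma_{\rm{b}})\geq N_{\rm{BS}}\,\xi_{\rm{a}}B_{\rm{a}}\log_2(1+\gamma_{\rm{max}})$.

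Finally I would solve for $K_{\rm{b}}$ by monotone inversion. Dividing through by $\xi_{\rm{a}}B_{\rm{a}}$ and using the effective bandwidth ratio $\zeta$ of \eqref{eq5_sec3}, the inequality becomes $\log_2(1+K_{\rm{b}}\gamma_{\rm{b}})\geq\zeta^{-1}N_{\rm{BS}}\log_2(1+\gamma_{\rm{max}})$. As $\log_2$ is strictly increasing, exponentiating both sides gives $1+K_{\rm{b}}\gamma_{\rm{b}}\geq(1+\gamma_{\rm{max}})^{\zeta^{-1}N_{\rm{BS}}}$, and isolating $K_{\rm{b}}$ delivers the claimed $K_{\rm{b,min}}$, which is the smallest $K_{\rm{b}}$ satisfying the constraint by the monotonicity noted above. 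The manipulations are entirely routine; the only substantive modeling step—and hence the point I would emphasize—is the identification of \ac{MSPC} with the choice $\mathcal{R}_{{\rm{a}}_i}=\mathcal{R}_{\rm{max}}$ for all $i$, i.e. replacing the random per-cell access rate by its deterministic supremum from \eqref{eq_Lem_4_5_1}, which is exactly what guarantees a bottleneck-free branch regardless of the user geometry.
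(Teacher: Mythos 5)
Your proposal is correct and follows essentially the same route as the paper's proof: replace each $\mathcal{R}_{{\rm{a}}_i}$ on the right-hand side of \eqref{eq_4_3_6} by its upper bound $\mathcal{R}_{\rm{max}}$ from \eqref{eq_Lem_4_5_1}, use $|\mathcal{L}_k|=N_{\rm{BS}}$, and invert the resulting inequality for $K_{\rm{b}}$ via the bandwidth ratio $\zeta$. The extra remarks on monotonicity of the backhaul rate in $K_{\rm{b}}$ are a welcome but inessential elaboration of the paper's closing sentence.
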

\begin{proof}
	On the \ac{RHS} of \eqref{eq_4_3_6}, $\mathcal{R}_{{\rm{a}}_i}$ is replaced by its upper limit from \eqref{eq_Lem_4_5_1}:
	\begin{equation}\label{eq_Proof_Prop_4_3_1}
	\xi_{\rm{b}}B_{\rm{b}}\log_2(1+K_{\rm{b}}\gamma_{\rm{b}}) \geq \sum_{i\in\mathcal{L}_k}\mathcal{R}_{\rm{max}} = N_{\rm{BS}}\xi_{\rm{a}}B_{\rm{a}}\log_2(1+\gamma_{\rm{max}}).
	\end{equation}
	Note that $|\mathcal{L}_k|=N_{\rm{BS}}$ $\forall k\in\mathcal{T}_1$. Expressing the inequality in \eqref{eq_Proof_Prop_4_3_1} in terms of $K_{\rm{b}}$ gives rise to:
	\begin{equation}\label{eq_Proof_Prop_4_3_3}
	K_{\rm{b}} \geq \frac{\left(1+\gamma_{\rm{max}}\right)^{\zeta^{-1}N_{\rm{BS}}}-1}{\gamma_{\rm{b}}}.
	\end{equation}
	The minimum value of $K_{\rm{b}}$ is readily given by the \ac{RHS} of \eqref{eq_Proof_Prop_4_3_3}, which is the desired result.
\end{proof}

\subsubsection{\ac{ASPC}}
The second criterion is to allocate power to the backhaul system so as to satisfy the achievable rate corresponding to the statistical average of the downlink \ac{SINR} over the area covered by each attocell. The average \ac{SINR} of the access system is given by Lemma~\ref{Lemma_4_2}. The \ac{ASPC} ratio is then derived in Proposition~\ref{Proposition_4_2}.

\begin{lemma}\label{Lemma_4_2}
	The average downlink \ac{SINR} is calculated by:
	\begin{equation}\label{eq_Lem_4_2_1}
	\bar{\gamma}_{\rm{a}} = \frac{\gamma_{\rm{min}}+\gamma_{\rm{max}}}{2}+
	\frac{2}{\pi R_{\rm{e}}^2}\int\limits_{\gamma_{\rm{min}}}^{\gamma_{\rm{max}}}\int\limits_{0}^{R_{\rm{e}}}\arcsin^\dagger\left(\mathcal{Z}(r,\gamma)\right)rdrd\gamma.
	\end{equation}
\end{lemma}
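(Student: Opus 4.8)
The plan is to recognize that $\bar{\gamma}_{\rm{a}}$ is simply the mean $\mathbb{E}[\gamma_u]$ of the downlink \ac{SINR} and to evaluate it directly from the \ac{CDF} already established in \eqref{eq_4_1_2}, without ever forming the \ac{PDF}. The key observation is that differentiating the integral expression in \eqref{eq_4_1_2} to obtain a density would be cumbersome, whereas the boundedness of $\gamma_u$ recorded in \eqref{eq_4_1_5} makes the \emph{tail-integral} (layer-cake) representation of the mean the natural tool. Concretely, for a variable supported on $[\gamma_{\rm{min}},\gamma_{\rm{max}}]$ I would use
\begin{equation}
\mathbb{E}[\gamma_u] = \gamma_{\rm{min}} + \int_{\gamma_{\rm{min}}}^{\gamma_{\rm{max}}}\bigl(1-\mathbb{P}[\gamma_u\leq\gamma]\bigr)\,d\gamma,
\end{equation}
which follows from $\mathbb{E}[X]=\int_0^\infty\mathbb{P}[X>t]\,dt$ by splitting the range at $\gamma_{\rm{min}}$ and $\gamma_{\rm{max}}$ and using $\mathbb{P}[\gamma_u>t]=1$ for $t<\gamma_{\rm{min}}$ and $\mathbb{P}[\gamma_u>t]=0$ for $t>\gamma_{\rm{max}}$.

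First I would substitute the explicit \ac{CDF} from \eqref{eq_4_1_2} into the complementary term, yielding
\begin{equation}
1-\mathbb{P}[\gamma_u\leq\gamma] = \frac{1}{2} + \frac{2}{\pi R_{\rm{e}}^2}\int_0^{R_{\rm{e}}}\arcsin^\dagger\bigl(\mathcal{Z}(r,\gamma)\bigr)\,r\,dr.
\end{equation}
Then I would integrate over $\gamma\in[\gamma_{\rm{min}},\gamma_{\rm{max}}]$: the constant $\tfrac12$ contributes $\tfrac12(\gamma_{\rm{max}}-\gamma_{\rm{min}})$, which recombines with the leading $\gamma_{\rm{min}}$ to give precisely the $\tfrac12(\gamma_{\rm{min}}+\gamma_{\rm{max}})$ term of \eqref{eq_Lem_4_2_1}, while the remaining term appears already in the claimed iterated-integral form (inner over $r$, outer over $\gamma$), with no reordering required.

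The computation is in essence a single substitution followed by elementary bookkeeping, so I do not expect a genuine obstacle; the only step demanding care is the recombination of $\gamma_{\rm{min}}$ with the $\tfrac12(\gamma_{\rm{max}}-\gamma_{\rm{min}})$ contribution, and the tacit passage of the $\gamma$-integration through the $r$-integral. The latter is immediate since $\arcsin^\dagger$ is bounded in magnitude by $\tfrac{\pi}{2}$ by \eqref{eq_4_1_4} and the domain $[\gamma_{\rm{min}},\gamma_{\rm{max}}]\times[0,R_{\rm{e}}]$ is compact, so the integrand is absolutely integrable and the iterated integral is well defined. This delivers exactly \eqref{eq_Lem_4_2_1}.
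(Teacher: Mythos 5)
Your proposal is correct and follows essentially the same route as the paper's proof: both use the tail-integral identity $\mathbb{E}[X]=x_{\rm{min}}+\int_{x_{\rm{min}}}^{x_{\rm{max}}}\mathbb{P}[X>x]\,dx$ for the bounded variable $\gamma_u$, substitute the complementary \ac{CDF} from \eqref{eq_4_1_2}, and recombine the constant $\tfrac12$ contribution with $\gamma_{\rm{min}}$ to obtain $\tfrac12(\gamma_{\rm{min}}+\gamma_{\rm{max}})$. Your added remark on the absolute integrability justifying the iterated integral is a small bonus the paper omits.
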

\begin{proof}
	Note that different \acp{UE} have the same average rate since $\gamma_u$ $\forall u$ are \ac{i.i.d.}. The expected value of a bounded random variable $x_{\rm{min}}\leq X\leq x_{\rm{min}}$ is given by:
	\begin{equation}\label{eq_Proof_Lem_4_2_3}
	\mathbb{E}[X]=x_{\rm{min}}+\int_{x_{\rm{min}}}^{x_{\rm{max}}}\mathbb{P}[X>x]dx.
	\end{equation}
	The average downlink \ac{SINR} is derived as:
	\begin{equation}\label{eq_Proof_Lem_4_2_2}
	\bar{\gamma}_{\rm{a}} = \gamma_{\rm{min}}+ \underbrace{\int\limits_{\gamma_{\rm{min}}}^{\gamma_{\rm{max}}}\mathbb{P}\left[\gamma_u>x\right]dx}_{I_1}.
	\end{equation}
	By using the \ac{CDF} of $\gamma_u$ in \eqref{eq_4_1_2}, $I_1$ is evaluated as follows:
	\begin{equation}\label{eq_Proof_Lem_4_2_1}
	I_1 = \int\limits_{\gamma_{\rm{min}}}^{\gamma_{\rm{max}}}\left(1-\mathbb{P}\left[\gamma_u\leq\gamma\right]\right)d\gamma = \frac{\gamma_{\rm{max}}-\gamma_{\rm{min}}}{2}+
	\frac{2}{\pi R_{\rm{e}}^2}\int\limits_{\gamma_{\rm{min}}}^{\gamma_{\rm{max}}}\int\limits_{0}^{R_{\rm{e}}}\arcsin^\dagger\left(\mathcal{Z}(r,\gamma)\right)rdrd\gamma.
	\end{equation}
	Substituting $I_1$ in \eqref{eq_Proof_Lem_4_2_2} with \eqref{eq_Proof_Lem_4_2_1} results in \eqref{eq_Lem_4_2_1}.
\end{proof}
\begin{proposition}\label{Proposition_4_2}
	The minimum power control coefficient for $\text{b}_k$ based on \ac{ASPC} is given by:
	\begin{equation}\label{eq_Prop_4_2_1}
	K_{\rm{b,min}} = \frac{\left(1+\bar{\gamma}_{\rm{a}}\right)^{\zeta^{-1}N_{\rm{BS}}}-1}{\gamma_{\rm{b}}},
	\end{equation}
	where $\bar{\gamma}_{\rm{a}}$ is the average downlink \ac{SINR} given by Lemma~\ref{Lemma_4_2}.
\end{proposition}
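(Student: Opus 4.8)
The plan is to follow exactly the template set by the proof of Proposition~\ref{Proposition_4_3}, replacing the worst-case access rate by the rate evaluated at the average downlink \ac{SINR}. Concretely, I would begin from the no-bottleneck condition \eqref{eq_4_3_6}, namely $\mathcal{R}_{{\rm{b}}_k}\geq\sum_{i\in\mathcal{L}_k}\mathcal{R}_{{\rm{a}}_i}$, which must hold to prevent a backhaul bottleneck on the $k$th branch.

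The defining feature of \ac{ASPC}, as stated in the scheme description, is that power is allocated so as to meet the achievable rate associated with the statistical average of the downlink \ac{SINR} within each attocell. Since the $\gamma_u$ are \ac{i.i.d.} with common mean $\bar{\gamma}_{\rm{a}}$ supplied by Lemma~\ref{Lemma_4_2}, every \ac{BS} is assigned the same representative access rate $\xi_{\rm{a}}B_{\rm{a}}\log_2(1+\bar{\gamma}_{\rm{a}})$, independent of $M_i$. Substituting this value for $\mathcal{R}_{{\rm{a}}_i}$ on the right-hand side of \eqref{eq_4_3_6} and using $|\mathcal{L}_k|=N_{\rm{BS}}$ collapses the sum into a single factor, giving $\xi_{\rm{b}}B_{\rm{b}}\log_2(1+K_{\rm{b}}\gamma_{\rm{b}})\geq N_{\rm{BS}}\xi_{\rm{a}}B_{\rm{a}}\log_2(1+\bar{\gamma}_{\rm{a}})$, where I have already inserted $\gamma_{{\rm{b}}_k}=K_{\rm{b}}\gamma_{\rm{b}}$ from \eqref{eq_4_4_1temp} on the left.

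From here the argument is purely algebraic. I would divide through by $\xi_{\rm{b}}B_{\rm{b}}$, recognize the effective bandwidth ratio $\zeta^{-1}=\frac{\xi_{\rm{a}}B_{\rm{a}}}{\xi_{\rm{b}}B_{\rm{b}}}$ from \eqref{eq5_sec3}, and absorb the resulting factor $\zeta^{-1}N_{\rm{BS}}$ as an exponent on $(1+\bar{\gamma}_{\rm{a}})$ inside the logarithm. Removing the monotone $\log_2$ yields $1+K_{\rm{b}}\gamma_{\rm{b}}\geq(1+\bar{\gamma}_{\rm{a}})^{\zeta^{-1}N_{\rm{BS}}}$, and solving for $K_{\rm{b}}$ isolates the stated lower bound; the minimum $K_{\rm{b,min}}$ is then read off as the right-hand side, which is precisely \eqref{eq_Prop_4_2_1}.

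I do not expect a genuine analytical obstacle, as the derivation mirrors the \ac{MSPC} case line for line. The only point requiring care is conceptual rather than computational: the \ac{ASPC} target is the rate \emph{evaluated at} the mean \ac{SINR}, $\log_2(1+\bar{\gamma}_{\rm{a}})$, rather than the mean of the per-user rate $\mathbb{E}[\log_2(1+\gamma_u)]$; these differ by Jensen's inequality, so I would be explicit that the substitution reflects the design definition of the scheme and invokes Lemma~\ref{Lemma_4_2} solely to furnish $\bar{\gamma}_{\rm{a}}$ in closed form.
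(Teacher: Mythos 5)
Your proposal is correct and follows essentially the same route as the paper: substitute the representative access rate $\xi_{\rm{a}}B_{\rm{a}}\log_2(1+\bar{\gamma}_{\rm{a}})$ for each $\mathcal{R}_{{\rm{a}}_i}$ in \eqref{eq_4_3_6}, use $|\mathcal{L}_k|=N_{\rm{BS}}$, and solve for $K_{\rm{b}}$ via the bandwidth ratio $\zeta$. Your closing remark correctly identifies the one conceptual subtlety --- that \ac{ASPC} targets $\log_2(1+\mathbb{E}[\gamma_u])$ rather than $\mathbb{E}[\log_2(1+\gamma_u)]$ --- which is precisely the distinction the paper draws between \ac{ASPC} and \ac{ARPC}.
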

\begin{proof}\label{Proof_Prop_4_2}
	In the case of \ac{ASPC}, the inequality in \eqref{eq_4_3_6} changes to:
	\begin{equation}\label{eq_Proof_Prop_4_2_1}
	\xi_{\rm{b}}B_{\rm{b}}\log_2(1+K_{\rm{b}}\gamma_{\rm{b}}) \geq \sum_{i\in\mathcal{L}_k}\xi_{\rm{a}}B_{\rm{a}}\log_2(1+\mathbb{E}[\gamma_u]),
	\end{equation}
	where $\mathbb{E}[\gamma_u]=\bar{\gamma}_{\rm{a}}$. It immediately follows that: 
	\begin{equation}\label{eq_Proof_Prop_4_2_2}
	K_{\rm{b}} \geq \frac{\left(1+\bar{\gamma}_{\rm{a}}\right)^{\zeta^{-1}N_{\rm{BS}}}-1}{\gamma_{\rm{b}}}.
	\end{equation}
	The \ac{RHS} of \eqref{eq_Proof_Prop_4_2_2} is, in fact, the minimum value that $K_{\rm{b}}$ can take and this concludes the proof.
\end{proof}

\subsubsection{\ac{ARPC}}
The third criterion for assigning power to the backhaul system takes into account the statistical average of the achievable rate for the access system over the area covered by each attocell. The average data rate of the access system is provided in Lemma~\ref{Lemma_4_1}. The \ac{ARPC} ratio is subsequently derived in Proposition~\ref{Proposition_4_1}.

\begin{lemma}\label{Lemma_4_1}
	The average achievable rate of the access system per attocell is calculated by:
	\begin{equation}\label{eq_Lem_4_1_1}
	\bar{\mathcal{R}}_{\rm{a}} = \frac{\mathcal{R}_{\rm{min}}+\mathcal{R}_{\rm{max}}}{2}+\frac{2\xi_{\rm{a}}B_{\rm{a}}}{\pi R_{\rm{e}}^2\ln2}\int\limits_{\gamma_{\rm{min}}}^{\gamma_{\rm{max}}}\int\limits_{0}^{R_{\rm{e}}}\frac{\arcsin^\dagger\left(\mathcal{Z}(r,\gamma)\right)r}{1+\gamma}drd\gamma.
	\end{equation}
\end{lemma}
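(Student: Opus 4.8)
The plan is to mirror the proof of Lemma~\ref{Lemma_4_2}, recognising that $\bar{\mathcal{R}}_{\rm{a}}=\mathbb{E}[\mathcal{R}_{\rm{a}}(\gamma_u)]$ with $\mathcal{R}_{\rm{a}}(\gamma_u)=\xi_{\rm{a}}B_{\rm{a}}\log_2(1+\gamma_u)$, and that this is a bounded random variable satisfying $\mathcal{R}_{\rm{min}}\leq\mathcal{R}_{\rm{a}}(\gamma_u)\leq\mathcal{R}_{\rm{max}}$, which follows by applying the increasing map $\mathcal{R}_{\rm{a}}(\cdot)$ to the \ac{SINR} bounds in \eqref{eq_4_1_5}, with $\mathcal{R}_{\rm{min}}=\xi_{\rm{a}}B_{\rm{a}}\log_2(1+\gamma_{\rm{min}})$ and $\mathcal{R}_{\rm{max}}=\xi_{\rm{a}}B_{\rm{a}}\log_2(1+\gamma_{\rm{max}})$. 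Since all $\gamma_u$ are \ac{i.i.d.}, every \ac{UE} shares the same mean rate, so it suffices to evaluate this expectation for a single representative \ac{UE}. First I would invoke the tail identity used in \eqref{eq_Proof_Lem_4_2_3}, now applied to $X=\mathcal{R}_{\rm{a}}(\gamma_u)$, giving $\bar{\mathcal{R}}_{\rm{a}}=\mathcal{R}_{\rm{min}}+\int_{\mathcal{R}_{\rm{min}}}^{\mathcal{R}_{\rm{max}}}\mathbb{P}[\mathcal{R}_{\rm{a}}(\gamma_u)>R]\,dR$.

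The crucial step that distinguishes this from Lemma~\ref{Lemma_4_2} is the nonlinear transformation between rate and \ac{SINR}. Because $\mathcal{R}_{\rm{a}}(\gamma)=\xi_{\rm{a}}B_{\rm{a}}\log_2(1+\gamma)$ is strictly increasing, the event $\{\mathcal{R}_{\rm{a}}(\gamma_u)>R\}$ coincides with $\{\gamma_u>\gamma\}$ under the substitution $R=\xi_{\rm{a}}B_{\rm{a}}\log_2(1+\gamma)$. I would change the variable of integration from $R$ to $\gamma$, carrying the Jacobian $dR=\frac{\xi_{\rm{a}}B_{\rm{a}}}{(1+\gamma)\ln2}\,d\gamma$ and mapping the limits $\mathcal{R}_{\rm{min}}\mapsto\gamma_{\rm{min}}$ and $\mathcal{R}_{\rm{max}}\mapsto\gamma_{\rm{max}}$. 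This yields $\bar{\mathcal{R}}_{\rm{a}}=\mathcal{R}_{\rm{min}}+\frac{\xi_{\rm{a}}B_{\rm{a}}}{\ln2}\int_{\gamma_{\rm{min}}}^{\gamma_{\rm{max}}}\frac{\mathbb{P}[\gamma_u>\gamma]}{1+\gamma}\,d\gamma$, and it is precisely this Jacobian that produces the factors $\frac{\xi_{\rm{a}}B_{\rm{a}}}{\ln2}$ and $\frac{1}{1+\gamma}$ appearing in \eqref{eq_Lem_4_1_1}.

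Next I would substitute the complementary \ac{CDF} obtained from \eqref{eq_4_1_2}, namely $\mathbb{P}[\gamma_u>\gamma]=\frac{1}{2}+\frac{2}{\pi R_{\rm{e}}^2}\int_0^{R_{\rm{e}}}\arcsin^\dagger(\mathcal{Z}(r,\gamma))\,r\,dr$, and split the integral into two pieces. The constant $\frac{1}{2}$ piece integrates in closed form via $\int\frac{d\gamma}{1+\gamma}=\ln(1+\gamma)$, giving $\frac{\xi_{\rm{a}}B_{\rm{a}}}{2\ln2}[\ln(1+\gamma_{\rm{max}})-\ln(1+\gamma_{\rm{min}})]=\frac{\mathcal{R}_{\rm{max}}-\mathcal{R}_{\rm{min}}}{2}$ after recalling the definitions of $\mathcal{R}_{\rm{min}}$ and $\mathcal{R}_{\rm{max}}$; adding the leading $\mathcal{R}_{\rm{min}}$ collapses this to $\frac{\mathcal{R}_{\rm{min}}+\mathcal{R}_{\rm{max}}}{2}$, matching the first term of \eqref{eq_Lem_4_1_1}. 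The remaining double-integral piece directly supplies the second term once the constant $\frac{2\xi_{\rm{a}}B_{\rm{a}}}{\pi R_{\rm{e}}^2\ln2}$ is collected.

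I do not anticipate a deep obstacle, since the argument is a monotone-transform analogue of Lemma~\ref{Lemma_4_2}; the only points requiring care are the legitimacy of the change of variables, guaranteed by the strict monotonicity and continuity of $\mathcal{R}_{\rm{a}}(\gamma)$ on $[\gamma_{\rm{min}},\gamma_{\rm{max}}]$, and the interchange of the order of integration in the $(r,\gamma)$ double integral, justified by Fubini's theorem given the boundedness of $\arcsin^\dagger$ and the finite integration domain. The bookkeeping of recombining the constant piece into $\frac{\mathcal{R}_{\rm{min}}+\mathcal{R}_{\rm{max}}}{2}$ is the most error-prone clerical step rather than a conceptual difficulty.
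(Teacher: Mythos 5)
Your proposal is correct and follows essentially the same route as the paper's own proof: the tail-integral identity for a bounded random variable, the monotone change of variables $x=\xi_{\rm{a}}B_{\rm{a}}\log_2(1+\gamma)$ producing the $\frac{\xi_{\rm{a}}B_{\rm{a}}}{(1+\gamma)\ln 2}$ Jacobian, and substitution of the complementary CDF from \eqref{eq_4_1_2} with the constant $\tfrac{1}{2}$ piece recombining with $\mathcal{R}_{\rm{min}}$ into $\tfrac{\mathcal{R}_{\rm{min}}+\mathcal{R}_{\rm{max}}}{2}$. No gaps; the bookkeeping checks out.
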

\begin{proof}
	By using \eqref{eq_3_2_12a}, the average access system rate for $\text{BS}_i$ is obtained as:
	\begin{equation}\label{eq_Proof_Lem_4_1_1}
		\mathbb{E}\left[\mathcal{R}_{{\rm{a}}_i}\right] = \xi_{\rm{a}}B_{\rm{a}}\,\mathbb{E}\left[\log_2(1+\gamma_u)\right].
	\end{equation}
	Note that $\gamma_u$ $\forall u\in\mathcal{U}_i$ are \ac{i.i.d.}, thus $\mathbb{E}\left[\mathcal{R}_{{\rm{a}}_i}\right]=\bar{\mathcal{R}}_{\rm{a}}$ $\forall i$. Based on \eqref{eq_4_1_2} and \eqref{eq_Proof_Lem_4_2_3}, the expectation in \eqref{eq_Proof_Lem_4_1_1} is therefore expanded as follows:
	\begin{equation}\label{eq_Proof_Lem_4_1_4}
	\bar{\mathcal{R}}_{\rm{a}} = \mathcal{R}_{\rm{min}}+ \underbrace{\int\limits_{\mathcal{R}_{\rm{min}}}^{\mathcal{R}_{\rm{max}}}\mathbb{P}\left[\xi_{\rm{a}}B_{\rm{a}}\log_2(1+\gamma_u)>x\right]dx}_{I_2},
	\end{equation}
	where:
	\begin{subequations}\label{eq_Proof_Lem_4_1_2}
		\begin{align}
		I_2 &= \frac{\xi_{\rm{a}}B_{\rm{a}}}{\ln2}\int\limits_{\gamma_{\rm{min}}}^{\gamma_{\rm{max}}}\left(1-\mathbb{P}\left[\gamma_u\leq\gamma\right]\right)\frac{d\gamma}{1+\gamma},\label{eq_Proof_Lem_4_1_2c}\\
		&= \frac{\mathcal{R}_{\rm{max}}-\mathcal{R}_{\rm{min}}}{2}+\frac{2\xi_{\rm{a}}B_{\rm{a}}}{\pi R_{\rm{e}}^2\ln2}\int\limits_{\gamma_{\rm{min}}}^{\gamma_{\rm{max}}}\int\limits_{0}^{R_{\rm{e}}}\frac{\arcsin^\dagger\left(\mathcal{Z}(r,\gamma)\right)r}{1+\gamma}drd\gamma,\label{eq_Proof_Lem_4_1_2e}
		\end{align}
	\end{subequations}
	The substitution $x=\xi_{\rm{a}}B_{\rm{a}}\log_2(1+\gamma)$ is used to arrive at \eqref{eq_Proof_Lem_4_1_2c}, which does not alter the inequality under a probability measure as the logarithm is a monotonically increasing function. Replacing $I_2$ in \eqref{eq_Proof_Lem_4_1_4} by \eqref{eq_Proof_Lem_4_1_2e} and simplifying leads to \eqref{eq_Lem_4_1_1}.
\end{proof}

\begin{proposition}\label{Proposition_4_1}
	The minimum power control coefficient for $\text{b}_k$ based on \ac{ARPC} is given by:
	\begin{equation}\label{eq_Prop_4_1_1}
	K_{\rm{b,min}} = \frac{\exp\left(\frac{\ln2}{\xi_{\rm{b}}B_{\rm{b}}}N_{\rm{BS}}\bar{\mathcal{R}}_{\rm{a}}\right)-1}{\gamma_{\rm{b}}},
	\end{equation}
	where $\bar{\mathcal{R}}_{\rm{a}}$ is the average achievable rate over an attocell, given by Lemma~\ref{Lemma_4_1}.
\end{proposition}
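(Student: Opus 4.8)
The plan is to mirror the structure used in the proofs of Proposition~\ref{Proposition_4_3} and Proposition~\ref{Proposition_4_2}: start from the no-bottleneck condition \eqref{eq_4_3_6}, substitute an appropriate deterministic surrogate for the right-hand side, and then invert the resulting inequality for $K_{\rm{b}}$. The only structural difference here is the choice of surrogate. Rather than bounding $\mathcal{R}_{{\rm{a}}_i}$ by its maximum (as in \ac{MSPC}) or replacing $\gamma_u$ by its mean inside the logarithm (as in \ac{ASPC}), I would replace each per-cell access sum rate by its statistical average $\bar{\mathcal{R}}_{\rm{a}}$ supplied by Lemma~\ref{Lemma_4_1}.

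First I would note that the downlink \ac{SINR} random variables $\gamma_u$ are \ac{i.i.d.} across \acp{UE}, so by the argument in Lemma~\ref{Lemma_4_1} every attocell shares the common mean $\mathbb{E}[\mathcal{R}_{{\rm{a}}_i}]=\bar{\mathcal{R}}_{\rm{a}}$. Since $|\mathcal{L}_k|=N_{\rm{BS}}$ for any $k\in\mathcal{T}_1$, the averaged right-hand side of \eqref{eq_4_3_6} collapses to $N_{\rm{BS}}\bar{\mathcal{R}}_{\rm{a}}$. On the left-hand side I would use the bottleneck-link rate $\mathcal{R}_{{\rm{b}}_k}=\xi_{\rm{b}}B_{\rm{b}}\log_2(1+K_{\rm{b}}\gamma_{\rm{b}})$ from \eqref{eq_3_2_12b} together with the common power-coefficient model \eqref{eq_4_4_1temp}, giving the scalar inequality
\begin{equation*}
\xi_{\rm{b}}B_{\rm{b}}\log_2(1+K_{\rm{b}}\gamma_{\rm{b}}) \geq N_{\rm{BS}}\bar{\mathcal{R}}_{\rm{a}}.
\end{equation*}
Dividing by $\xi_{\rm{b}}B_{\rm{b}}$, exponentiating to base two, and rewriting $2^{x}=\exp(x\ln 2)$ yields $1+K_{\rm{b}}\gamma_{\rm{b}}\geq\exp\!\left(\frac{\ln 2}{\xi_{\rm{b}}B_{\rm{b}}}N_{\rm{BS}}\bar{\mathcal{R}}_{\rm{a}}\right)$, and isolating $K_{\rm{b}}$ produces exactly \eqref{eq_Prop_4_1_1}, whose right-hand side is then read off as $K_{\rm{b,min}}$.

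There is no genuine analytical obstacle in the inversion step—it is routine algebra identical in form to the two preceding propositions. The conceptually important point, and the only place one could slip, is the choice of surrogate in the first step: because $\bar{\mathcal{R}}_{\rm{a}}=\mathbb{E}\!\left[\xi_{\rm{a}}B_{\rm{a}}\log_2(1+\gamma_u)\right]$ is the expectation of the per-user rate itself—rather than the rate $\xi_{\rm{a}}B_{\rm{a}}\log_2(1+\bar{\gamma}_{\rm{a}})$ evaluated at the mean \ac{SINR} used in \ac{ASPC}—one must invoke Lemma~\ref{Lemma_4_1} to supply $\bar{\mathcal{R}}_{\rm{a}}$ and not merely reuse $\bar{\gamma}_{\rm{a}}$ from Lemma~\ref{Lemma_4_2}. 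By Jensen's inequality these two surrogates differ, with the \ac{ARPC} choice yielding the smaller right-hand side and hence the tighter power requirement claimed here; ensuring that this distinction is respected is the essential (if modest) care the proof demands.
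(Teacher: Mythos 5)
Your proposal is correct and follows essentially the same route as the paper's own proof: replace the right-hand side of \eqref{eq_4_3_6} by its expectation $\mathbb{E}\bigl[\sum_{i\in\mathcal{L}_k}\mathcal{R}_{{\rm{a}}_i}\bigr]=N_{\rm{BS}}\bar{\mathcal{R}}_{\rm{a}}$ via Lemma~\ref{Lemma_4_1}, then invert the resulting scalar inequality for $K_{\rm{b}}$. Your closing remark on the Jensen-type distinction between averaging the rate (\ac{ARPC}) and evaluating the rate at the average \ac{SINR} (\ac{ASPC}) is a correct observation consistent with the ordering in \eqref{eq_4_3_4}, though the paper does not include it in the proof.
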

\begin{proof}\label{Proof_Prop_4_1}
	According to \ac{ARPC}, the \ac{RHS} of \eqref{eq_4_3_6} needs to be modified as follows:
	\begin{equation}\label{eq_Proof_Prop_4_1_1}
	\xi_{\rm{b}}B_{\rm{b}}\log_2(1+K_{\rm{b}}\gamma_{\rm{b}}) \geq \mathbb{E}\left[\sum_{i\in\mathcal{L}_k}\mathcal{R}_{{\rm{a}}_i}\right]=N_{\rm{BS}}\bar{\mathcal{R}}_{\rm{a}}.
	\end{equation}
	Rearranging the inequality in terms of $K_{\rm{b}}$ gives:
	\begin{equation}\label{eq_Proof_Prop_4_1_4}
	K_{\rm{b}} \geq \frac{\exp\left(\frac{\ln2}{\xi_{\rm{b}}B_{\rm{b}}}N_{\rm{BS}}\bar{\mathcal{R}}_{\rm{a}}\right)-1}{\gamma_{\rm{b}}}.
	\end{equation}
	The \ac{RHS} of \eqref{eq_Proof_Prop_4_1_4} represents the minimum allowed value of $K_{\rm{b}}$ and hence the proof is complete.
\end{proof}

\subsection{Probability of Backhaul Bottleneck Occurrence}\label{sec_4_3_2}
To gain insight into the power control performance, a metric called \ac{BBO} is defined as follows.
\begin{definition}\label{Definition_4_1}
	\ac{BBO} is a metric to measure the probability that the aggregate sum rate of the access system in a backhaul branch exceeds the capacity of the corresponding bottleneck link. Equivalently, it evaluates the probability that the condition in \eqref{eq_4_3_6} is violated.
\end{definition}
\noindent Mathematically, the \ac{BBO} probability for the $k$th branch $k\in\mathcal{T}_1$, is expressed by:
\begin{equation}\label{eq_4_3_3}
P_{\rm{BBO}} = \mathbb{P}\left[\sum_{i\in\mathcal{L}_k}\mathcal{R}_{{\rm{a}}_i}>\mathcal{R}_{{\rm{b}}_k}\right],
\end{equation}
where $\mathcal{R}_{{\rm{a}}_i}$ is a random variable that depends on the statistics of $\gamma_u$. There is no exact closed form solution for \eqref{eq_4_3_3} in terms of ordinary functions. Alternatively, a simple but tight analytical approximation is established in Theorem~\ref{Theorem_4_1} with the aid of Lemma~\ref{Lemma_4_3}. Note that $\mathcal{R}_{{\rm{a}}_i}=\frac{1}{M_i}\sum_{u\in\mathcal{U}_i}\mathcal{R}_{\rm{a}}(\gamma_u)$ where $\mathcal{R}_{\rm{a}}(\gamma_u)=\xi_{\rm{a}}B_{\rm{a}}\log_2(1+\gamma_u)$ are \ac{i.i.d.}. The mean of $\mathcal{R}_{\rm{a}}(\gamma_u)$ is readily given by Lemma~\ref{Lemma_4_1}. The variance of $\mathcal{R}_{\rm{a}}(\gamma_u)$ is determined in Lemma~\ref{Lemma_4_3}.
\begin{lemma}\label{Lemma_4_3}
	The variance of $\mathcal{R}_{\rm{a}}(\gamma_u)$ is given by:
	\begin{equation}\label{eq_Lem_4_3_1}
	\sigma_{\mathcal{R}_{\rm{a}}}^2=\mathbb{E}\!\left[\mathcal{R}_{\rm{a}}^2(\gamma_u)\right]-\mathbb{E}^2[\mathcal{R}_{\rm{a}}(\gamma_u)],
	\end{equation}
	where $\mathbb{E}\left[\mathcal{R}_{\rm{a}}(\gamma_u)\right]=\bar{\mathcal{R}}_{\rm{a}}$ and:
	\begin{equation}\label{eq_Lem_4_3_2}
	\mathbb{E}\!\left[\mathcal{R}_{\rm{a}}^2(\gamma_u)\right] = \frac{\mathcal{R}_{\rm{min}}^2+\mathcal{R}_{\rm{max}}^2}{2}+\frac{1}{\pi}\left(\frac{2\xi_{\rm{a}}B_{\rm{a}}}{R_{\rm{e}}\ln2}\right)^{\!\!2}\int\limits_{\gamma_{\rm{min}}}^{\gamma_{\rm{max}}}\int\limits_{0}^{R_{\rm{e}}}\frac{\ln(1+\gamma)}{1+\gamma}\arcsin^\dagger\left(\mathcal{Z}(r,\gamma)\right)rdrd\gamma.
	\end{equation}
\end{lemma}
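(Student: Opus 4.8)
The plan is to reduce the variance to a single second-moment computation and then reuse the tail-integral machinery already deployed in the proofs of Lemma~\ref{Lemma_4_2} and Lemma~\ref{Lemma_4_1}. Since $\mathcal{R}_{\rm{a}}(\gamma_u)=\xi_{\rm{a}}B_{\rm{a}}\log_2(1+\gamma_u)$ are \ac{i.i.d.} and the mean $\mathbb{E}[\mathcal{R}_{\rm{a}}(\gamma_u)]=\bar{\mathcal{R}}_{\rm{a}}$ is already supplied by Lemma~\ref{Lemma_4_1}, the decomposition \eqref{eq_Lem_4_3_1} is immediate once the raw second moment $\mathbb{E}[\mathcal{R}_{\rm{a}}^2(\gamma_u)]$ is in hand. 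So the entire task is to establish \eqref{eq_Lem_4_3_2}.

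First I would invoke the tail representation of the second moment of a nonnegative random variable, $\mathbb{E}[X^2]=\int_0^\infty 2x\,\mathbb{P}[X>x]\,dx$, which is the natural second-order analogue of the first-moment identity \eqref{eq_Proof_Lem_4_2_3}. Because $\mathcal{R}_{\rm{a}}(\gamma_u)$ is bounded with $\mathcal{R}_{\rm{min}}\leq\mathcal{R}_{\rm{a}}(\gamma_u)\leq\mathcal{R}_{\rm{max}}$, the survival function is identically $1$ on $[0,\mathcal{R}_{\rm{min}}]$ and vanishes beyond $\mathcal{R}_{\rm{max}}$, so that
\[
\mathbb{E}\!\left[\mathcal{R}_{\rm{a}}^2(\gamma_u)\right]=\mathcal{R}_{\rm{min}}^2+\int_{\mathcal{R}_{\rm{min}}}^{\mathcal{R}_{\rm{max}}}2x\,\mathbb{P}\!\left[\mathcal{R}_{\rm{a}}(\gamma_u)>x\right]dx.
\]
I would then apply the same monotone substitution $x=\xi_{\rm{a}}B_{\rm{a}}\log_2(1+\gamma)$ used to pass to \eqref{eq_Proof_Lem_4_1_2c}, which converts $\mathbb{P}[\mathcal{R}_{\rm{a}}(\gamma_u)>x]$ into $1-\mathbb{P}[\gamma_u\leq\gamma]$ and introduces the Jacobian $\frac{\xi_{\rm{a}}B_{\rm{a}}}{\ln2}\frac{1}{1+\gamma}$, producing an integrand proportional to $\frac{\ln(1+\gamma)}{1+\gamma}\bigl(1-\mathbb{P}[\gamma_u\leq\gamma]\bigr)$ over $[\gamma_{\rm{min}},\gamma_{\rm{max}}]$.

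Finally, substituting the \ac{CDF} \eqref{eq_4_1_2} splits this integral into a constant $\tfrac12$ piece and an $\arcsin^\dagger$ piece. For the former I would use the elementary antiderivative $\int\frac{\ln(1+\gamma)}{1+\gamma}\,d\gamma=\tfrac12\ln^2(1+\gamma)$, evaluate at the endpoints, and rewrite the resulting $\ln^2(1+\gamma_{\rm{max}})$ and $\ln^2(1+\gamma_{\rm{min}})$ through $\mathcal{R}_{\rm{max}}$ and $\mathcal{R}_{\rm{min}}$; recombining with the standalone $\mathcal{R}_{\rm{min}}^2$ term should collapse the boundary data into the symmetric $\frac{\mathcal{R}_{\rm{min}}^2+\mathcal{R}_{\rm{max}}^2}{2}$. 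The $\arcsin^\dagger$ piece carries the remaining double integral, and gathering the prefactors $2(\xi_{\rm{a}}B_{\rm{a}})^2/(\ln2)^2$ and $2/(\pi R_{\rm{e}}^2)$ yields the constant $\frac{1}{\pi}\bigl(\frac{2\xi_{\rm{a}}B_{\rm{a}}}{R_{\rm{e}}\ln2}\bigr)^2$, which is exactly \eqref{eq_Lem_4_3_2}. The main obstacle I anticipate is purely bookkeeping: confirming that the boundary contribution from the tail formula and the closed-form half of the integral recombine cleanly into the symmetric term, rather than leaving a stray $\mathcal{R}_{\rm{min}}^2$ or $\mathcal{R}_{\rm{max}}^2$; the analytic inputs (the tail identity, the monotone change of variables, and the $\tfrac12\ln^2$ antiderivative) are all routine.
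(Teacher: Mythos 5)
Your proposal is correct and follows essentially the same route as the paper: the bounded-variable tail identity $\mathbb{E}[X^2]=x_{\rm{min}}^2+\int_{x_{\rm{min}}}^{x_{\rm{max}}}2x\,\mathbb{P}[X>x]\,dx$, the substitution $x=\xi_{\rm{a}}B_{\rm{a}}\log_2(1+\gamma)$, and the split of the \ac{CDF} into the constant and $\arcsin^\dagger$ pieces, with the boundary terms recombining into $\frac{\mathcal{R}_{\rm{min}}^2+\mathcal{R}_{\rm{max}}^2}{2}$ exactly as you anticipate. The bookkeeping you flag does work out cleanly, and your derivation of the second-moment identity from the nonnegative tail formula is only a minor elaboration of what the paper states directly.
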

\begin{proof}
	The second order moment of a bounded random variable $x_{\rm{min}}\leq X\leq x_{\rm{min}}$ is characterized by using $\mathbb{E}\!\left[X^2\right] = x_{\rm{min}}^2+\int_{x_{\rm{min}}}^{x_{\rm{max}}}2x\mathbb{P}[X>x]dx\nonumber$. Therefore:
	\begin{equation}\label{eq_Proof_Lem_4_3_4}
		\mathbb{E}\!\left[\mathcal{R}_{\rm{a}}^2(\gamma_u)\right] = \mathcal{R}_{\rm{min}}^2+\underbrace{\int\limits_{\mathcal{R}_{\rm{min}}}^{\mathcal{R}_{\rm{max}}}2x\mathbb{P}\left[\mathcal{R}_{\rm{a}}(\gamma_u)>x\right]dx}_{I_3}.
	\end{equation}
	Referring to the \ac{CDF} of $\gamma_u$ in \eqref{eq_4_1_2}, $I_3$ is derived as follows:
	\begin{subequations}\label{eq_Proof_Lem_4_3_1}
		\begin{align}
		I_3 &= 2\left(\frac{\xi_{\rm{a}}B_{\rm{a}}}{\ln2}\right)^{\!\!2}\int\limits_{\gamma_{\rm{min}}}^{\gamma_{\rm{min}}}\frac{\ln(1+\gamma)}{1+\gamma}\left(1-\mathbb{P}\left[\gamma_u\leq\gamma\right]\right)d\gamma,\label{eq_Proof_Lem_4_3_1c}\\
		&= \frac{\mathcal{R}_{\rm{max}}^2-\mathcal{R}_{\rm{min}}^2}{2}+\frac{1}{\pi}\left(\frac{2\xi_{\rm{a}}B_{\rm{a}}}{R_{\rm{e}}\ln2}\right)^{\!\!2}\int\limits_{\gamma_{\rm{min}}}^{\gamma_{\rm{max}}}\int\limits_{0}^{R_{\rm{e}}}\frac{\ln(1+\gamma)}{1+\gamma}\arcsin^\dagger\left(\mathcal{Z}(r,\gamma)\right)rdrd\gamma.\label{eq_Proof_Lem_4_3_1e}
		\end{align}
	\end{subequations}
	By substituting \eqref{eq_Proof_Lem_4_3_1e} for $I_3$ in \eqref{eq_Proof_Lem_4_3_4}, the desired result of \eqref{eq_Lem_4_3_2} is deduced.
\end{proof}

\begin{theorem}\label{Theorem_4_1}
	For the $k$th backhaul branch with $M$ \acp{UE} over the total area covered by $N_{\rm{BS}}$ \acp{BS}, the \ac{BBO} probability is tightly approximated by:
	\begin{equation}\label{eq_Theo_4_1_1}
	P_{\rm{BBO}} \approx \sum_{n=1}^{N_{\rm{BS}}}p_n{\rm{Q}}\left(\frac{\mathcal{R}_{{\rm{b}}_k}-n\bar{\mathcal{R}}_{\rm{a}}}{\frac{n}{\sqrt{M}}\sigma_{\mathcal{R}_{\rm{a}}}}\right),
	\end{equation}
	where:
	\begin{equation}\label{eq_Theo_4_3_2}
	p_n = \binom{N_{\rm{BS}}}{n}\sum_{l=0}^{n}(-1)^l\binom{n}{l}\left(\frac{n-l}{N_{\rm{BS}}}\right)^{\!M}.
	\end{equation}
	Also, $\mathcal{R}_{{\rm{b}}_k}$ and $\bar{\mathcal{R}}_{\rm{a}}$ are given by \eqref{eq_3_2_12b} and Lemma~\ref{Lemma_4_1}, respectively; and $\sigma_{\mathcal{R}_{\rm{a}}}$ is the standard deviation of $\mathcal{R}_{\rm{a}}(\gamma_u)$ whose variance is identified in Lemma~\ref{Lemma_4_3}.
\end{theorem}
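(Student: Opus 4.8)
The plan is to condition on the random number of occupied attocells and then apply a Gaussian (\ac{CLT}) approximation to the aggregate access rate within each occupancy realisation. The starting observation is that, because the $N_{\rm{BS}}$ hexagonal cells share the same area and the $M$ \acp{UE} are scattered uniformly over the branch, each \ac{UE} is independently associated with a given \ac{BS} with probability $\frac{1}{N_{\rm{BS}}}$. This is a classical balls-in-bins allocation, so the number of \emph{nonempty} cells is a random variable whose \ac{PMF} I would obtain by inclusion-exclusion: selecting which $n$ cells are occupied contributes the factor $\binom{N_{\rm{BS}}}{n}$, and counting the surjections of $M$ \acp{UE} onto those $n$ cells contributes $\sum_{l=0}^{n}(-1)^l\binom{n}{l}(n-l)^M$, which after normalising by $N_{\rm{BS}}^{M}$ reproduces exactly $p_n$ in \eqref{eq_Theo_4_3_2}.

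Next I would expand $P_{\rm{BBO}}$ by the law of total probability over this occupancy count. Because an empty cell contributes nothing to $\sum_{i\in\mathcal{L}_k}\mathcal{R}_{{\rm{a}}_i}$, conditioning on exactly $n$ occupied cells reduces the aggregate to $S_n=\sum_{i=1}^{n}\mathcal{R}_{{\rm{a}}_i}$ over those cells, so that $P_{\rm{BBO}}=\sum_{n=1}^{N_{\rm{BS}}}p_n\,\mathbb{P}\!\left[S_n>\mathcal{R}_{{\rm{b}}_k}\mid n\right]$, and only the conditional tail remains. Recalling that $\mathcal{R}_{{\rm{a}}_i}=\frac{1}{M_i}\sum_{u\in\mathcal{U}_i}\mathcal{R}_{\rm{a}}(\gamma_u)$ is a sample mean of $M_i$ \ac{i.i.d.} terms, its conditional mean equals $\bar{\mathcal{R}}_{\rm{a}}$ by Lemma~\ref{Lemma_4_1} irrespective of $M_i$; hence $\mathbb{E}[S_n\mid n]=n\bar{\mathcal{R}}_{\rm{a}}$ holds exactly and matches the numerator of the $\mathrm{Q}$-function argument.

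For the variance I would condition further on the cell loads $\{M_i\}$; independence across cells gives $\mathrm{Var}(S_n\mid\{M_i\})=\sigma_{\mathcal{R}_{\rm{a}}}^{2}\sum_{i=1}^{n}\frac{1}{M_i}$, while the between-load contribution vanishes in the law of total variance since the conditional mean does not depend on $\{M_i\}$. The decisive simplification is the \emph{balanced-load} approximation $M_i\approx\frac{M}{n}$ among occupied cells, which collapses $\sum_{i=1}^{n}\frac{1}{M_i}$ to $\frac{n^{2}}{M}$ and thus $\mathrm{Var}(S_n)\approx\frac{n^{2}}{M}\sigma_{\mathcal{R}_{\rm{a}}}^{2}$, yielding the standard deviation $\frac{n}{\sqrt{M}}\sigma_{\mathcal{R}_{\rm{a}}}$ in the denominator. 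With $M$ large each occupied cell carries many \acp{UE}, so the \ac{CLT} lets me treat $S_n\mid n$ as Gaussian and write $\mathbb{P}[S_n>\mathcal{R}_{{\rm{b}}_k}\mid n]\approx\mathrm{Q}\!\left(\frac{\mathcal{R}_{{\rm{b}}_k}-n\bar{\mathcal{R}}_{\rm{a}}}{\frac{n}{\sqrt{M}}\sigma_{\mathcal{R}_{\rm{a}}}}\right)$; inserting this into the total-probability sum gives \eqref{eq_Theo_4_1_1}. The main obstacle, and the reason the statement is only an approximation, is exactly this variance step: the exact conditional variance involves $\mathbb{E}\!\left[\sum_{i=1}^{n}M_i^{-1}\right]$ under the truncated (all $M_i\geq1$) multinomial law, and replacing it by the balanced-load value $\frac{n^{2}}{M}$ is accurate only when the occupied cells are comparably loaded; I would argue its tightness heuristically and leave the final validation to the Monte Carlo comparison.
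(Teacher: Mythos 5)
Your proposal is correct and reaches the stated formula, but it routes the key approximation differently from the paper. The skeleton is shared: both arguments condition on the number $n$ of non-empty attocells, obtain the occupancy PMF $p_n$ in \eqref{eq_Theo_4_3_2} from the classical balls-in-bins/inclusion-exclusion count, apply a Gaussian (CLT) approximation to the conditional aggregate access rate, and assemble the result by the law of total probability. Where you differ is in how the awkward weights $1/M_i$ are eliminated. The paper isolates this step as Lemma~\ref{Lemma_4_4}: it replaces the weighted sum $Y=\sum_{i}\frac{1}{M_i}\sum_{u}\mathcal{R}_{\rm a}(\gamma_u)$ by $\beta^{*}S$, where $S$ is the unweighted sum of all $M$ i.i.d. per-user rates and $\beta^{*}=\frac{n_{\rm BS}}{M}$ is derived in the appendix as the scalar minimizing $\mathbb{E}\left[(Y-\beta S)^2\right]$; the CLT is then applied to $S$, a genuine sum of $M$ i.i.d. variables. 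You instead keep the true aggregate $S_n$, note that its conditional mean $n\bar{\mathcal{R}}_{\rm a}$ is \emph{exact}, and push the entire approximation into the variance via the balanced-load substitution $M_i\approx M/n$ before applying the CLT to $S_n$ directly. The two are numerically equivalent---the MMSE coefficient $n_{\rm BS}/M$ is precisely the balanced-load value of $1/M_i$, and both yield conditional mean $n\bar{\mathcal{R}}_{\rm a}$ and variance $\frac{n^{2}}{M}\sigma_{\mathcal{R}_{\rm a}}^{2}$---but each buys something different: the paper's route endows the surrogate with an MMSE optimality property and makes the CLT invocation cleaner (a sum of exactly $M$ i.i.d. terms), whereas your route makes explicit that only the variance is approximated and, via the AM--HM inequality $\sum_{i}M_i^{-1}\geq n^{2}/M$, gives a heuristic direction for the error (within the Gaussian approximation, the predicted tail is understated when $\mathcal{R}_{{\rm b}_k}>n\bar{\mathcal{R}}_{\rm a}$), which the paper does not provide.
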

\begin{proof}\label{Proof_Prop_4_3}
	Let the vector $\mathbf{M}=[M_i]_{N_{\rm{BS}}\times1}$ be composed of the random numbers of \acp{UE} in individual attocells for the $k$th branch. Provided that the total number of \acp{UE} is fixed at $\sum_{i\in\mathcal{L}_k}M_i=M$, $\mathbf{M}$ follows a multinomial distribution. The \ac{BBO} probability in \eqref{eq_4_3_3} is expressed as follows:
	\begin{equation}\label{eq_Proof_Theo_4_3_2}
	P_{\rm{BBO}} = \mathbb{P}\left[\sum_{i\in\mathcal{L}_k}\frac{1}{M_i}\sum_{u\in\mathcal{U}_i}\mathcal{R}_{\rm{a}}(\gamma_u)>\mathcal{R}_{{\rm{b}}_k}\right].
	\end{equation}
	The argument of the probability in \eqref{eq_Proof_Theo_4_3_2} involves positive weights encompassing the reciprocals of the numbers of \acp{UE} in every attocell. An appropriate approximation of this weighted sum can be derived by means of minimizing the \ac{MSE}. This is presented in Lemma~\ref{Lemma_4_4}.
	\begin{lemma}\label{Lemma_4_4}
		Based on the \ac{MMSE} criterion, the summation under the probability in \eqref{eq_Proof_Theo_4_3_2} is approximated as follows:
		\begin{equation}\label{eq_Lem_4_4_1}
		\sum_{i\in\mathcal{L}_k}\frac{1}{M_i}\sum_{u\in\mathcal{U}_i}\mathcal{R}_{\rm{a}}(\gamma_u) \approx \frac{n_{\rm{BS}}}{M}\sum_{i\in\mathcal{L}_k}\sum_{u\in\mathcal{U}_i}\mathcal{R}_{\rm{a}}(\gamma_u),
		\end{equation}
		where $n_{\rm{BS}}$ indicates the aggregate number of non-empty attocells corresponding to the random vector $\mathbf{M}$. The attocell of $\text{BS}_i$ is accounted non-empty if $M_i>0$.
	\end{lemma}
	\begin{proof}\label{Proof_Lem_4_4}
		See Appendix~\ref{ProofLemma4}.
	\end{proof}
	Let $Z=\frac{n_{\rm{BS}}}{M}\sum_{i\in\mathcal{L}_k}\sum_{u\in\mathcal{U}_i}\mathcal{R}_{\rm{a}}(\gamma_u)$. Note that $Z$ is not directly dependent on the exact number of \acp{UE} that each attocell involves, i.e. the elements of $\mathbf{M}$. Rather, it depends on the overall number of non-empty attocells, i.e. $n_{\rm{BS}}$. For each random experiment, $n_{\rm{BS}}$ takes integer values from $1$ to $N_{\rm{BS}}$. Besides, $\sum_{i\in\mathcal{L}_k}\sum_{u\in\mathcal{U}_i}\mathcal{R}_{\rm{a}}(\gamma_u)$ is a sum of $M$ \ac{i.i.d.} random variables $\mathcal{R}_{\rm{a}}(\gamma_u)$, the mean and variance of which are known according to Lemma~\ref{Lemma_4_1} and Lemma~\ref{Lemma_4_3}, respectively. Thus, for a sufficiently large value of $M$, the conditional distribution of $Z$ given $n_{\rm{BS}}=n$ converges to Gaussian based on the \ac{CLT} \cite{Vallentin}. It is deduced that:
	\begin{equation}\label{eq_Proof_Prop_4_3_5}
	Z\big|\{n_{\rm{BS}}=n\} \sim \mathcal{N}\left(n\bar{\mathcal{R}}_{\rm{a}},\frac{n^2}{M}\sigma_{\mathcal{R}_{\rm{a}}}^2\right).
	\end{equation}
	Therefore, by means of Lemma~\ref{Lemma_4_4}, the \ac{BBO} probability in \eqref{eq_Proof_Theo_4_3_2} can be evaluated by conditioning on $n_{\rm{BS}}$ and applying the law of total probability. Combining \eqref{eq_Proof_Prop_4_3_5} with \eqref{eq_Lem_4_4_1} and substituting the result into \eqref{eq_Proof_Theo_4_3_2} gives rise to:
	\begin{equation}\label{eq_Proof_Prop_4_3_6}
	P_{\rm{BBO}} \approx \sum_{n=1}^{N_{\rm{BS}}}\mathbb{P}\left[n_{\rm{BS}}=n\right]\mathbb{P}\left[Z>\mathcal{R}_{{\rm{b}}_k}\big|n_{\rm{BS}}=n\right] = \sum_{n=1}^{N_{\rm{BS}}}p_n{\rm{Q}}\left(\frac{\mathcal{R}_{{\rm{b}}_k}-n\bar{\mathcal{R}}_{\rm{a}}}{\frac{n}{\sqrt{M}}\sigma_{\mathcal{R}_{\rm{a}}}}\right),
	\end{equation}
	where $p_n=\mathbb{P}\left[n_{\rm{BS}}=n\right]$. From combinatorial analysis, the problem of distributing $M$ \acp{UE} into $N_{\rm{BS}}$ attocells refers to the classical occupancy problem with Boltzmann-Maxwell statistics \cite{Feller}. That is to say, there are $N_{\rm{BS}}^M$ permutations and each possible distribution has a probability of $\frac{1}{N_{\rm{BS}}^M}$\footnote{This is an immediate result of the uniform distribution of \acp{UE}.}. Besides, the outcome of the event $\{n_{\rm{BS}}=n\}$ corresponds to the case where exactly $n$ attocells each are occupied by at least one \ac{UE} and the other $N_{\rm{BS}}-n$ remain empty. Let $\{n_{\rm{BS}}'=n'\}$ be the event indicating that exactly $n'$ attocells are empty. The probability of this event is available in closed form \cite{Feller}:
	\begin{equation}\label{eq_Proof_Prop_4_3_7}
	\mathbb{P}\left[n_{\rm{BS}}'=n'\right] = \binom{N_{\rm{BS}}}{n'}\sum_{l=0}^{N_{\rm{BS}}-n'}(-1)^l\binom{N_{\rm{BS}}-n'}{l}\left(1-\frac{n'+l}{N_{\rm{BS}}}\right)^{\!M}.
	\end{equation}
	Upon substituting $n'=N_{\rm{BS}}-n$, \eqref{eq_Proof_Prop_4_3_7} reduces to the desired probability $p_n$ in \eqref{eq_Theo_4_3_2}.
\end{proof}

\subsection{Numerical Results and Discussions}\label{sec_4_3_3}
This section presents a number of case studies to evaluate the performance of the proposed power control schemes using computer simulations. The system parameters are given by Table~\ref{tbl_3_4_1}.

\subsubsection{Power Control Coefficients}
First, the range of variations of the power control coefficients is studied based on Propositions~\ref{Proposition_4_3}, \ref{Proposition_4_2} and \ref{Proposition_4_1} for \ac{MSPC}, \ac{ASPC} and \ac{ARPC}, respectively.

\begin{figure}[!t]
	\centering
	\subfloat[\label{fig_4_3_4a} Power control coefficient.]{\includegraphics[width=0.5\linewidth]{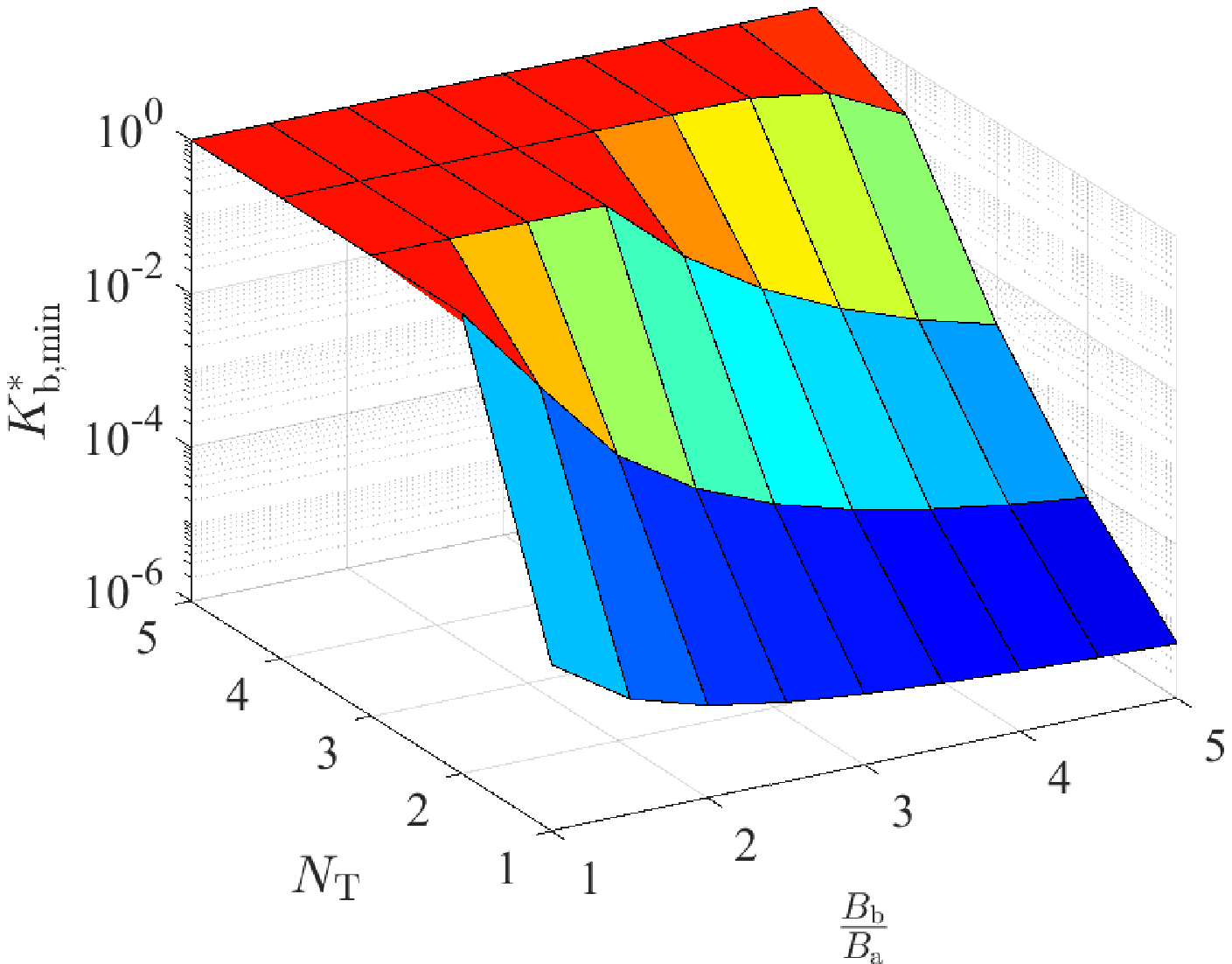}}
	\subfloat[\label{fig_4_3_4b} Backhaul rate.]{\includegraphics[width=0.5\linewidth]{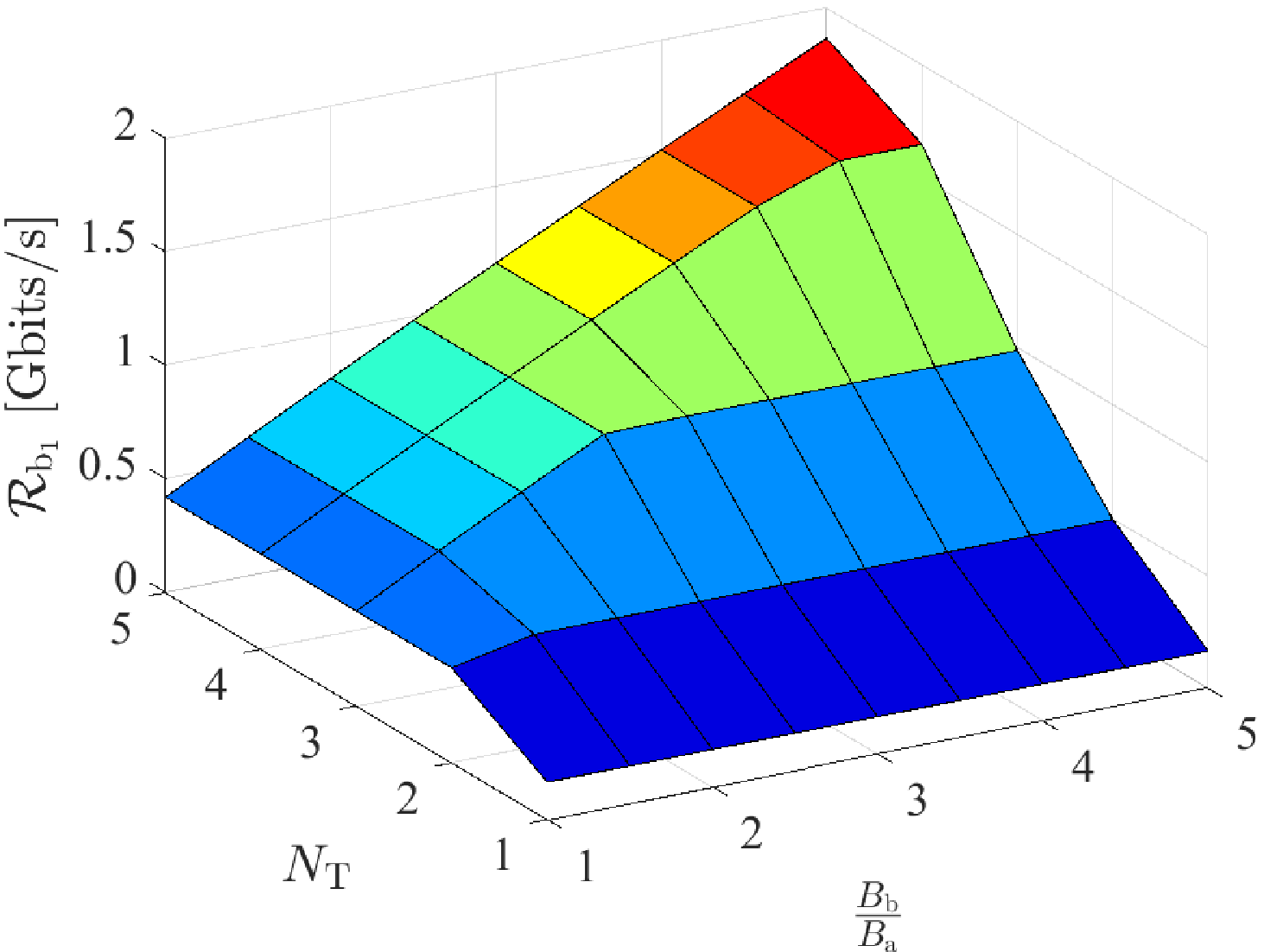}}
	\caption{$K_{\rm{b,min}}^*$ for MSPC and the backhaul rate $\mathcal{R}_{\rm{b}_1}|_{K_{\rm{b}}=K_{\rm{b,min}}^*}$ as a function of the total number of tiers $N_{\rm{T}}$ and the bandwidth ratio $\frac{B_{\rm{b}}}{B_{\rm{a}}}$.}
	\label{fig_4_3_4}
\end{figure}

\begin{figure}[!t]
	\centering
	\subfloat[\label{fig_4_3_5a} Power control coefficient.]{\includegraphics[width=0.5\linewidth]{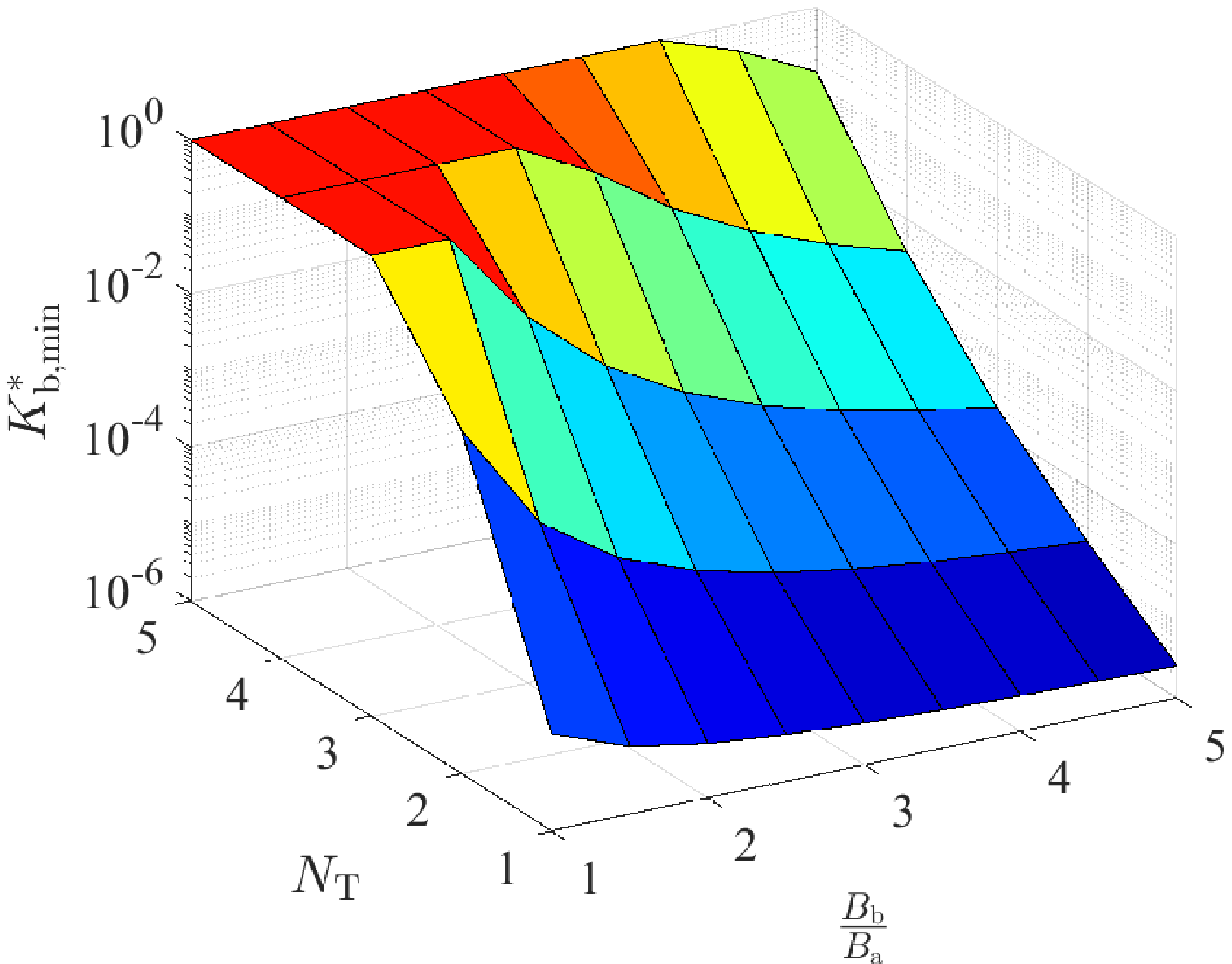}}
	\subfloat[\label{fig_4_3_5b} Backhaul rate.]{\includegraphics[width=0.5\linewidth]{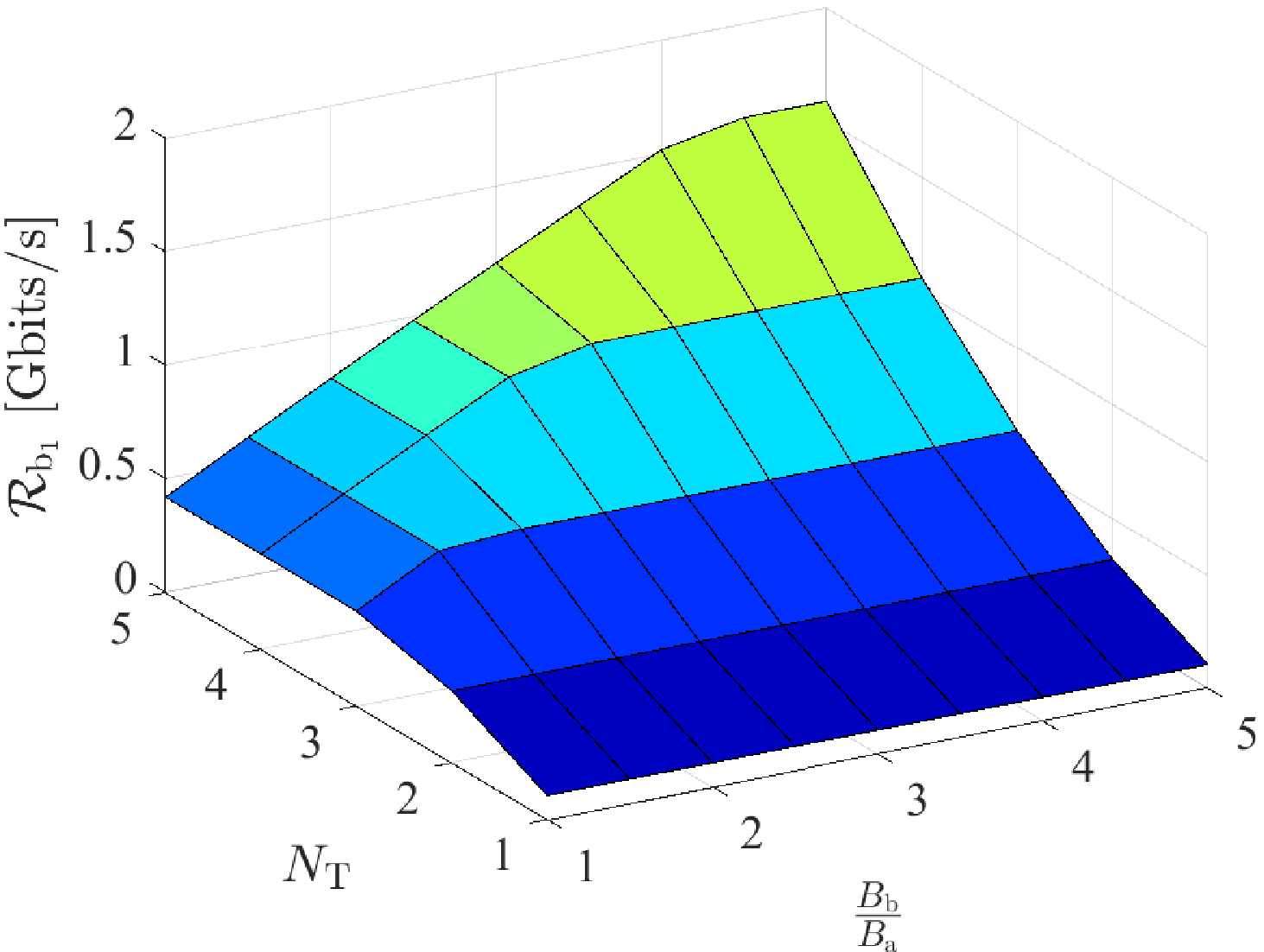}}
	\caption{$K_{\rm{b,min}}^*$ for ASPC and the backhaul rate $\mathcal{R}_{\rm{b}_1}|_{K_{\rm{b}}=K_{\rm{b,min}}^*}$ as a function of the total number of tiers $N_{\rm{T}}$ and the bandwidth ratio $\frac{B_{\rm{b}}}{B_{\rm{a}}}$.}
	\label{fig_4_3_5}
\end{figure}

\begin{figure}[!t]
	\centering
	\subfloat[\label{fig_4_3_6a} Power control coefficient.]{\includegraphics[width=0.5\linewidth]{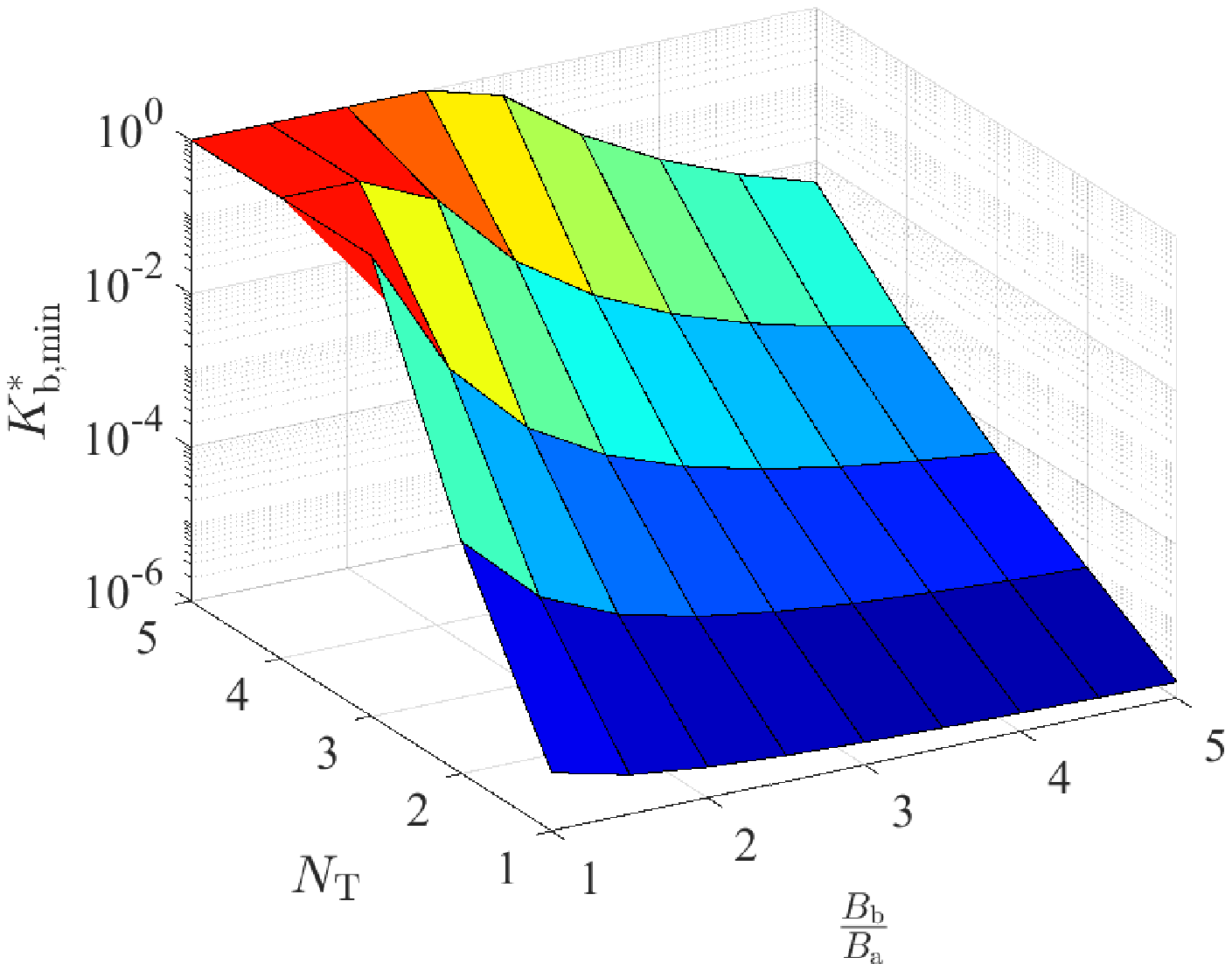}}
	\subfloat[\label{fig_4_3_6b} Backhaul rate.]{\includegraphics[width=0.5\linewidth]{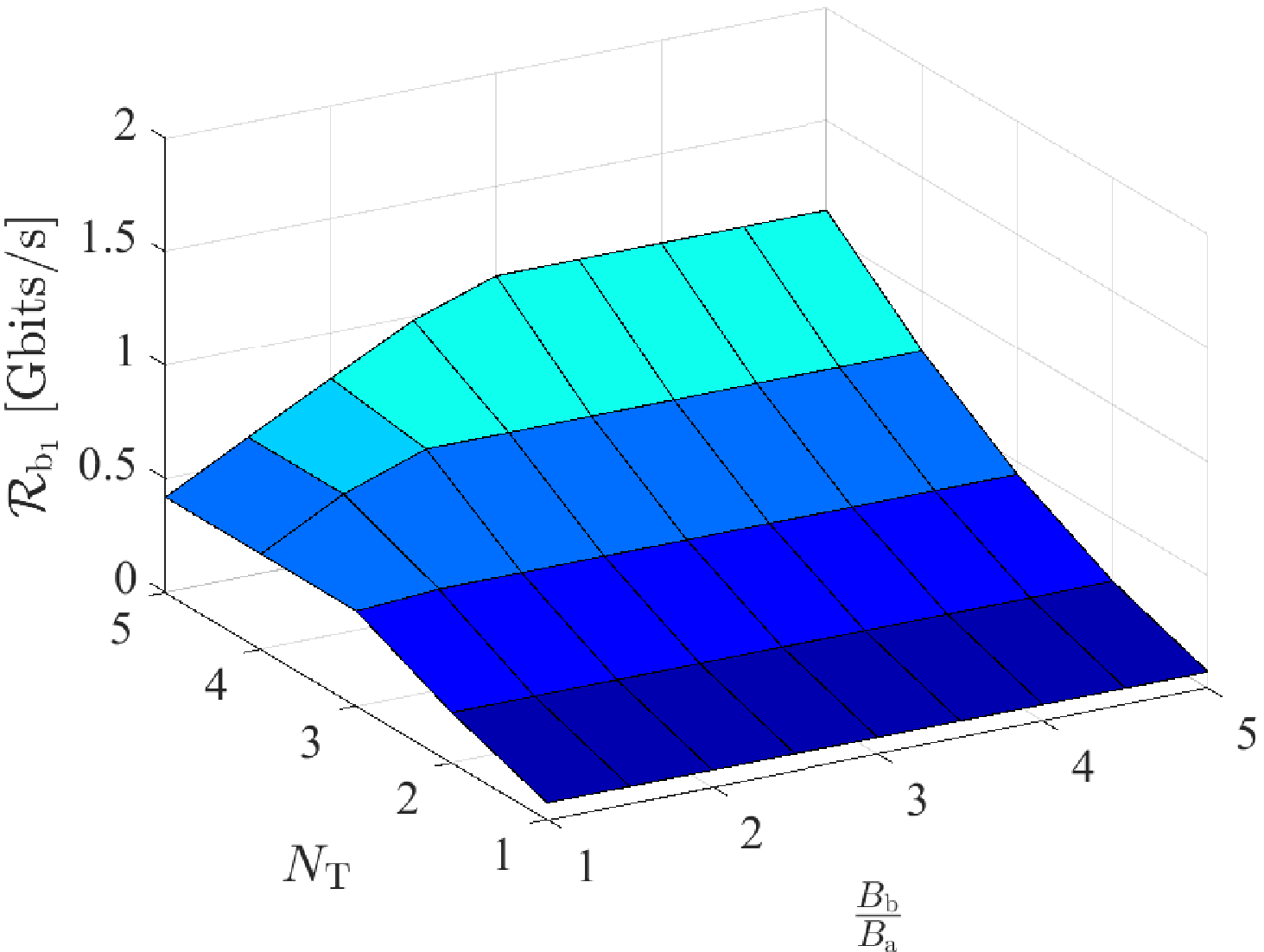}}
	\caption{$K_{\rm{b,min}}^*$ for ARPC and the backhaul rate $\mathcal{R}_{\rm{b}_1}|_{K_{\rm{b}}=K_{\rm{b,min}}^*}$ as a function of the total number of tiers $N_{\rm{T}}$ and the bandwidth ratio $\frac{B_{\rm{b}}}{B_{\rm{a}}}$.}
	\label{fig_4_3_6}
\end{figure}

Figs.~\ref{fig_4_3_4a}, \ref{fig_4_3_5a} and \ref{fig_4_3_6a} demonstrates the range of values of $K_{\rm{b,min}}$ for \ac{MSPC}, \ac{ASPC} and \ac{ARPC} schemes, respectively, as a function of $N_{\rm{T}}$ and the bandwidth ratio $\frac{B_{\rm{b}}}{B_{\rm{a}}}$. The resulting backhaul rate for each scheme is computed by $\mathcal{R}_{\rm{b}_1}|_{K_{\rm{b}}=K_{\rm{b,min}}^*}=\xi_{\rm{b}}B_{\rm{b}}\log_2(1+K_{\rm{b,min}}^*\gamma_{\rm{b}})$ and shown in Figs.~\ref{fig_4_3_4b}, \ref{fig_4_3_5b} and \ref{fig_4_3_6b}. It is observed that the power control coefficient is an increasing function of the total number of the deployed tiers for all three schemes, while it is a decreasing function of the normalized bandwidth. For given values of $N_{\rm{T}}$ and $\frac{B_{\rm{b}}}{B_{\rm{a}}}$, the highest value of $K_{\rm{b,min}}$ is set by \ac{MSPC}, the second highest by \ac{ASPC}, and the lowest by \ac{ARPC}, confirming that:
\begin{equation}\label{eq_4_3_4}
K_{\rm{b,min}}^{\rm{ARPC}}<K_{\rm{b,min}}^{\rm{ASPC}}<K_{\rm{b,min}}^{\rm{MSPC}}.
\end{equation}
The amount of power assigned to the backhaul system by the three schemes and the corresponding backhaul rates also obey the same rule in \eqref{eq_4_3_4}. For a fixed number of tiers, Figs.~\ref{fig_4_3_4a}, \ref{fig_4_3_5a} and \ref{fig_4_3_6a} show that by increasing the backhaul bandwidth, the level of $K_{\rm{b,min}}$ lessens for all the schemes altogether. Hence, more power needs to be allocated to the backhaul system when the bandwidth reduces. This conforms to the intrinsic power-bandwidth tradeoff governing the bottleneck link capacity to be shared between multiple downlink paths \cite{Kazemi5}.

The power control coefficients rise continuously with increase in $N_{\rm{T}}$, as observed from Fig.~\ref{fig_4_3_4}. However, they are not allowed to be increased unboundedly due to practical limitations imposed by the maximum permissible optical power of backhaul \acp{LED}. To set an upper limit for the transmission power of the backhaul system, its counterpart from the access system, $P_{\rm{a}}$, is used, as the access system operates with full power to comply with the illumination requirement\footnote{The maximum allowable backhaul power could be an independent variable to model the practical specification of backhaul \acp{LED}. Despite this possibility, setting a value equal to the power used in the access system simplifies the presentation of results, though it does not influence the generality of the power control analysis.}. This exerts a unit threshold constraint on $K_{\rm{b,min}}$, resulting in:
\begin{equation}\label{eq_4_3_5}
K_{\rm{b,min}}^* = \min[K_{\rm{b,min}},1].
\end{equation}

\begin{figure}[!t]
	\centering
	\subfloat[\label{fig_4_3_9a} $\lambda=1$ UE/Cell]{\includegraphics[width=0.5\linewidth]{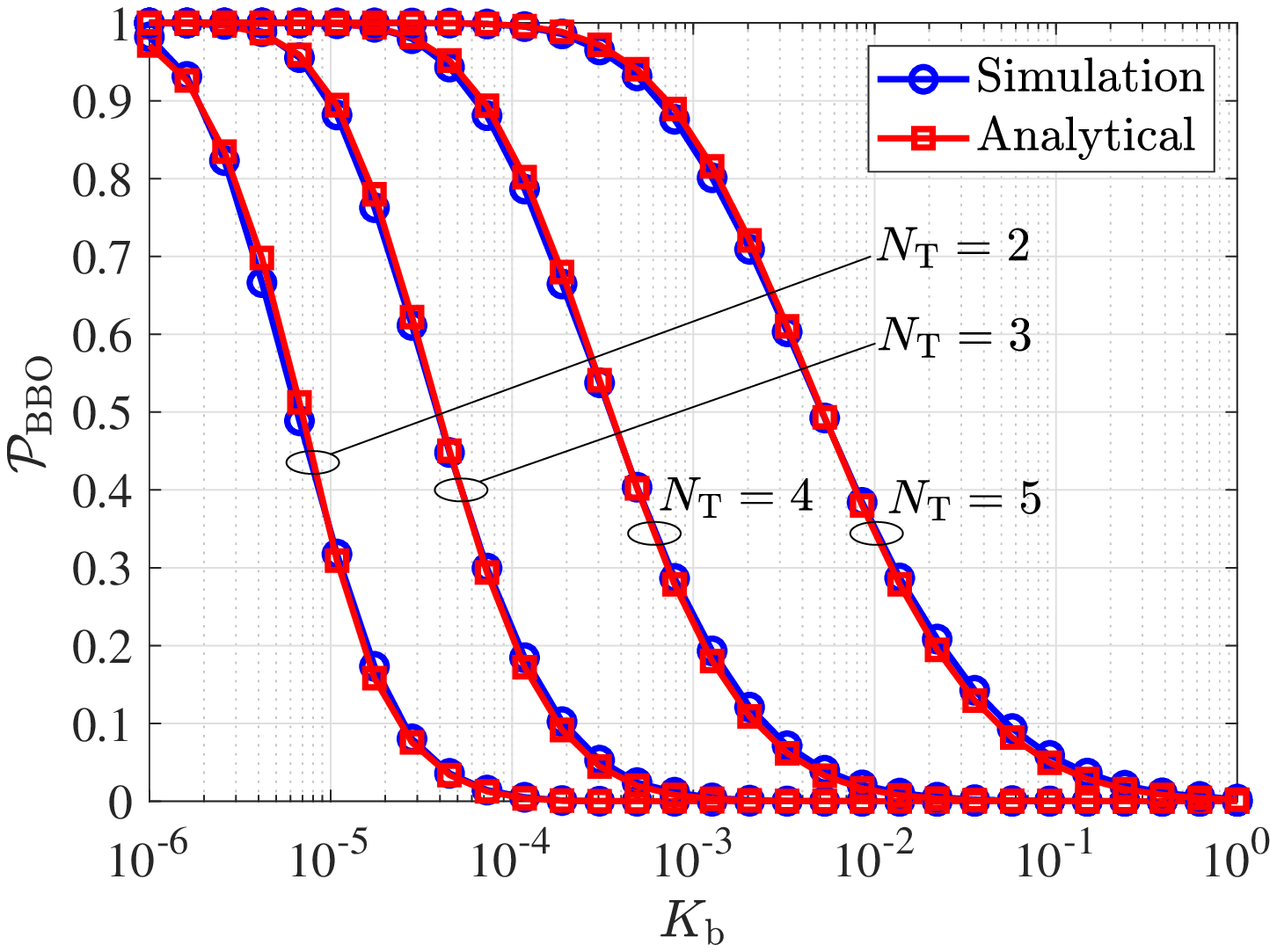}}
	\subfloat[\label{fig_4_3_9b} $\lambda=5$ UE/Cell]{\includegraphics[width=0.5\linewidth]{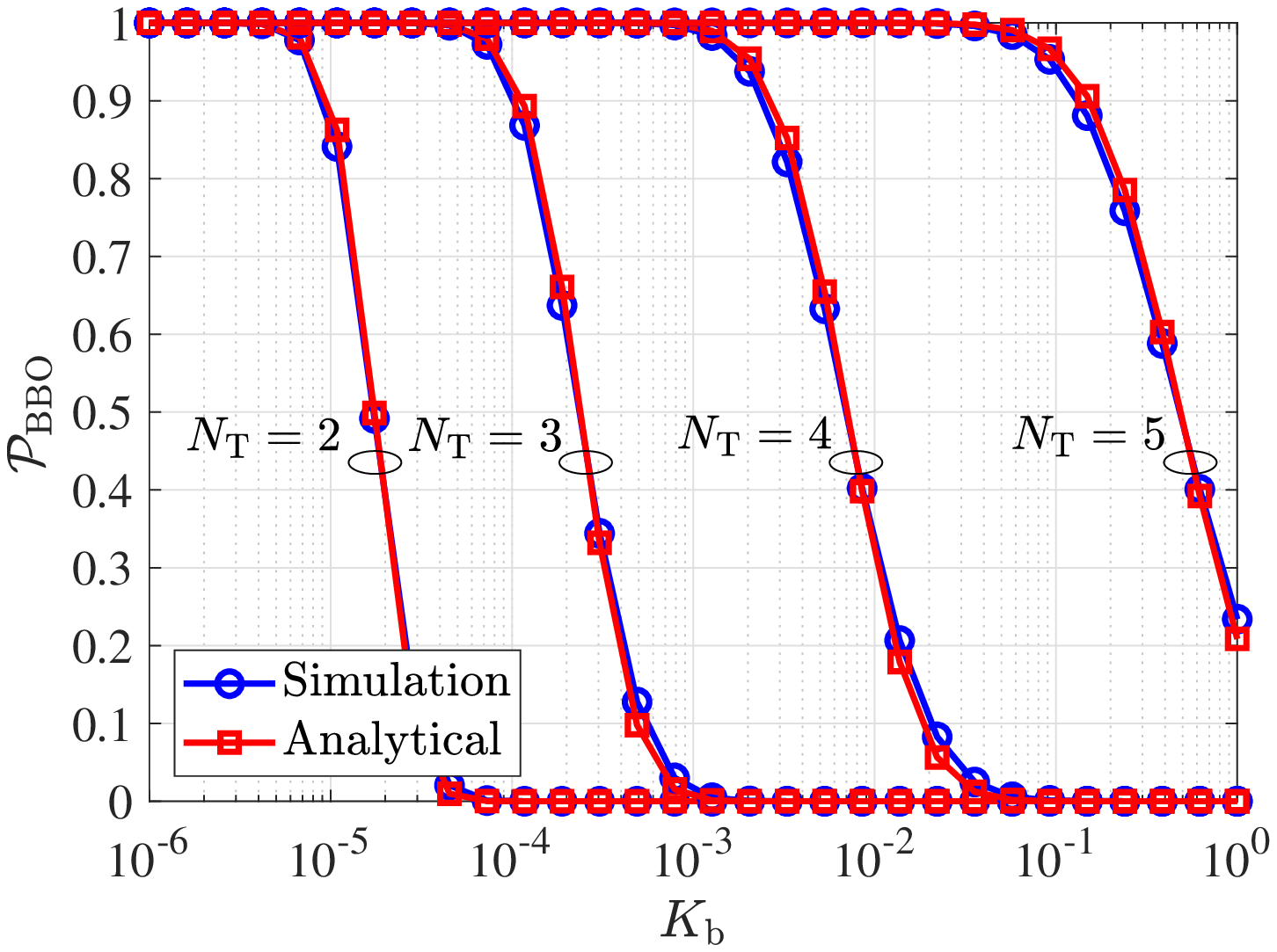}}
	\caption{Analytical and simulation results of the \ac{BBO} probability as a function of $K_{\rm{b}}$ for different values of $N_{\rm{T}}$ and $\lambda$; and $B_{\rm{b}}=3B_{\rm{a}}$. Analytical results are based on \eqref{eq_Theo_4_1_1}.}
	\label{fig_4_3_9}
\end{figure}

\subsubsection{BBO Probability}
For each branch of the super cell, the \ac{BBO} probability can be analytically predicted by way of its approximate expression provided in Theorem~\ref{Theorem_4_1}. To verify the derivation of \eqref{eq_Theo_4_1_1}, the analytical and simulation results are plotted in Fig.~\ref{fig_4_3_9} over a wide range of values of the power ratio $K_{\rm{b}}$. Note that $P_{\rm{BBO}}$ is a function of $K_{\rm{b}}$ through $\mathcal{R}_{\rm{b}_1}$. The simulation results are directly obtained by computing the \ac{BBO} probability in the Monte Carlo domain according to Definition~\ref{Definition_4_1}. For comparison, different combinations of the total number of tiers, $N_{\rm{T}}$, and the average \ac{UE} density, $\lambda$, are considered. 

For both cases of $\lambda=1$ UE/Cell and $\lambda=5$ UE/Cell, as shown in Figs.~\ref{fig_4_3_9a} and \ref{fig_4_3_9b}, respectively, the analytical results closely match with those of the simulations. Nonetheless, there is a slight discrepancy between the two sets of results, because of the underlying approximation. Note that the analytical expression is neither an upper bound nor a lower bound of the \ac{BBO} probability, as it is derived on the basis of the \ac{MMSE} criterion. These results confirm that the formula derived in \eqref{eq_Theo_4_1_1}, though its simple form, does estimate well the actual \ac{BBO} performance of super cells.

\begin{figure}[!t]
	\centering
	\subfloat[\label{fig_4_3_8a} NPC]{\includegraphics[width=0.5\linewidth]{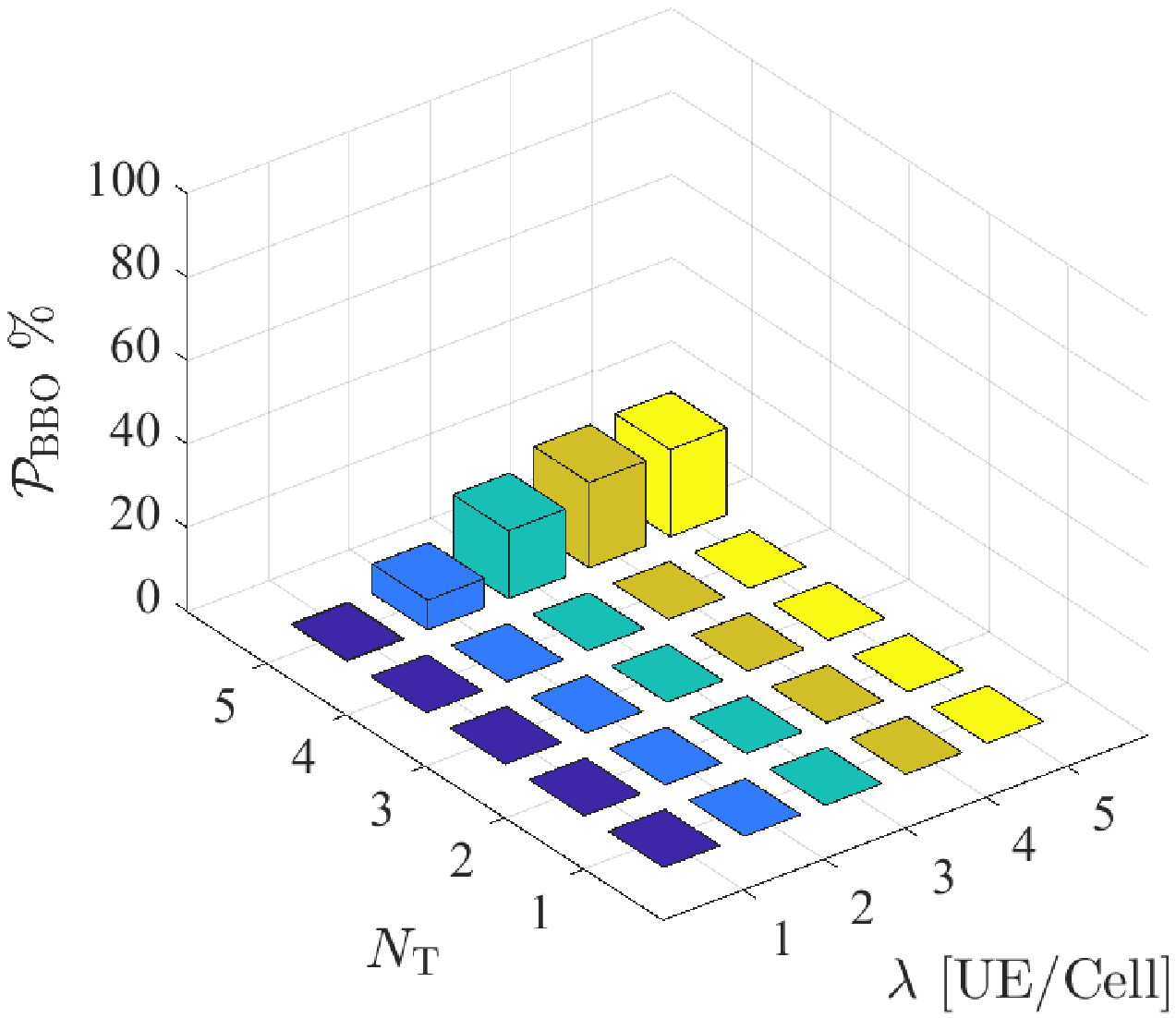}}
	\subfloat[\label{fig_4_3_8b} MSPC]{\includegraphics[width=0.5\linewidth]{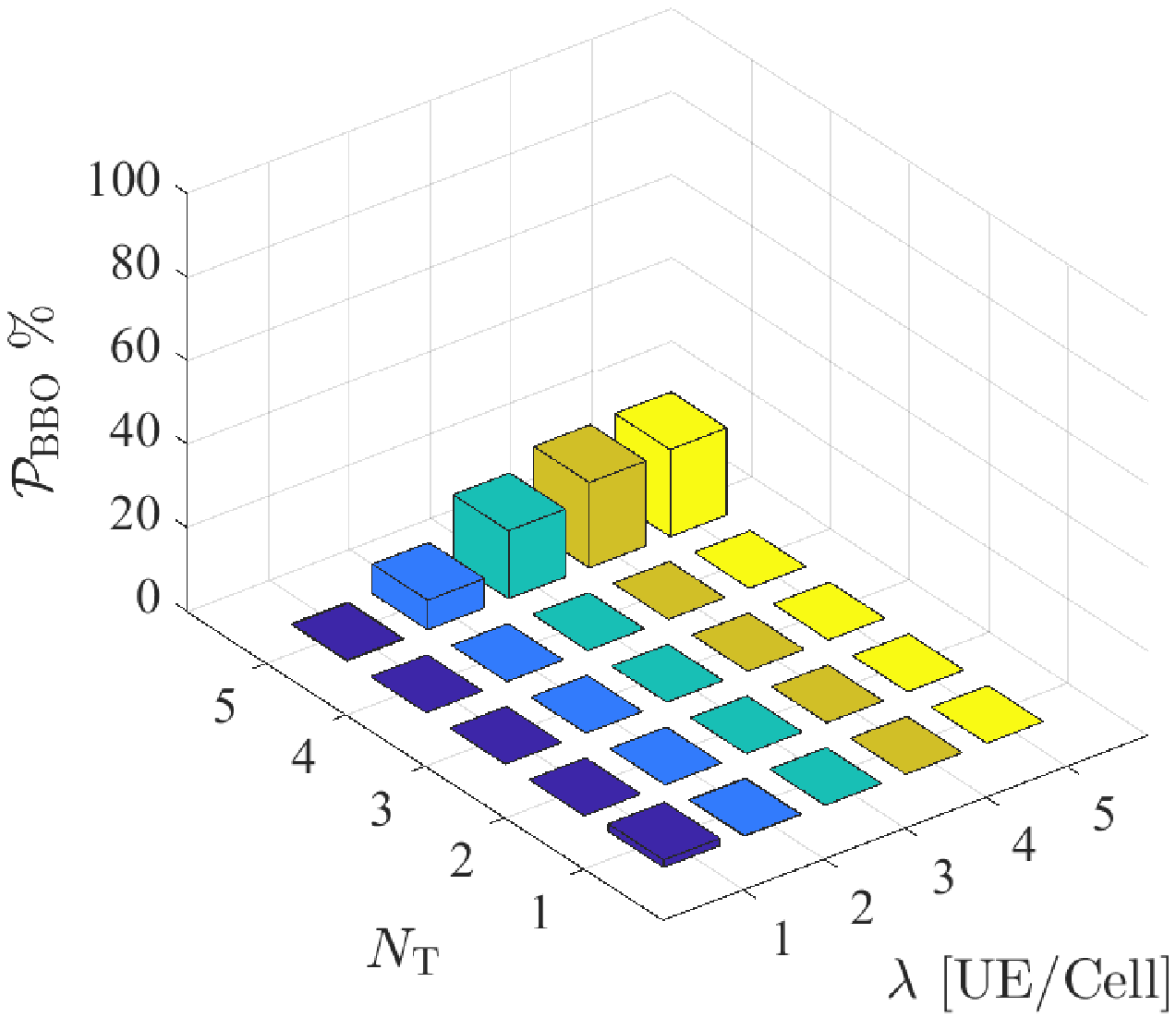}} \\
	\subfloat[\label{fig_4_3_8c} ASPC]{\includegraphics[width=0.5\linewidth]{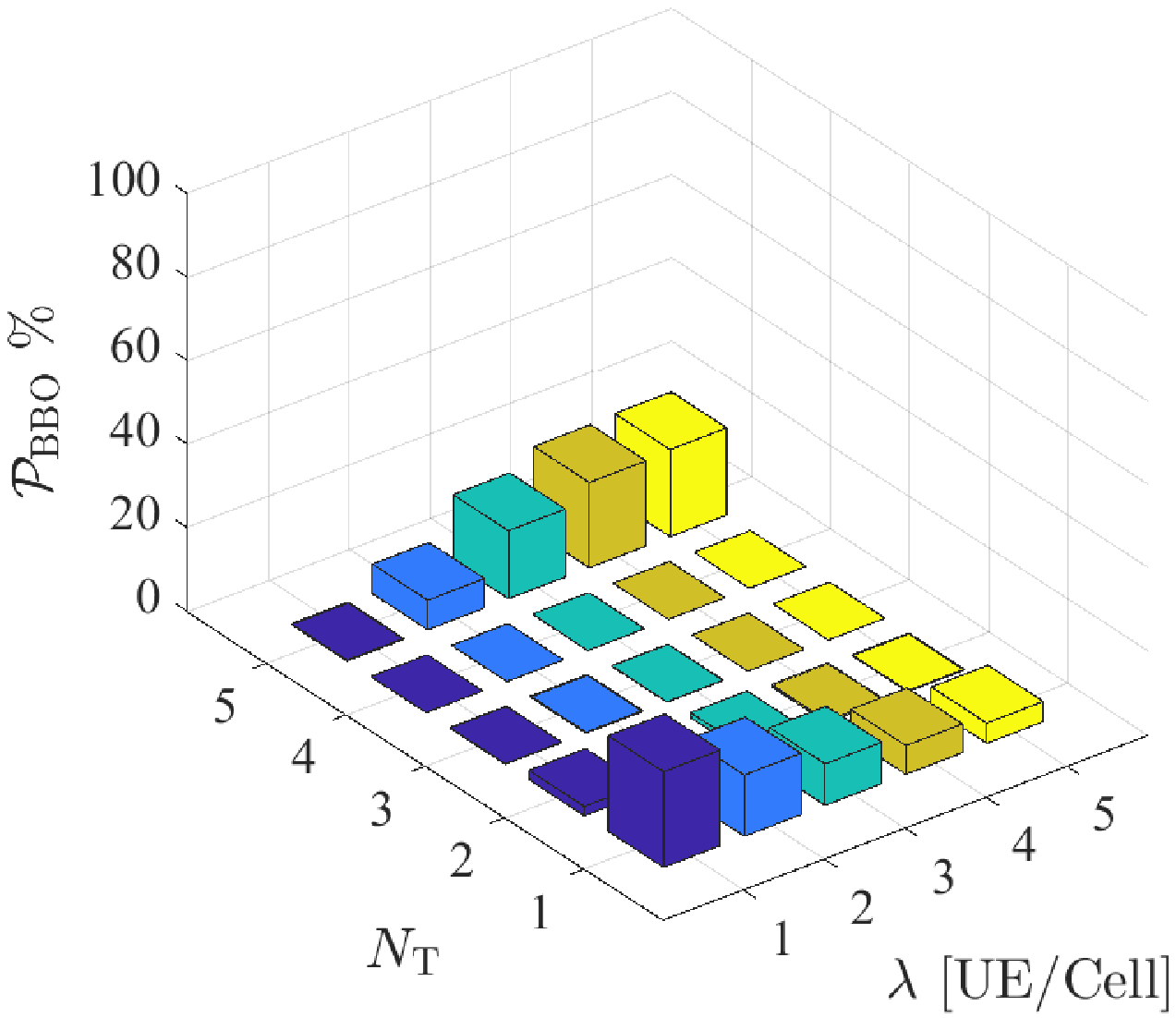}}
	\subfloat[\label{fig_4_3_8d} ARPC]{\includegraphics[width=0.5\linewidth]{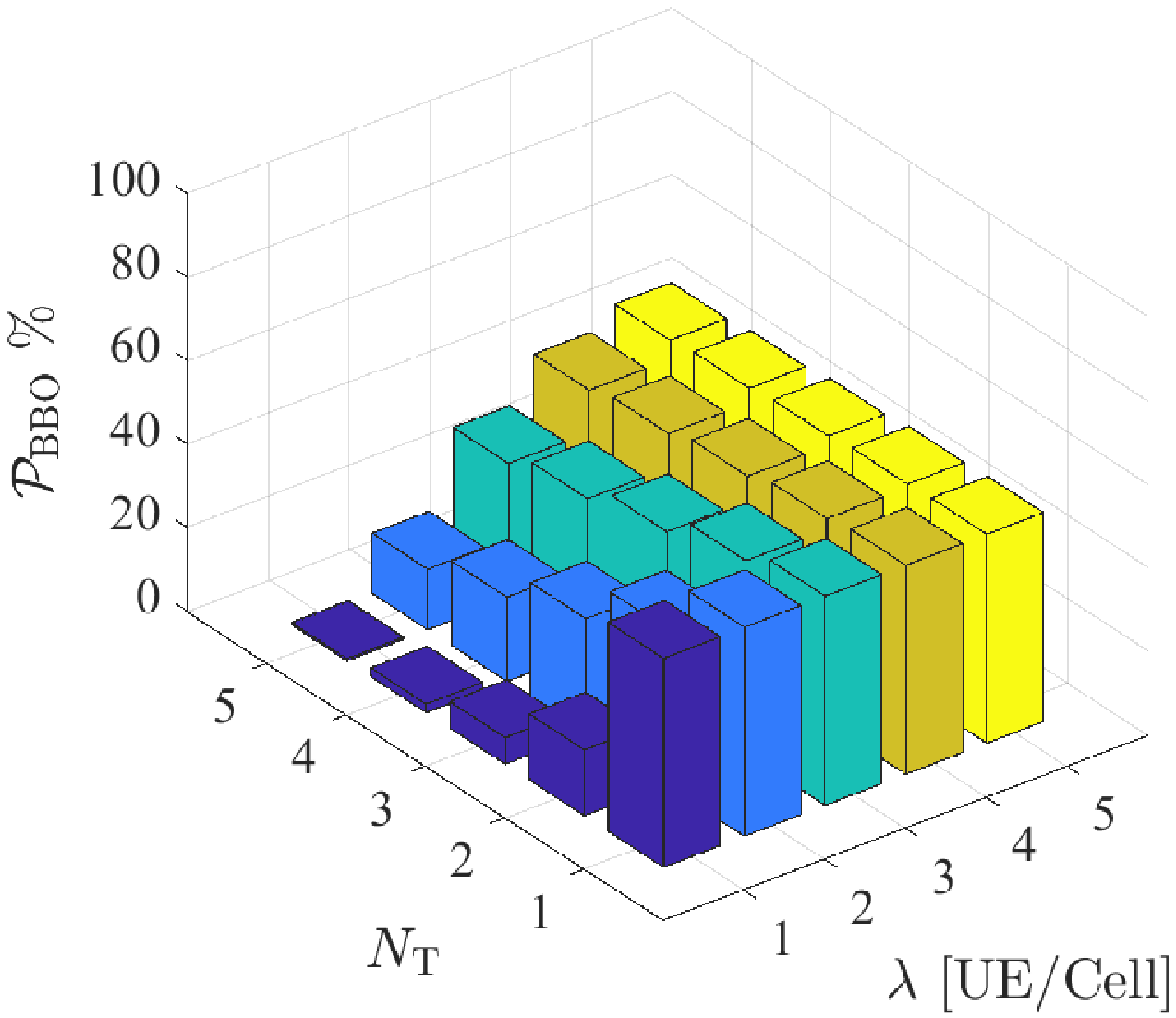}}
	\caption{The BBO probability for NPC, MSPC, ASPC and ARPC schemes versus the total number of tiers $N_{\rm{T}}$ and the UE density $\lambda$ for $B_{\rm{b}}=3B_{\rm{a}}$.}
	\label{fig_4_3_8}
\end{figure}

To shed light on another aspect of the backhaul power control, the resulting \ac{BBO} probability of \ac{MSPC}, \ac{ASPC} and \ac{ARPC} schemes are shown with a percent scale in Fig.~\ref{fig_4_3_8} as a function of $N_{\rm{T}}$ and $\lambda$, for a fixed bandwidth of $B_{\rm{b}}=3B_{\rm{a}}$. These results are obtained by using \eqref{eq_Theo_4_1_1}. The performance of a system with \ac{NPC} in which $P_{{\rm{b}}_i}=P_{\rm{a}}$ $\forall i$ is included for comparison. The results are consistent with those in Figs.~\ref{fig_4_3_4}, \ref{fig_4_3_5} and ~\ref{fig_4_3_6} in the sense that allocating higher power to the backhaul system leads to overall lower values of the \ac{BBO} probability. It is observed that \ac{MSPC} achieves almost equal \ac{BBO} performance as the baseline \ac{NPC} scheme. This is expected from the way \ac{MSPC} is devised by using a high power value just enough to ensure that no backhaul bottleneck takes place, subject to the allowable limit. That is why for both \ac{NPC} and \ac{MSPC}, the \ac{BBO} probability is zero for all cases of $\lambda$ and $N_{\rm{T}}<5$. For $N_{\rm{T}}=5$, however, there is a nonzero chance that the required power to satisfy the access sum rate exceeds the allowed power threshold and therefore backhaul bottleneck inevitably occurs. In this case, the \ac{BBO} probability is increased by adding more \acp{UE}, reaching $20\%$ for $\lambda=5$ UE/Cell.

Besides, \ac{ASPC} performs similar to \ac{NPC} and \ac{MSPC}, except for $N_{\rm{T}}=1$. This can be explained by noting that a one tier super cell involves one attocell per branch, thus any value of $\lambda\geq1$ UE/Cell causes the only attocell of the branch to always be occupied. Unlike \ac{MSPC}, the required power to avoid a backhaul bottleneck in response to such a load may be larger than what \ac{ASPC} computes. The mentioned effect diminishes by increasing the \ac{UE} density as shown in Fig.~\ref{fig_4_3_8c}. When the number of \acp{UE} grows in a single attocell, the range of variations of the access sum rate reduces, thereby lowering the chance for the downlink system to undergo a backhaul bottleneck. Fig.~\ref{fig_4_3_8d} shows that the performance of \ac{ARPC} is worse than all other schemes. The use of \ac{ARPC} leads to $50\%$ \ac{BBO} probability for $\lambda=5$ UE/Cell even for a single tier super cell. For a given $N_{\rm{T}}$, \ac{BBO} is more likely when $\lambda$ increases especially for $N_{\rm{T}}>1$. By contrast, for a fixed value of $\lambda$, \ac{BBO} is less probable when more tiers are added to the super cell. The reason for this trend is because \acp{UE} are associated with the entire branch as a whole and hence they are distributed over a larger number of attocells. This increases the probability that some attocells remain empty, which decreases the aggregate sum rate of the access system. Such a trend decays when the average \ac{UE} density is sufficiently high, i.e. for $\lambda=5$ UE/Cell.

\begin{figure}[!t]
	\centering
	\subfloat[\label{fig_4_3_11a} Average end-to-end sum rate.]{\includegraphics[width=0.5\linewidth]{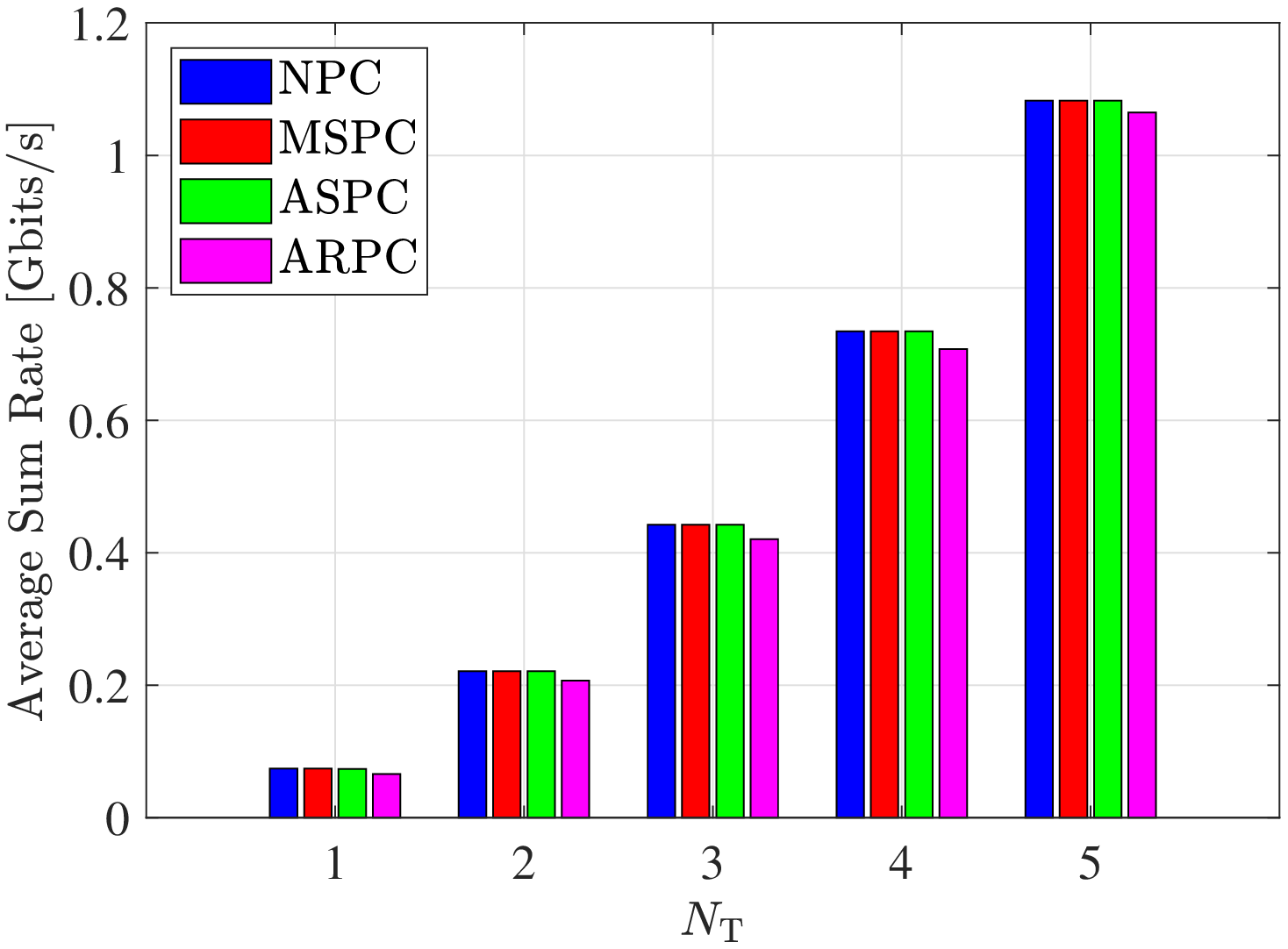}}
	\subfloat[\label{fig_4_3_11b} Power control coefficient.]{\includegraphics[width=0.5\linewidth]{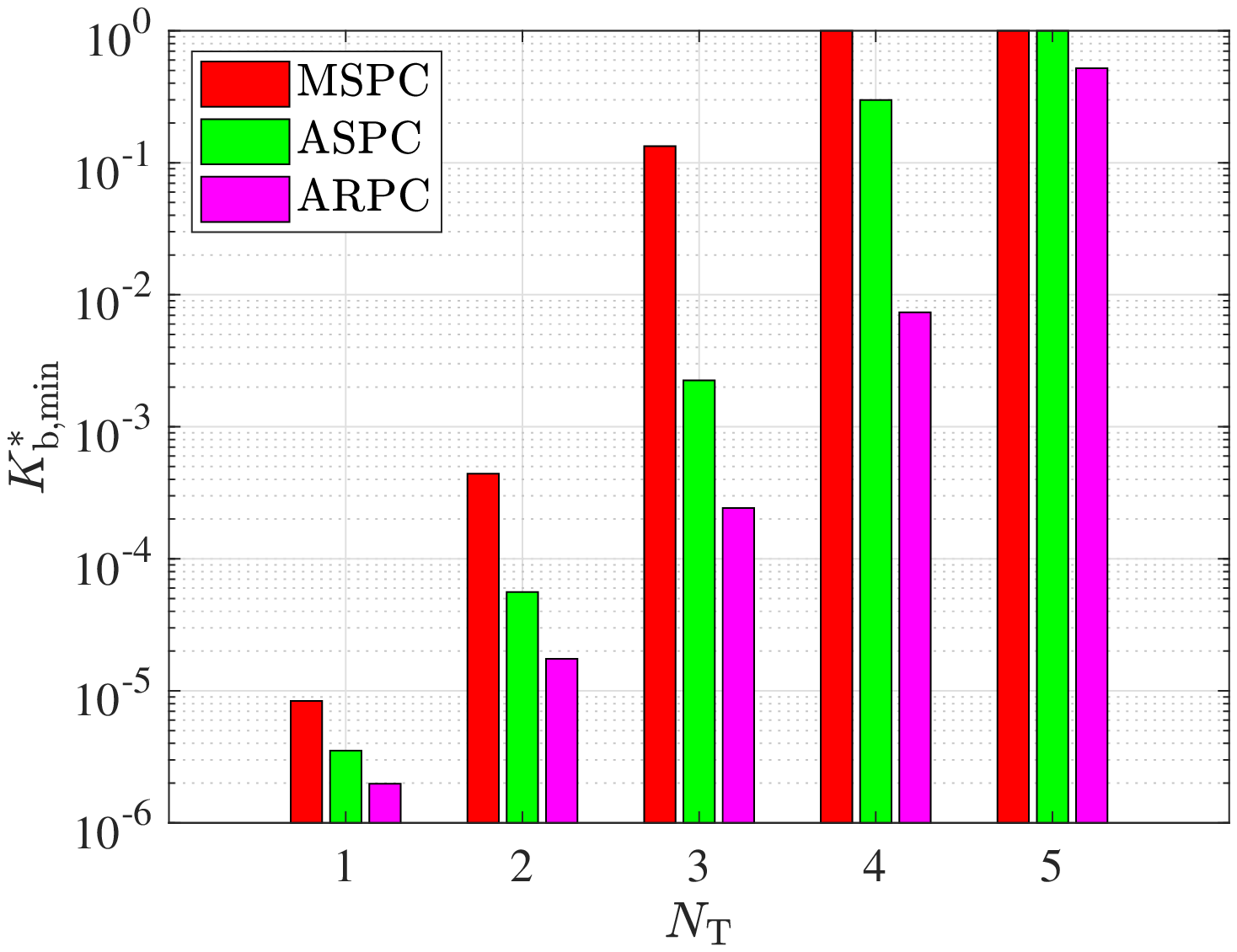}}
	\caption{The average sum rate performance for NPC, MSPC, ASPC and ARPC schemes versus the total number of tiers $N_{\rm{T}}$ for $\lambda=5$ UE/Cell and $B_{\rm{b}}=3B_{\rm{a}}$. The corresponding power control coefficients are shown for comparison.}
	\label{fig_4_3_11}
\end{figure}

\subsubsection{Average Sum Rate Performance}
To measure the end-to-end sum rate performance with power control, the bandwidth allocation ratios for an $N_{\rm{T}}$-tier super cell are computed by applying optimal \ac{CBS} based on Algorithm~\ref{Algorithm2}, per random realization of \acp{UE}.

Fig.~\ref{fig_4_3_11a} demonstrates the average sum rate performance for \ac{NPC}, \ac{MSPC}, \ac{ASPC} and \ac{ARPC} schemes versus $N_{\rm{T}}$ for $\lambda=5$ UE/Cell and $B_{\rm{b}}=3B_{\rm{a}}$. The performance of \ac{NPC} is also shown as a benchmark. It can be observed that \ac{MSPC} and \ac{ASPC} schemes provide the same performance as \ac{NPC} for all values of $N_{\rm{T}}$. They achieve $74$, $221$, $442$, $734$ and $1083$ Mbits/s, for $N_{\rm{T}}=1,2,3,4,5$, respectively. Still, the average sum rate for \ac{ARPC} is slightly lower than the rest of the schemes. The relative performance losses for \ac{ARPC} are around $10\%$, $6\%$, $5\%$, $4\%$ and $2\%$ for $N_{\rm{T}}=1,2,3,4,5$. Note that although the use of \ac{ARPC} leads to high \ac{BBO} probabilities as shown in Fig.~\ref{fig_4_3_8d}, it is of a less impact on the average sum rate performance. This is partly attributed to the optimal \ac{CBS} algorithm which attempts to maximally approach the effective achievable sum rate of the end-to-end system. This could also be anticipated from the function of \ac{ARPC} whereby the backhaul power is tuned to the average sum rate of the access system.

Fig.~\ref{fig_4_3_11b} shows $K_{\rm{b,min}}^*$ associated with each scheme for the same bandwidth of $B_{\rm{b}}=3B_{\rm{a}}$ as used in Fig.~\ref{fig_4_3_11a}. Comparing Fig.~\ref{fig_4_3_11b} with Fig.~\ref{fig_4_3_11a}, it can be observed that remarkable power savings are attained while maintaining the average sum rate performance. For the particular case of $N_{\rm{T}}=3$, by using \ac{MSPC}, the backhaul system operates with only $14\%$ of the full power limit, without affecting the average sum rate. The \ac{PE} can be further improved by employing \ac{ASPC}. Note that both cases of \ac{MSPC} and \ac{ASPC} equally have a zero \ac{BBO} probability according to Fig.~\ref{fig_4_3_9}. For the case of \ac{ARPC}, albeit the improvement in \ac{PE} is achieved at the cost of a slight reduction in the average sum rate performance. From the \ac{PE} perspective, \ac{ASPC} improves upon \ac{MSPC}, and at the same time acquires a \ac{BBO} performance similar to the baseline \ac{NPC} scheme. This suggests that there is an optimum threshold for designing \ac{FPC}-based schemes to strike a tradeoff between the total power minimization and the backhaul bottleneck minimization. The use of \ac{ARPC}, though offering significant power savings, can lead to $50\%$ \ac{BBO} probability regardless of the number of tiers deployed. Such a poor performance disqualifies the impressive \ac{PE} gain that is offered by \ac{ARPC} in terms of the total backhaul power.

\section{Conclusions}\label{section7}
A multi-hop wireless optical backhaul configuration is designed for multi-tier optical attocell networks in a systematic way by means of single-gateway super cells. Resultantly, by expanding the size of super cells, the number of gateways required to supply backhaul connectivity for a network of the same size is progressively reduced, albeit such an advantage comes at a price. The tradeoff between the size and the end-to-end performance is underlined by numerical results, confirming that the number of tiers plays a significant role in determining the network load and, depending on the available bandwidth and power resources, the backhaul rate limit becomes the bottleneck if a large number of tiers is deployed. For efficient use of the backhaul bandwidth, optimal bandwidth scheduling is expounded for both \ac{UBS} and \ac{CBS} policies. Numerical results demonstrate that, under a low \ac{UE} density scenario, both optimal \ac{UBS} and \ac{CBS} algorithms cause the average sum rate performance to almost reach the maximum rate limit as set by access and backhaul systems. They exhibit a superior performance with respect to the baseline equal bandwidth allocation, and the gain is more pronounced when the number of tiers is increased. Under high \ac{UE} density conditions, optimal \ac{CBS} takes the lead relative to optimal \ac{UBS}, and it closely realizes the overall rate limit. Furthermore, a power control framework is established in an attempt to lower the backhaul power using a fixed operating point that does not heavily restrict the network sum rate. The \ac{BBO} probability derived in this paper allows the prediction of the backhaul bottleneck performance. Each of the proposed \ac{FPC} schemes offers a \ac{PE} improvement paired with a certain \ac{BBO} performance. In this respect, \ac{MSPC} achieves a very low \ac{BBO} probability similar to the benchmark \ac{NPC} scheme, while providing considerable power savings especially for fewer number of tiers. By comparison, \ac{ASPC} performs better than \ac{MSPC} in terms of power reduction, and maintains the same \ac{BBO} probability. The use of \ac{ARPC}, though delivering the best \ac{PE} among the candidate schemes, leads to a substantial degradation in \ac{BBO}. From the average sum rate perspective, both \ac{MSPC} and \ac{ASPC} achieve an identical performance compared to \ac{NPC}, and \ac{ARPC} returns a slightly less value because of underestimating the required power.

\appendix
\section*{Proof of Lemma~\ref{Lemma_4_4}}\label{ProofLemma4}
To simplify notation, let $X_u=\mathcal{R}_{\rm{a}}(\gamma_u)$. The expression $\sum_{i\in\mathcal{L}_k}\frac{1}{M_i}\sum_{u\in\mathcal{U}_i}X_u$ is approximated using the \ac{MMSE} criterion. A parameter $\beta$ is introduced to perform the following estimation:
\begin{equation}\label{eq_Proof_Lem_4_4_1}
\underbrace{\sum_{i\in\mathcal{L}_k}\frac{1}{M_i}\sum_{u\in\mathcal{U}_i}X_u}_{Y} \approx \beta\underbrace{\sum_{i\in\mathcal{L}_k}\sum_{u\in\mathcal{U}_i}X_u}_{S}.
\end{equation}
The aim is to determine the optimal estimator $\beta^*$ that minimizes the \ac{MSE} between $Y$ and $\beta S$, where $S = \sum_{u\in\mathcal{U}}X_u$ and $\mathcal{U}$ represents the index set of all the \acp{UE} in the $k$th branch of the network, i.e. $\mathcal{U}=\bigcup_{i\in\mathcal{L}_k}\mathcal{U}_i$. This can be mathematically expressed by:
\begin{mini!}
	{\beta\in\mathbb{R}}{\mathrm{MSE} = \mathbb{E}_X\left[\left(Y-\beta S\right)^2\right]\label{eq_Proof_Lem_4_4_2a}}
	{\label{eq_Proof_Lem_4_4_2}}{}
	\addConstraint{\beta}{>0\label{eq_Proof_Lem_4_4_2b}}
\end{mini!}
The objective \ac{MSE} is expanded as follows:
\begin{equation}\label{eq_Proof_Lem_4_4_3}
\mathrm{MSE} = \mathbb{E}_X\left[Y^2\right]+\beta^2\mathbb{E}\left[S^2\right]-2\beta\mathbb{E}_X\left[YS\right],
\end{equation}
Taking the derivative of the \ac{MSE} with respect to $\beta$ and equating it to zero leads to:
\begin{equation}\label{eq_Proof_Lem_4_4_5}
\frac{d\mathrm{MSE}}{\beta} = 2\beta\mathbb{E}\left[S^2\right]-2\mathbb{E}_X\left[YS\right]=0~\Rightarrow~\beta^* = \frac{\mathbb{E}_X\left[YS\right]}{\mathbb{E}\left[S^2\right]}.
\end{equation}
The expectation $\mathbb{E}_X\left[YS\right]$ in \eqref{eq_Proof_Lem_4_4_5} is expanded as follows:
\begin{subequations}\label{eq_Proof_Lem_4_4_6}
	\begin{align}
	\mathbb{E}_X\left[YS\right] &= \mathbb{E}_X\left[\left(\sum_{i\in\mathcal{L}_k}\frac{1}{M_i}\sum_{u\in\mathcal{U}_i}X_u\right)\left(\sum_{v\in\mathcal{U}}X_v\right)\right],\label{eq_Proof_Lem_4_4_6a} \\
	&= \sum_{i\in\mathcal{L}_k}\frac{1}{M_i}\sum_{u\in\mathcal{U}_i}\sum_{v\in\mathcal{U}}\mathbb{E}\left[X_uX_v\right],\label{eq_Proof_Lem_4_4_6b}
	\end{align}
\end{subequations}
where:
\begin{equation}\label{eq_Proof_Lem_4_4_7}
\mathbb{E}\left[X_uX_v\right] = 
\begin{cases}
\mathbb{E}^2\left[X_u\right] = \bar{\mathcal{R}}_{\rm{a}}^2, & u\neq v \\
\mathbb{E}\left[X_u^2\right] = \sigma_{\mathcal{R}_{\rm{a}}}^2+\bar{\mathcal{R}}_{\rm{a}}^2, & u=v 
\end{cases}
\end{equation}
in which $\bar{\mathcal{R}}_{\rm{a}}$ and $\sigma_{\mathcal{R}_{\rm{a}}}^2$ are given by \eqref{eq_Lem_4_1_1} and \eqref{eq_Lem_4_3_1}, respectively. Therefore:
\begin{subequations}\label{eq_Proof_Lem_4_4_8}
	\begin{align}
	\sum_{u\in\mathcal{U}_i}\sum_{v\in\mathcal{U}}\mathbb{E}\left[X_uX_v\right] &= \sum_{\substack{u\in\mathcal{U}_i,v\in\mathcal{U}\\u\neq v}}\mathbb{E}\left[X_uX_v\right]+\sum_{\substack{u\in\mathcal{U}_i,v\in\mathcal{U}\\u=v}}\mathbb{E}\left[X_uX_v\right],\label{eq_Proof_Lem_4_4_8a} \\
	&= M_i(M-1)\bar{\mathcal{R}}_{\rm{a}}^2+M_i\left(\sigma_{\mathcal{R}_{\rm{a}}}^2+\bar{\mathcal{R}}_{\rm{a}}^2\right),\label{eq_Proof_Lem_4_4_8b} \\
	&= M_i\left(M\bar{\mathcal{R}}_{\rm{a}}^2+\sigma_{\mathcal{R}_{\rm{a}}}^2\right).\label{eq_Proof_Lem_4_4_8c}
	\end{align}
\end{subequations}
By substituting \eqref{eq_Proof_Lem_4_4_8c} into \eqref{eq_Proof_Lem_4_4_6b}, $\mathbb{E}_X\left[YS\right]$ is derived as follows:
\begin{equation}\label{eq_Proof_Lem_4_4_9}
\mathbb{E}_X\left[YS\right] = n_{\rm{BS}}\left(M\bar{\mathcal{R}}_{\rm{a}}^2+\sigma_{\mathcal{R}_{\rm{a}}}^2\right),
\end{equation}
where $n_{\rm{BS}}$ accounts for the number of non-empty attocells. By using \eqref{eq_Proof_Lem_4_4_7}, the expectation $\mathbb{E}\left[S^2\right]$ in \eqref{eq_Proof_Lem_4_4_5} is derived as follows:
\begin{subequations}\label{eq_Proof_Lem_4_4_10}
	\begin{align}
	\mathbb{E}\left[S^2\right] &= \mathbb{E}\left[\left(\sum_{u\in\mathcal{U}}X_u\right)\left(\sum_{v\in\mathcal{U}}X_v\right)\right],\label{eq_Proof_Lem_4_4_10a} \\
	&= \sum_{\substack{u,v\in\mathcal{U}\\u\neq v}}\mathbb{E}\left[X_uX_v\right]+\sum_{\substack{u,v\in\mathcal{U}\\u=v}}\mathbb{E}\left[X_uX_v\right],\label{eq_Proof_Lem_4_4_10b} \\
	&= M\left(M\bar{\mathcal{R}}_{\rm{a}}^2+\sigma_{\mathcal{R}_{\rm{a}}}^2\right).\label{eq_Proof_Lem_4_4_10c}
	\end{align}
\end{subequations}
Finally, by substituting \eqref{eq_Proof_Lem_4_4_9} and \eqref{eq_Proof_Lem_4_4_10c} in \eqref{eq_Proof_Lem_4_4_5}, the optimal estimator reduces to:
\begin{equation}\label{eq_Proof_Lem_4_4_11}
\beta^* = \frac{n_{\rm{BS}}}{M}.
\end{equation}

\bibliographystyle{IEEEtran}

\bibliography{IEEEabrv,References_Full_List}

\begin{IEEEbiography}
	[{\includegraphics[width=1in,height=1.25in,clip,keepaspectratio]{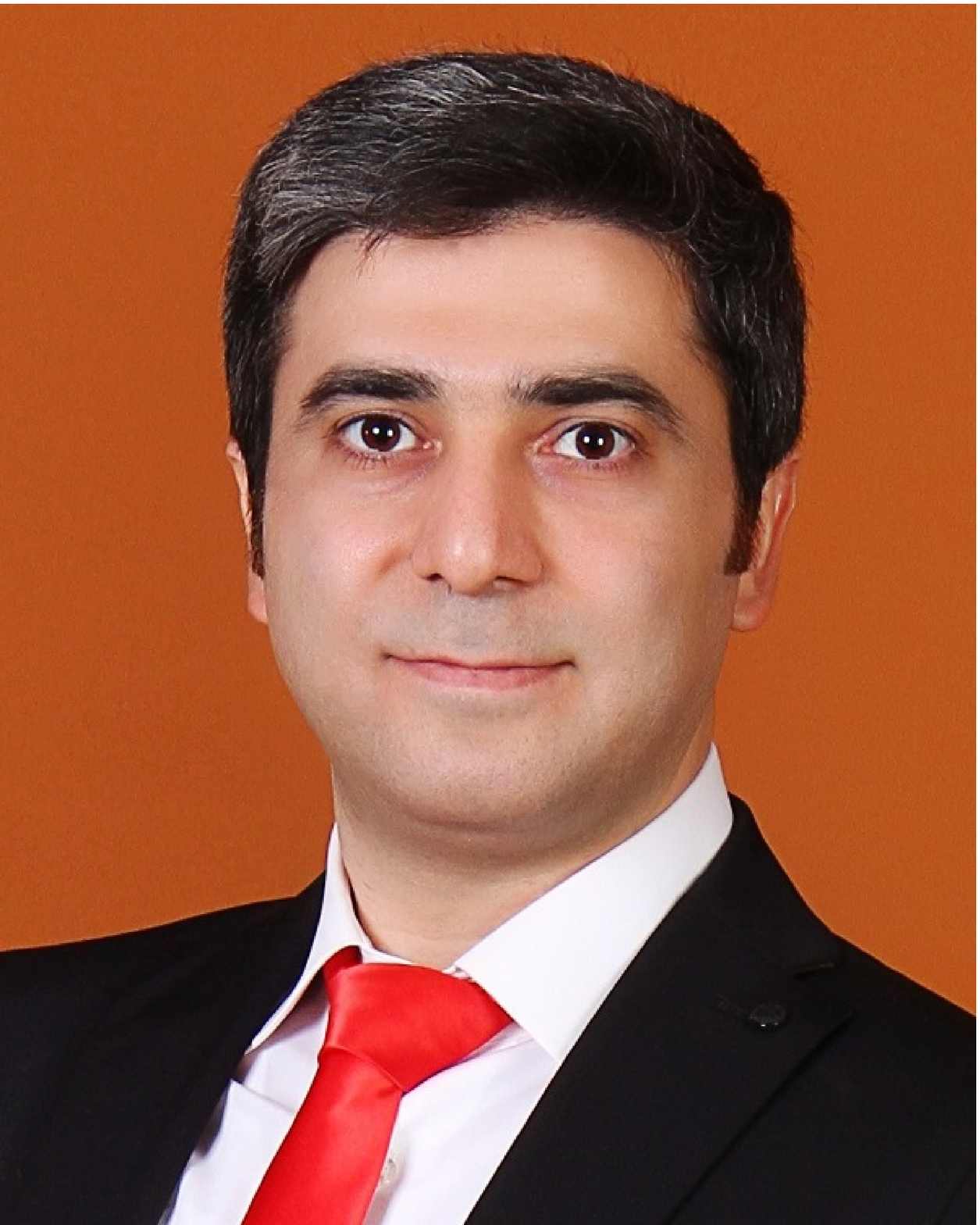}}]{Hossein Kazemi}
	(S'16) received the M.Sc. degree (with a specialty in microelectronic circuits) from Sharif University of Technology, Tehran, Iran, in 2011, the M.Sc. degree (with a focus in communication systems) with honors from \"{O}zye\v{g}in University, Istanbul, Turkey, in 2014, and the Ph.D. degree from the University of Edinburgh, Edinburgh, U.K., in 2019, all in Electrical Engineering. He is currently a postdoctoral research associate at the University of Edinburgh. His research interests mainly include design, analysis and optimization of wireless communication systems and networks.
\end{IEEEbiography}

\begin{IEEEbiography}
	[{\includegraphics[width=1in,height=1.25in,clip,keepaspectratio]{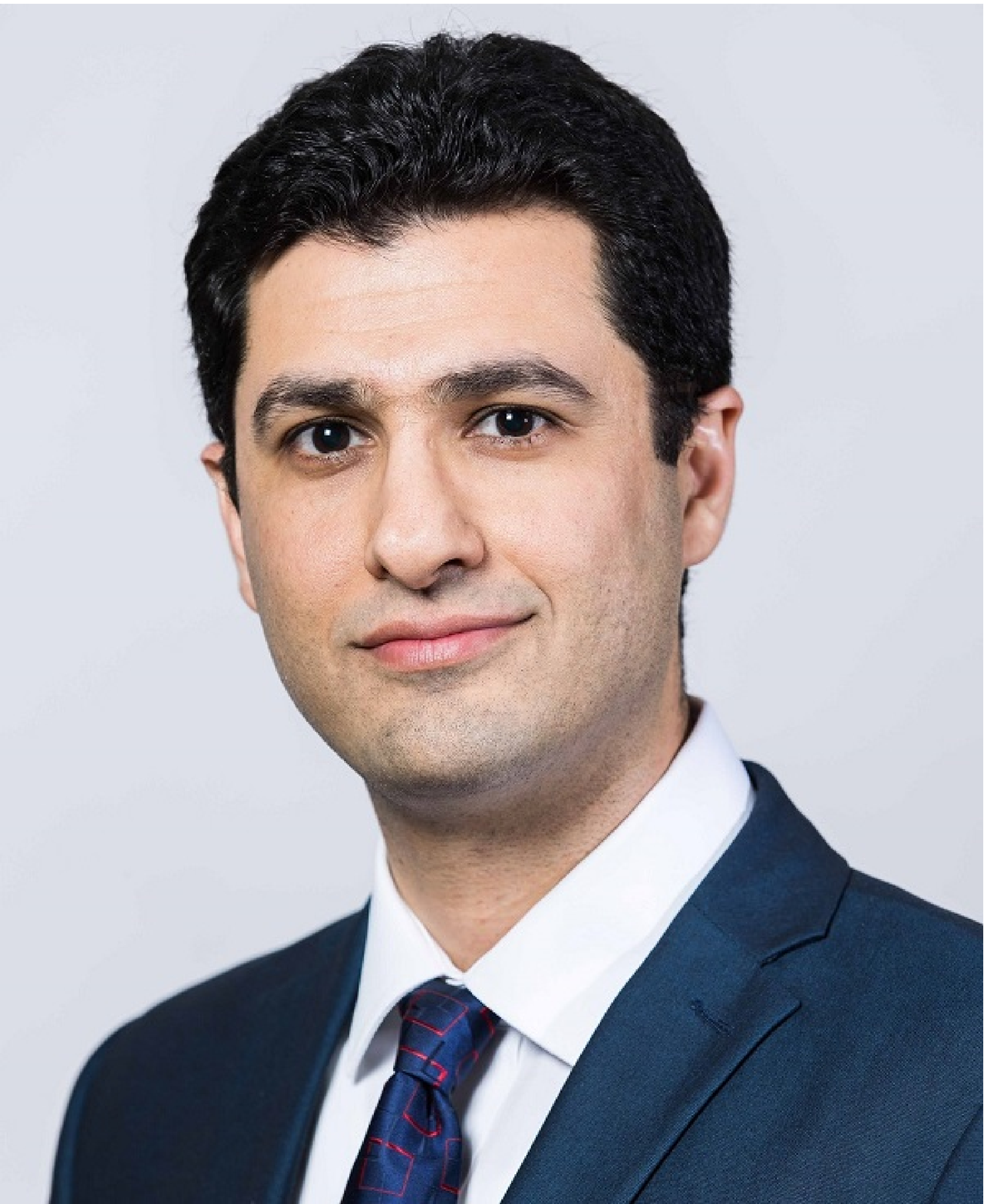}}]{Majid Safari}(S’08-M’11) received his Ph.D. degree in Electrical and Computer Engineering from the University of Waterloo, Canada in 2011. He also received his B.Sc. degree in Electrical and Computer Engineering from the University of Tehran, Iran, in 2003, M.Sc. degree in Electrical Engineering from Sharif University of Technology, Iran, in 2005. He is currently an assistant professor in the Institute for Digital Communications at the University of Edinburgh. Before joining Edinburgh in 2013, he held postdoctoral fellowship at McMaster University, Canada. Dr. Safari is currently an associate editor of IEEE Communication letters. His main research interest is the application of information theory and signal processing in optical communications including fiber-optic communication, free-space optical communication, visible light communication, and quantum communication.
\end{IEEEbiography}

\begin{IEEEbiography}
	[{\includegraphics[width=1in,height=1.25in,clip,keepaspectratio]{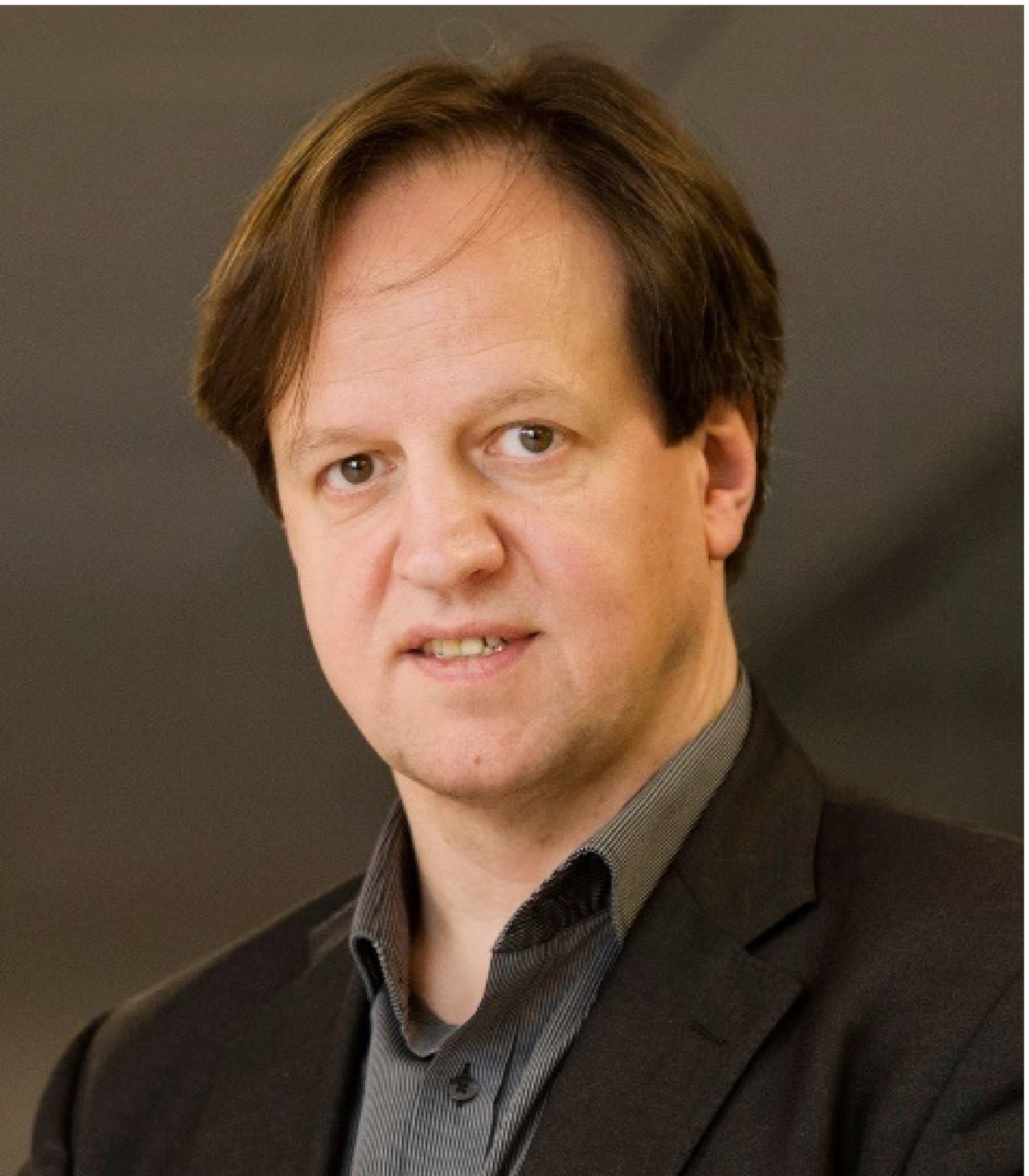}}]{Harald Haas}(S'98-AM'00-M'03-SM'16-F'17) received the Ph.D. degree from the University of Edinburgh in 2001. He currently holds the Chair of Mobile Communications at the University of Edinburgh, and is the Initiator, Co-Founder, and the Chief Scientific Officer of pureLiFi Ltd., and the Director of the LiFi Research and Development Center, the University of Edinburgh. He has authored 400 conference and journal papers, including a paper in Science and co-authored the book \textit{Principles of LED Light Communications Towards Networked Li-Fi} (Cambridge University Press, 2015). His main research interests are in optical wireless communications, hybrid optical wireless and RF communications, spatial modulation, and interference coordination in wireless networks. He first introduced and coined spatial modulation and LiFi. LiFi was listed among the 50 best inventions in \textit{TIME} Magazine 2011. He was an invited speaker at TED Global 2011, and his talk on ``Wireless Data from Every Light Bulb'' has been watched online over 2.4 million times. He gave a second TED Global lecture in 2015 on the use of solar cells as LiFi data detectors and energy harvesters. This has been viewed online over 1.8 million times. He was elected as a fellow of the Royal Society of Edinburgh in 2017. In 2012 and 2017, he was a recipient of the prestigious Established Career Fellowship from the Engineering and Physical Sciences Research Council (EPSRC) within Information and Communications Technology in the U.K. In 2014, he was selected by EPSRC as one of ten Recognising Inspirational Scientists and Engineers (RISE) Leaders in the U.K. He was a co-recipient of the EURASIP Best Paper Award for the \textit{Journal on Wireless Communications and Networking} in 2015, and co-recipient of the Jack Neubauer Memorial Award of the IEEE Vehicular Technology Society. In 2016, he received the Outstanding Achievement Award from the International Solid State Lighting Alliance. He was a co-recipient of recent best paper awards at VTC-Fall, 2013, VTC-Spring 2015, ICC 2016, and ICC 2017. He is an Editor of the IEEE \textsc{Transactions on Communications} and the IEEE \textsc{Journal of Lightwave Technologies}.
\end{IEEEbiography}

\end{document}